\def\shownotes{1}  
\def\full{1} 
\def\spicture{1}
\newtheorem{fact}{Fact}[section]
\newtheorem{theorem}{Theorem}[section]
\newtheorem{lemma}[theorem]{Lemma}
\newtheorem{proposition}[theorem]{Proposition}
\newtheorem{corollary}[theorem]{Corollary}
\newtheorem{definition}[theorem]{Definition}
\newcommand{\E}{\mathrm{E}}
\newcommand{\uzeta}{\underline{\zeta}}
\newcommand{\ozeta}{\overline{\zeta}}
\newcommand{\rone}{r^{(1)}}
\newcommand{\rtwo}{r^{(2)}}
\newcommand{\hrone}{\hat r^{(1)}}
\newcommand{\hrtwo}{\hat r^{(2)}}
\newcommand{\hmu}{\hat{\mathcal U}}
\newcommand{\hmw}{\hat{\mathcal W}}
\providecommand{\ie}{\emph{i.e.,} }
\providecommand{\eg}{\emph{e.g.,} }
\providecommand{\etc}{\emph{etc.}}      
\providecommand{\mypara}[1]{\smallskip\noindent\emph{#1} }
\providecommand{\myparab}[1]{\smallskip\noindent\textbf{#1} }
\newcommand{\authnote}[2]{{ $\ll$\textsf{\footnotesize #1 notes: #2}$\gg$}}
\newcommand{\authnote}[2]{}
\newcommand{\Znote}[1]{{\authnote{Zhenming}{#1}}}
\title{From Black-Scholes to Online Learning: Dynamic Hedging under Adversarial Environments}
\author{Henry Lam \\Boston University \\{\tt khlam@math.bu.edu} \and \quad Zhenming Liu\\ Princeton University \\{\tt zhenming@cs.princeton.edu}}
\begin{document}

\maketitle

\setcounter{page}{0}

\begin{abstract}
We consider a non-stochastic online learning approach to price financial options by modeling the market dynamic as a repeated game between the nature (adversary) and the investor. We demonstrate that such framework yields analogous structure as the Black-Scholes model, the widely popular option pricing model in stochastic finance, for both European and American options with convex payoffs. In the case of non-convex options, we construct approximate pricing algorithms, and demonstrate that their efficiency can be analyzed through the introduction of an artificial probability measure, in parallel to the so-called risk-neutral measure in the finance literature, even though our framework is completely adversarial. Continuous-time convergence results and extensions to incorporate price jumps are also presented.

\end{abstract}

\newpage
\section{Introduction}\label{sec:intro}
In the financial market, an \emph{option} is a contract that gives the holder the right, but not the obligation, to buy or sell an underlying asset or instrument~\cite{hull2009options}. Consider the ``vanilla'' European call option as an example. This contract is controlled by three parameters: the underlying asset $S$, say a stock traded at the New York Stock Exchange, the strike price $K$, and the time to expiration $T$. A holder of a European call option has the right to purchase the stock at time $T$ at the prefixed strike price $K$, regardless of the price of the stock at that time. Suppose that the stock price at time $T$, say $S(T)$ (we shall abuse notation to denote $S(t)$ as the price and $S$ as the stock for convenience), exceeds $K$, the holder will exert his/her right, and \emph{exercise} the option. In the opposite event that $S(T)$ submerges below $K$, a rational holder will initiate no action. Assuming the market is liquid, since the holder who decides to exercise the option can sell the stock immediately at market price, the payoff of this option at time $T$ can be summarized as $\max\{S(T) - K, 0\}$.

In general, any single-instrument European option has a payoff function $g(S(T))$ on an underlying asset $S$, depending on the terms of the contract. An American option, on the other hand, gives the holder the right to exercise the option at \emph{any} time before maturity.


The study of the fair prices of different options has not only been a core area in financial economics, but is also important for market practitioners, given the gigantic volume of options being traded at any trading day \cite{optionvolume}. The area bloomed after the groundbreaking discovery of Black and Scholes~\cite{BS73}, which was later expanded by Merton~\cite{Merton73}. In order to find the fair price of an option, their main idea is to \emph{hedge} against the movement of the underlying asset. Consider for example a European call option. If the purchaser of the option, at time 0, chooses to concurrently sell a portfolio that consists of the underlying asset and cash, in such a way that the portfolio's value at time $T$ is exactly $\max\{S(T)-K,0\}$, then the portfolio must cost the same as the European option; otherwise the option holder can construct an \emph{arbitrage} that generates positive profit at no downside risk, a scenario that is economically prohibited. In this way the fair price of the option is exactly the value of the \emph{replicating} portfolio. 
In the original work in~\cite{BS73}, it was shown that a unique price can be determined using this no-arbitrage principle if the underlying asset's price follows a geometric Brownian motion. Other variety of models may or may not succumb to a unique price. 
See for example~\cite{cont2012financial,stochasticfinance} for comprehensive reviews of the literature.

In this paper we shall consider a non-stochastic online learning framework to price and hedge options: We model option pricing as a repeated game between the trader of the option and the ``adversary" nature that controls the movement of stock price. The motivation is that probabilistic description is not always accurate or easy to model in the financial market (as can be seen by the numerous proposed stochastic models; see, \eg~\cite{bertsimas2001hedging}), which makes it intriguing to study the structure and computational capability if one uses a more conservative, namely adversarial, viewpoint. Broadly speaking, our goal in this paper is to build a systematic framework in the non-stochastic context that allows one to answer fundamental questions that have been around in the stochastic finance community. Two such questions are:

%

\emph{Q1. What is the feature of a non-stochastic model that allows one to relate to the traditional Black-Scholes model, the ``Holy Grail" in option pricing?}

\emph{Q2. Can we design efficient pricing algorithms under the non-stochastic framework and what is the
explicit hedging strategy implied by such algorithms?}



We emphasize that in this paper we attempt to address the above two questions for general options, \ie both European and American, with either convex or non-convex payoffs. We acknowledge that there exist a good number of related works previously, especially in the online learning literature. However, they all appear to be confined to somewhat limited settings. 
For example, the work of Abernethy et~al.~\cite{AFW12} and Abernethy et~al.~\cite{AbernethyBFW13} provide convergence results to Black-Scholes and hedging strategies, but only for convex European options. DeMarzo et~al.~\cite{DKM06} and more recently Gofer and Mansour~\cite{GM11} focus on the adaptive adversary setting and propose regret minimization algorithms; however, their technique is also confined to convex payoff functions. The same holds for the line of work in~\cite{Roorda2005,Ber02a2,ber05b,Kolokoltsov11}, which is also limited to analysis on convex options. Chen et~al.~\cite{Chen10} and Bandi and Bertsimas~\cite{bandi2012tractable}, on the other hand, consider general options (both European and American), but they model the nature as oblivious. In view of these, we aim to offer a non-stochastic model that allows analysis for a much wider range of settings than previously considered. 


\vspace{-.2cm}
\myparab{Our work and contribution.} We consider a model
in which the stock price movement at each step is restricted to a bounded deterministic set. While this model can be simpler than those proposed in some of the previous works, we choose it for two reasons. First, this model, which is the simplest non-stochastic model one can imagine, can be shown to exhibit, in certain cases, similar behavior as the so-called binomial tree model (a discrete version of Black-Scholes). Secondly, this simple model allows us to analyze options that are structurally difficult (such as non-convexity, and American-type), and to propose efficient pricing algorithms. 
%

Indeed, our main message in this paper is that, \emph{for a great generality of options, such a non-stochastic framework offers both structural and algorithmic similarity to the classical Black-Scholes framework, but requires analytical tools that appear to be new in the online learning literature}. This is explained by our two major contributions as follows:

1. We analyze an approximation algorithm to robustly price any options whose payoff function is monotonic and Lipschitz continuous. While the algorithm itself is deterministic (because of the non-stochastic problem nature), its analysis involves the construction of an ``artificial" probability measure, which dramatically tightens the bounds on its error estimates. This ``artificial" measure, resembling the so-called risk-neutral measure~\cite{stochasticfinance} commonly used to derive the Black-Scholes formula, is the first of its kind to analyze any deterministic algorithm. This contribution appears in Section~\ref{sec:non-convex}.

2. We provide algorithmic and structural results to American option pricing, which can be thought as an online game in which the adversary is allowed to withdraw. This withdrawal feature is new in the online learning literature, and our formulation here gives the first natural problem under such setting as well as the first non-trivial analysis. This contribution appears in Section~\ref{sec:genshort}.

Besides these two main contributions, in the Appendix we shall also show several extensions, including: (1) We show that a continuous-time limit of our model converges to the Black-Scholes model for convex options, and non-convergence to Black-Scholes for non-convex options. For the latter case, the limiting price bound is the optimal value of a continuous-time control problem with volatility constraint (Appendix~\ref{sec:control}).  (2) We adapt our model and algorithmic result to incorporate rare jumps in the financial market, which is important because non-smooth price movement is ubiquitous in financial markets, \eg it can model ``volatility smile"~\cite{kou2002jump} (Appendix~\ref{sec:jump}). 

\ifnum\full=0
\myparab{Organization.} We describe our model in Section~\ref{sec:model}. Section~\ref{sec:btree} analyzes the equilibrium of our minimax game and gives convergence results for convex payoff functions. Section~\ref{sec:non-convex} presents algorithm for pricing options with general payoff functions.
Section~\ref{sec:genshort} addresses various generalizations, including a convergence result for options with general payoff, extension of our results to American options, and generalization of our model to incorporate price jumps. The full paper also presents hardness results as well as an alternative way to prove Black-Scholes convergence from a control formulation discussed in Section~\ref{sec:genshort}.
\fi 

\section{Our model}\label{sec:model}
We now describe our model, which has a similar spirit to~\cite{ber05b,AFW12}. Throughout the paper we shall focus on a discrete-time setting, \ie the trader only has the chance to trade at discrete time points. Specifically, consider an option that expires $T$ days from now. We denote time 0 as the time when a transaction of the option occurs, \ie a trader either buys or sells the option. We assume that, before the option expires, the trader has in total $\tau$ time points that allow trade execution. Let these time points be $t = \frac T{\tau}, \frac{2T}{\tau}, ...., T$. Notice that as soon as $\tau$ is decided, the value $T$ is not a parameter in the game (but will reappear when we consider continuous-time limit later on). Throughout our analysis we assume no transaction costs and the market is liquid (\ie the trader can always buy or sell any volume of the asset at the market price at the $\tau$ time points).


We shall model the dynamic of the financial market from time 0 to $T$ as a $\tau$-round two-player game between the trader and nature. Consider an option on the underlying asset $S$, with initial price $S_0$ and the price at the $t$-th round denoted as $S_{t} (=S(tT/\tau))$. For each round $t$, where $1\leq t\leq\tau$, the adversary has complete freedom to choose the return of $S$, given by $R_t\triangleq S_{t}/S_{{t-1}}-1$, within a pre-specified \emph{uncertainty set} $\mathcal U_t$ (we suppress its dependence on $\tau$ for notational convenience).
On the other hand, at the beginning of each round, the trader can choose to long\footnote{We adopt the terminology in finance: to ``long" means to buy, and to ``short" means to sell a product.} $\Delta_t$ dollars' worth of the asset (a negative $\Delta_t$ will imply a short position). At this point we do not impose any capital capacity on the trader, \ie $\Delta_t$ can be as large or small as possible; however, we shall soon see from our analysis that the optimal $\Delta_t$ is bounded and can be explicitly found.

Let us first describe what should be the upper and lower bounds of the option price under no-arbitrage assumption in our model.
To better illustrate our ideas, all of our analysis will assume the risk-free interest rate is $0$. But all our results can be adapted to non-zero interest rates.

\mypara{Upper bound.} Suppose the trader shorts the option at time 0. To hedge his/her position, at each round of the game, the trader decides to buy $\Delta_t$ dollars' worth of the underlying asset. The cumulated payoff to the trader at time $T$ is then given by $\sum_{t = 1}^{\tau}(R_t\Delta_t)-g\left(S_0\prod_{t = 1}^{\tau}(1+R_t)\right)$, plus the option price that he/she gets from selling the option at time 0. Since a rational trader will strive to maximize gain, against an adversary that strives to minimize so, the outcome of this game to the trader will be $\max_{\Delta_t,t\in[\tau]}\min_{R_t \in \mathcal U_t, t \in [\tau]}\sum_{t = 1}^{\tau}(R_t\Delta_t)-g\left(S_0\prod_{t = 1}^{\tau}(1+R_t)\right)$, again plus the initial option price. Now, if the option price is strictly higher than $-\max_{\Delta_t,t\in[\tau]}\min_{R_t \in \mathcal U_t, t \in [\tau]}$ $\sum_{t = 1}^{\tau}(R_t\Delta_t)-g\left(S_0\prod_{t = 1}^{\tau}(1+R_t)\right)$, then shorting the option and carrying out the optimal hedging strategy gives the trader a positive gain at time 0 with no risk, \ie an arbitrage opportunity arises. In other words, the option price at time 0 cannot be higher than
{\small
\begin{equation}\label{eqn:upper}
\min_{\Delta_t, t \in [\tau]}\max_{R_t \in \mathcal U_t, t \in [\tau]}g\left(S_0\prod_{t = 1}^{\tau}(1+R_t)\right)- \sum_{t = 1}^{\tau}(R_t\Delta_t).
\end{equation}
}

\mypara{Lower bound.} The no-arbitrage lower bound can be obtained by reversing the action of the trader from shorting to longing the option at time 0.
Suppose the trader shorts $\Delta_t$ dollars' worth of the underlying asset at the $t$-th round, and strives to maximize $g\left(S_0\prod_{t = 1}^{\tau}(1+R_t)\right)- \sum_{t = 1}^{\tau}(R_t\Delta_t)$. It can be argued similarly that the option price cannot be lower than $
\max_{\Delta_t, t \in [\tau]}\min_{R_t \in \mathcal U_t, t \in [\tau]}g\left(S_0\prod_{t = 1}^{\tau}(1+R_t)\right)- \sum_{t = 1}^{\tau}(R_t\Delta_t)$,
or otherwise arbitrage opportunity arises.



\myparab{Interpretation of bounds.} 
\ifnum\full=1
 Let $u$ be the upper bound from (\ref{eqn:upper}) and $\ell$ be the corresponding lower bound.  
\fi
\ifnum \full=0
Let $u$ and $\ell$ be the upper and lower bounds from our model as described above.
\fi
Their interpretations are the following: (1) When an option price does not fall into $[\ell, u]$, then there exists a trading strategy so that under \emph{any} adversarial scenarios the overall payoff of the strategy is strictly positive (arbitrage exists). Economically speaking, this is a ``wrong" price of the option. (2) When an option price is in $[\ell, u]$, then for any trading strategy, there exists an adversary so that the payoff is non-positive (arbitrage does not exist). This price can be the ``fair" price of the option.

\myparab{Oracle model for payoff functions.} We assume we have oracle access to the payoff function $g(\cdot)$ (which is possibly non-convex). Also, when saying $g(\cdot)$ is Lipschitz continuous, we mean that for any real values $x,y$, we have $|g(y) - g(x)| \leq L|y  - x|$ for some constant $L$.

Note that we leave the choice of the uncertainty set $\mathcal U_t$ here to depend on $t$.
While some of the results later (\eg continuous-time limit) allows for time-dependent $\mathcal U_t$, for simplicity and to highlight the connection with Black-Scholes, much of this paper will focus on $\mathcal U_t$ that do not vary with time, in which case we will merely denote it as $\mathcal U$.

\section{Explicit characterization of equilibrium for the hedging game}\label{sec:btree}
\ifnum\full=0
\fi

In this section we will analyze the equilibrium between the trader and the nature in every single round, which paves the way to our main contributions in the coming sections. 




\ifnum\spicture=1
\tikzstyle{bag} = [text width=2em, text centered]
\tikzstyle{end} = []
\fi

\myparab{A general characterization of equilibrium.}
We first state our characterization results for a single-round game under very general conditions on the uncertainty set and the payoff function.
We will concentrate on the upper bound (\ref{eqn:upper}) in our analysis;
the lower bound can be obtained easily merely by replacing $g$ by $-g$. 

%
%

Suppose $\tau = 1$, and let $\mathcal U$ be a (Borel) measurable set. Our goal is to find the optimal solution of $\min_\Delta\max_{R\in\mathcal U}g(S_0(1+R))-R\Delta$. The following result provides an optimality characterization for any payoff function $g$ and any uncertainty set $\mathcal U$:

\begin{proposition} Let $\tau=1$ and $S_0$ be the initial price. 
Consider a bounded uncertainty set $\mathcal U$ and a continuous payoff function $g$. We have
\begin{equation}\label{eqn:char}
\min_{\Delta}\max_{R \in \mathcal U}g(S_0(1+R)) - R \Delta =  \max_{\substack{P_f \in \mathcal P(\mathcal U),\\ \E_{R\leftarrow P_f}[R] = 0}}\E_{P_f}[g(S_0(1+R))],
\vspace{-.1cm}
\end{equation}
where $\mathcal{P}(\mathcal{U})$ denotes the set of all probability measures $P_f$ that have support on $\mathcal{U}$. The maximization problem in the right hand side above is over all such probability measures that satisfy $\E_{P_f}[R]=0$. \label{prop:dual}
\end{proposition}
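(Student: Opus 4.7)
The identity has the shape of strong LP duality: the left-hand side is a minimax over $(\Delta,R)$, and the right-hand side is a moment problem (maximize an expectation of $g$ subject to a single linear moment constraint $\E_{P_f}[R]=0$). My plan is therefore to recognize the right-hand side as the Lagrangian dual and establish strong duality via Sion's minimax theorem. The easy direction (weak duality) will be a one-line integration; the nontrivial direction will come from swapping the min and max once I have set up the correct topological assumptions.

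First I would prove the weak inequality RHS $\le$ LHS. For arbitrary $\Delta\in\mathbb{R}$ and any $P_f\in\mathcal{P}(\mathcal U)$ with $\E_{P_f}[R]=0$, integrate the pointwise bound $g(S_0(1+R))-R\Delta \le \max_{R'\in\mathcal U}\{g(S_0(1+R'))-R'\Delta\}$ against $P_f$; since $\E_{P_f}[R]=0$, the $\Delta$-term on the left drops out, giving $\E_{P_f}[g(S_0(1+R))] \le \max_{R'\in\mathcal U}\{g(S_0(1+R'))-R'\Delta\}$. Taking $\sup$ over feasible $P_f$ on the left and $\inf$ over $\Delta$ on the right yields the weak direction.

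For the reverse inequality, I would lift the inner maximization to probability measures: since $R\mapsto g(S_0(1+R))-R\Delta$ is linear in the measure, $\max_{R\in\mathcal U}\{g(S_0(1+R))-R\Delta\}=\sup_{P_f\in\mathcal P(\mathcal U)}\E_{P_f}[g(S_0(1+R))-R\Delta]$, so the LHS equals $\inf_\Delta\sup_{P_f}\E_{P_f}[g(S_0(1+R))-R\Delta]$. The integrand is linear (hence convex) in $\Delta$ and linear (hence concave) in $P_f$; $\mathcal P(\mathcal U)$ is convex and weak-* compact because $\mathcal U$ is bounded (I would assume $\mathcal U$ closed, which is harmless); and $P_f\mapsto\E_{P_f}[\,\cdot\,]$ is weak-* continuous because $g$ is continuous on the compact set $S_0(1+\mathcal U)$. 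To satisfy the convexity/compactness hypothesis of Sion's theorem on the $\Delta$-side, I would restrict $\Delta$ to a large interval $[-M,M]$ and check that, for $M$ chosen larger than $2\sup_{R\in\mathcal U}|g(S_0(1+R))|/\min(\sup\mathcal U,\,-\inf\mathcal U)$, the restriction does not change the value (any $|\Delta|>M$ is strictly dominated by $\Delta=0$). Sion's theorem then legitimizes exchanging $\inf_\Delta$ and $\sup_{P_f}$, after which the inner minimization $\inf_{\Delta}\{\E_{P_f}[g(S_0(1+R))]-\Delta\,\E_{P_f}[R]\}$ equals $-\infty$ unless $\E_{P_f}[R]=0$, and equals $\E_{P_f}[g(S_0(1+R))]$ when it does, producing the RHS exactly.

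The step I expect to be the main obstacle is justifying the compactness truncation on $\Delta$ cleanly, especially in the degenerate case where $0\notin\mathrm{conv}(\mathcal U)$. In that case the feasible set on the right-hand side is empty (so the RHS is $-\infty$ by convention); I would have to verify that the LHS is simultaneously $-\infty$, which it is, since if (say) $\mathcal U\subset(0,\infty)$ then driving $\Delta\to+\infty$ sends $\max_{R\in\mathcal U}\{g(S_0(1+R))-R\Delta\}\to-\infty$ because $R$ is bounded below by a positive constant and $g$ is bounded on $S_0(1+\mathcal U)$. Handling this boundary case explicitly, and checking the weak-* continuity hypothesis carefully, are the only places where mild regularity on $\mathcal U$ (closedness/compactness) is implicitly being used.
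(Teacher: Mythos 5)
Your overall route---weak duality by integrating against a risk-neutral measure, then strong duality by lifting the adversary to $\mathcal P(\mathcal U)$ and invoking a minimax theorem---is a legitimate alternative to the paper's argument, which instead writes the primal as a (semi-infinite) linear program and passes directly to its dual: finite LP duality when $\mathcal U$ is discrete, and Lagrangian duality in the sense of Luenberger for general $\mathcal U$, reading off the constraints $w(\mathcal U)=1$ and $\int_{\mathcal U} r\,dw(r)=0$ from the dual function. The paper's route produces the risk-neutral measure as the dual variable with no minimax swap at all; yours buys the same conclusion from a general minimax theorem at the cost of the topological bookkeeping you describe.

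However, your final step has a genuine gap. After truncating to $\Delta\in[-M,M]$ and applying Sion, what you actually obtain is
$\sup_{P_f}\min_{\Delta\in[-M,M]}\{\E_{P_f}[g(S_0(1+R))]-\Delta\,\E_{P_f}[R]\}=\sup_{P_f}\{\E_{P_f}[g(S_0(1+R))]-M|\E_{P_f}[R]|\}$;
the inner minimum is over the \emph{compact} interval, so it does not equal $-\infty$ for infeasible $P_f$---it is a finite penalized objective. The penalized supremum is trivially $\ge$ the constrained supremum (the right-hand side of \eqref{eqn:char}), but that is the same direction you already have from weak duality; to close the argument you need the reverse inequality, i.e.\ that the penalty is exact for large $M$, and this does not follow from anything you wrote. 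It can be repaired: since the left-hand side is independent of $M$ once $M\ge M_0$, a maximizing $P_M$ must satisfy $M|\E_{P_M}[R]|=O(1)$, hence $\E_{P_M}[R]\to0$, and weak-* compactness of $\mathcal P(\mathcal U)$ yields a feasible cluster point whose objective value is at least the left-hand side. The cleaner fix is to note that Sion's theorem requires only \emph{one} of the two sets to be compact: take the compact side to be $\mathcal P(\mathcal U)$ (weak-* compact once $\mathcal U$ is assumed closed, as you do) and let $\Delta$ range over all of $\mathbb{R}$. Then the post-swap inner infimum genuinely is $-\infty$ off the feasible set, and the truncation---together with the gap it creates---disappears entirely.
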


From now on we shall call any $P_f$ that satisfies $\E_{P_f}[R]=0$ a risk-neutral probability measure, since it enforces zero expected return (such terminology has been used widely in the no-arbitrage theory in stochastic finance~\cite{stochasticfinance}). Assuming $\mathcal U\neq\{0\}$, the optimal value of \eqref{eqn:char} is finite only when $\mathcal U$ contains at least a point larger than 0 and a point smaller than 0, \eg when $\mathcal U$ is an interval that covers 0; otherwise risk-neutral measure cannot be constructed. 

The proof of Proposition~\ref{prop:dual} relies on a primal-dual argument applied to the following LP, which is equivalent to the minimax problem $\min_\Delta\max_{R\in\mathcal U}g(S_0(1+R))-R\Delta$:
\begin{equation}
\begin{array}{ll}
\mbox{min}&p\\
\text{s.t.}&g(S_0(1+r))-r\Delta\leq p \text{\ \ for any $r\in\mathcal U$}
\end{array}
\end{equation}
where the decision variables are $p$ and $\Delta$. See Appendix~\ref{asec:dual} for the detailed proof. Despite its simplicity, we have not found this proof in previous works; \cite{ber05b, Kolokoltsov11} have obtained the same result as Proposition~\ref{prop:dual} using other geometric arguments.

\myparab{Convex payoff function}
For the special case where the payoff function $g(\cdot)$ is convex and the uncertainty set is an interval, \ie $\mathcal U=[-\uzeta,\ozeta]$ with $\uzeta,\ozeta>0$, we are able to further characterize the solution to the right hand side of (\ref{eqn:char}) as the unique risk-neutral probability distribution with masses concentrated at $-\uzeta$ and $\ozeta$. We can generalize such result to multi-round games via recursion, which turns out to coincide with the binomial tree model (in the case that the uncertainty sets $\mathcal U_t$ do not vary with time; see Appendix~\ref{sec:warmup} for details of binomial tree). We can also prove convergence to the Black-Scholes price in the limit for a class of non-uniform $\mathcal U_t$'s (as time step size shrinks together with the size of $\mathcal U_t$'s suitably; see Appendix~\ref{sec:convexmultiround}). 
In Appendix~\ref{asec:convex} we also discuss the computational complexity of the corresponding hedging strategies.

All the above characterizations however do not extend to non-convex options. In the next section, we will construct and analyze a pricing algorithm for this scenario. In Appendix~\ref{sec:control}, we will show an additional result that the corresponding continuous-time limit is not the Black-Scholes price (\ie not driven by geometric Brownian motion), but is rather given by the solution to a continuous-time control problem with volatility constraint.

\section{Algorithms for non-convex payoffs}\label{sec:non-convex}
This section presents an approximation algorithm for computing the price upper bounds for general payoff functions under the oracle model; the lower bound's algorithm and its analysis is similar and so is omitted here. Throughout this section, we will assume the size of the uncertainty set $\mathcal U=[-\uzeta,\ozeta]$ is uniform across time steps and $\uzeta,\ozeta>0$ are polynomials in $\tau$;
\ifnum \full=1
at the end of this section
\fi
\ifnum \full=0
in the full paper
\fi
we will discuss the non-uniform uncertainty set case. We also assume that the payoff function $g$ is Lipschitz continuous and monotonically non-decreasing, and without loss of generality that $g(0)=0$.


\myparab{Our algorithm.} We use a fairly natural algorithm to approximate the upper bound: we discretize the uncertainty set $\mathcal U$, \ie instead of allowing the adversary to choose an arbitrary value from $\mathcal U$, we only allow it to choose from the discrete set $\hmu \triangleq \{-\uzeta, -\uzeta + \epsilon, -\uzeta + 2 \epsilon, ...., \ozeta\}$,  where $\epsilon$ is a parameter of our algorithm. We call this a \emph{multinomial tree approximation}.

To compute the price upper bound for the multinomial tree, one can use a dynamic program based on the following recursion. Specifically, let $\hat g_t(x)$ be the approximate price upper bound of the option at the $t$-th round. Also, let $b = (\ozeta +\uzeta)/\epsilon$ be the total number of choices an adversary has for each move, and the choices of the return are $r_i = -\uzeta + i \epsilon$ for $i\in[b]$. We compute $\hat g_t(x)$ by finding the optimal solution of the LP:
\begin{equation}\label{eqn:approxrec}
\begin{array}{ll}
\mbox{minimize}&p\\
\text{subject to}&\hat g_{t + 1}(x(1+r_i))-r_i\Delta\leq p \text{\ \ for $i\in[b]$}
\end{array}
\end{equation}
The quantity $\hat g_0(x)$ is then the approximate option price at time 0. Such discretization scheme and backward induction is related to the so-called stochastic mesh method in the area of financial engineering~\cite{broadie2004stochastic}, but the latter uses Monte Carlo and importance sampling to calculate the probability weights at each step instead of carrying out an LP.

\myparab{Analysis.} We shall briefly address some concerns regarding the above scheme. First, since at each step the adversary may choose multiple ways to move the price, it could be worrying that the number of possible states under consideration is exponential in $\tau$. However, so long as $\epsilon$ remains uniform over the rounds, the number of states we need to keep track of at the $t$-th round is
$t(\uzeta+\ozeta)/\epsilon$ (and thus linear in $t$). 
However, note that at the final round, the price $S_{\tau}$ could be any value in an interval of \emph{exponential length} (\ie in the range $[(1-\uzeta)^{\tau}S_0, (1+\ozeta)^{\tau}S_0]$), while our multinomial tree algorithm only ``samples'' polynomial number of points from the function $g(x)$. This implies that on average the distance between any two sampled points is exponential, which also implies that the overall error due to discretization can grow exponentially.

Nevertheless, here is a surprising feature of our algorithm: while a large portion of internal states in the multinomial tree can have additive errors being $\gg \delta S_0$, the aggregate additive error for the value function in the recursion can be shown to have order $\delta S_0$, with the algorithmic running time being polynomial in $\frac{1}{\delta}$ and $\tau$. 
This feature is a consequence of the probabilistic interpretation in our dual formulation, which we will further elaborate in the proof to be presented momentarily.


\ifnum\full=0
This idea resembles the stochastic mesh method in pricing options, most notably American-type, when the underlying asset's price movement is assumed to be stochastic. Assuming a well-defined risk-neutral measure, the stochastic mesh method calculates each $g_t(x)$ through backward induction via Monte Carlo simulation on the return $R_t$, with the use of importance sampling weights. In a sense, our algorithm replaces this importance sampling with an LP in obtaining the weights (the full paper will explain in more detail).

\fi

\ifnum \full=0
Below is the main result in this section (see the proof in the full paper).

\begin{theorem} Let $\delta$ be an arbitrary constant. Consider using the multinomial tree approximation algorithm to find the price upper bound. When $\epsilon = c \delta^2/(L^2\tau^2)$ for some constant $c$, the algorithm gives a $\hat g_0(S_0)$ such that $g_0(S_0) - \delta S_0 \leq \hat g_0(S_0) \leq g_0(S_0)$.
\end{theorem}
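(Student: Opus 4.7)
The easy direction $\hat g_0(S_0)\leq g_0(S_0)$ falls out of a routine backward induction on $t$: by Proposition~\ref{prop:dual} every mean-zero measure on $\hmu$ is also a mean-zero measure on $[-\uzeta,\ozeta]$, and combined with the inductive hypothesis $\hat g_{t+1}\leq g_{t+1}$ this gives $\hat g_t(x)\leq g_t(x)$.

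The nontrivial direction I would prove by backward induction on the stronger statement $\hat g_t(x)\geq g_t(x)-\alpha_t x$ for some sequence $0=\alpha_\tau\leq\alpha_{\tau-1}\leq\cdots\leq\alpha_0$ to be specified. The key move is an explicit mean-preserving transport: let $P^*$ be the continuous optimizer for $g_t(x)$ from Proposition~\ref{prop:dual}, and construct $\tilde P$ on $\hmu$ by redistributing each unit of $P^*$-mass at a point $r\in[r_i,r_{i+1}]$ between the two neighboring grid points with the interpolation weights $(r_{i+1}-r)/\epsilon$ and $(r-r_i)/\epsilon$. These weights are chosen precisely so that $\E_{\tilde P}[R]=0$, so $\tilde P$ remains risk-neutral and is therefore feasible in the LP defining $\hat g_t(x)$. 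Hence
\[
\hat g_t(x)\;\geq\;\E_{\tilde P}\!\left[\hat g_{t+1}(x(1+R))\right]\;\geq\;\E_{\tilde P}\!\left[g_{t+1}(x(1+R))\right]-\alpha_{t+1}x,
\]
where the first inequality is Proposition~\ref{prop:dual} applied to $\hat g_t$ and the second uses the induction hypothesis together with $\E_{\tilde P}[1+R]=1$.

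It remains to compare $\E_{\tilde P}[g_{t+1}(x(1+R))]$ with $\E_{P^*}[g_{t+1}(x(1+R))]=g_t(x)$. By construction the transport moves each piece of mass by at most $\epsilon$ in return space, hence at most $x\epsilon$ in price space, so Lipschitz continuity of $g_{t+1}$ with constant $L_{t+1}$ gives a comparison error of at most $L_{t+1}x\epsilon$. A side induction using $\E_{P^*}[1+R]=1$ shows that the risk-neutral averaging does not amplify the Lipschitz constant, so $L_t=L$ for every $t$. This produces the recursion $\alpha_t\leq\alpha_{t+1}+L\epsilon$, hence $\alpha_0\leq L\tau\epsilon$, and plugging in $\epsilon=c\delta^2/(L^2\tau^2)$ then gives $\alpha_0 S_0\leq\delta S_0$ for a suitable constant $c$ (in fact with substantial slack).

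The main obstacle, and the reason this proof is not automatic, is exactly the exponential-support concern flagged just before the theorem: the final-round state space spans an interval of length exponential in $\tau$, so any naive pointwise discretization bound would be vacuous. The artificial risk-neutral measure furnished by Proposition~\ref{prop:dual} is indispensable here, because it converts the per-step approximation error into a purely \emph{local} transport in return space (displacement at most $\epsilon$) whose cost accumulates \emph{additively} through the $\tau$ rounds rather than multiplicatively with the tree's state-space size.
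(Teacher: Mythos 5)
Your proposal is correct, and it takes a genuinely different --- and in fact quantitatively sharper --- route than the paper's. The paper introduces the hybrid quantity $g^m_t$ (exact $g_{t+1}$, discretized uncertainty set), bounds the local error $g_t(x)-g^m_t(x)$ in Lemma~\ref{lem:gtgm} by $\sqrt{2Lg_t(x)x\epsilon}+Lx\epsilon$ via a binding-constraints analysis, and then propagates these local errors through the rounds using the martingale property of $\{S_t\}$ under the stitched-together measure $P^m_{S_0}$; the $\sqrt{\epsilon}$ arises because the paper approximates the optimal two-point measure on $\{\rone,\rtwo\}$ by the two-point risk-neutral measure on the enclosing grid points, and when $\rtwo-\rone$ is small the risk-neutral weights are distorted by a factor of order $\epsilon/(\rtwo-\rone)$, which then has to be balanced against the complementary case in which both binding returns are near zero. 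Your mean-preserving splitting of each atom of $P^*$ between its two neighboring grid points sidesteps this degeneracy entirely: the transported measure (supported on up to four grid points) is \emph{exactly} risk-neutral because the interpolation weights preserve the barycenter, each unit of mass moves by at most $\epsilon$ in return space, and so the local error is $O(Lx\epsilon)$ with no case analysis; combined with your observation that the proportional error $\alpha_{t+1}y$ averages back to $\alpha_{t+1}x$ under any risk-neutral measure (the same martingale idea the paper uses globally, applied locally inside the induction), this yields $g_0(S_0)-\hat g_0(S_0)\leq L\tau\epsilon S_0$ rather than the paper's $C_1\sqrt{\epsilon}L\tau S_0$. Consequently $\epsilon=c\delta/(L\tau)$ would already suffice for the stated accuracy, and you never need the bound $g_t(x)\leq Lx$ (hence never need monotonicity and $g(0)=0$) that the paper invokes to control the $\sqrt{2Lg_t(x)x\epsilon}$ term. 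Both arguments rest on the same two pillars, namely the dual risk-neutral characterization of Proposition~\ref{prop:dual} and the preservation of the Lipschitz constant in Lemma~\ref{lem:lipschitz}. The only housekeeping points worth spelling out in a full write-up are that an optimal $P^*$ exists and may be taken to be a two-point measure (as in Lemma~\ref{cor:twoworld}), and that the endpoints $-\uzeta$ and $\ozeta$ belong to $\hmu$ so that every $r\in\mathcal U$ is bracketed by two grid points at distance $\epsilon$.
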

\fi

\ifnum \full=1
%
To formalize the above discussion, we start with the first building block regarding the preservation of Lipschitz continuity for the value functions. The lemma below can be proved by using backward induction. See Appendix~\ref{asec:lipschitz} for details.

\begin{lemma}\label{lem:lipschitz}
Suppose the payoff function $g(\cdot)$ is Lipschitz continuous and monotonically non-decreasing, i.e. $g(x)-g(y)\leq L(x-y)$ for $x\geq y$. Then the value function $g_t(\cdot)$ at the $t$-th round is also Lipschitz continuous with the same Lipschitz constant for all $t$.
\end{lemma}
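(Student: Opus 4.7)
The plan is to do backward induction on $t$, starting from $g_\tau = g$ (which is $L$-Lipschitz and non-decreasing by hypothesis) and showing that both properties are inherited by $g_{t-1}$. The crucial tool is to replace the minimax recursion by its dual form given in Proposition~\ref{prop:dual}, namely
\[
g_{t-1}(S) \;=\; \max_{\substack{P_f \in \mathcal{P}(\mathcal{U}),\\ \E_{P_f}[R]=0}} \E_{P_f}\bigl[g_t(S(1+R))\bigr],
\]
since this representation makes the risk-neutral constraint $\E_{P_f}[R]=0$ available exactly at the point where it is needed.

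For the inductive step, first I would establish monotonicity: if $x\ge y$ then $x(1+R)\ge y(1+R)$ for every $R\in[-\uzeta,\ozeta]$ (using $\uzeta<1$ so prices remain nonnegative), hence by the inductive hypothesis on $g_t$ the integrand is pointwise larger for $x$ than for $y$, and taking $\E_{P_f}$ and then $\max_{P_f}$ preserves this. Next, for Lipschitz continuity, fix $x\ge y$ and any admissible $P_f$; the inductive Lipschitz hypothesis gives
\[
\bigl|g_t(x(1+R)) - g_t(y(1+R))\bigr| \;\le\; L(1+R)(x-y),
\]
and integrating against $P_f$ yields
\[
\bigl|\E_{P_f}[g_t(x(1+R))] - \E_{P_f}[g_t(y(1+R))]\bigr| \;\le\; L(x-y)\bigl(1+\E_{P_f}[R]\bigr) \;=\; L(x-y),
\]
where the final equality is precisely the risk-neutral constraint. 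This is the decisive cancellation: without $\E_{P_f}[R]=0$ one would pick up a factor of $1+\ozeta>1$ per round and the Lipschitz constant would blow up geometrically in $\tau$.

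To pass from this per-measure bound to a bound on the max, I would use the elementary inequality $|\max_a f(a) - \max_a h(a)| \le \sup_a |f(a)-h(a)|$: letting $f(P_f) := \E_{P_f}[g_t(x(1+R))]$ and $h(P_f) := \E_{P_f}[g_t(y(1+R))]$, the per-measure estimate above gives $|g_{t-1}(x)-g_{t-1}(y)|\le L(x-y)$, completing the inductive step. The base case $t=\tau$ is immediate.

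I do not expect any real obstacle; the only delicate point is recognizing that the dual representation in Proposition~\ref{prop:dual} is what makes the argument work, and that the Lipschitz bound $L(1+R)$ integrates cleanly against any \emph{risk-neutral} $P_f$. I would also briefly remark that the same argument, applied to $-g$, shows Lipschitz preservation for the lower-bound value functions, and that the assumption $\uzeta<1$ (implicit in assuming prices stay positive) is what guarantees $1+R\ge 0$ inside the Lipschitz estimate.
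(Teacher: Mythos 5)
Your proof is correct and is essentially the argument the paper gives in Appendix~\ref{asec:lipschitz}: the paper phrases the inductive step as a proof by contradiction, but the mechanism is identical --- evaluate both value functions under the measure that is optimal for the larger one, apply the inductive Lipschitz bound pointwise to get the factor $L(1+R)$, and use the risk-neutral constraint $\E_{P_f}[R]=0$ to cancel the $R$ and avoid geometric blow-up of the constant. Your explicit treatment of monotonicity and of the inequality $|\max_a f(a)-\max_a h(a)|\le\sup_a|f(a)-h(a)|$ just makes the same steps direct rather than by contradiction.
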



Our second building block here encapsulates the effect of local errors due to discretization. We need to introduce an intermediate quantity $g^m_t(x)$, for each $t\in[\tau]$, defined as the optimal solution of the following linear program:
\begin{equation}\label{eqn:gmt}
\begin{array}{ll}
\mbox{minimize}&p\\
\text{subject to}&g_{t + 1}(x(1+r_i))-r_i\Delta\leq p \text{\ \ for $i\in[b]$}
\end{array}
\end{equation}

The difference between $g_t^m(x)$ and $\hat g_t(x)$ is that the calculation of $g_t^m(x)$ assumes accurate access to the function $g_{t+1}(\cdot)$. Thus, we may view $g^m_t(\cdot)$ as a ``hybrid variable'' that sits in between $g_t(\cdot)$ and $\hat g_t(\cdot)$. The following gives a bound for $g_t(\cdot) - g^m_t(\cdot)$:

\begin{lemma}\label{lem:gtgm}With $g^m_t(x)$ defined in \eqref{eqn:gmt}, we have
\begin{equation}
0 \leq g_t(x) - g^m_t(x) \leq \sqrt{2Lg_t(x) x\epsilon } + L x\epsilon
\end{equation}
where $\epsilon$ is the discretization parameter in the algorithm.
\end{lemma}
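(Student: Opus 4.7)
The plan is to use the duality characterization of Proposition 3.1 applied at round $t$. This expresses both the true value function and its one-step discretization as maxima over risk-neutral probability measures:
\[
g_t(x) \;=\; \max_{\substack{P \in \mathcal P(\mathcal U),\\ \E_P[R]=0}} \E_P[g_{t+1}(x(1+R))], \qquad g_t^m(x) \;=\; \max_{\substack{P \in \mathcal P(\hmu),\\ \E_P[R]=0}} \E_P[g_{t+1}(x(1+R))].
\]
Since $\hmu \subset \mathcal U$, every feasible measure for $g_t^m(x)$ is feasible for $g_t(x)$, which immediately yields $g_t(x) - g_t^m(x) \geq 0$.

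For the upper bound I would take a (near-)optimal risk-neutral measure $P^*$ for $g_t(x)$ and construct a candidate measure $\tilde P$ on $\hmu$ by a standard two-point projection: for each $r$ lying in an interval $[r_i, r_{i+1}]$ between adjacent grid points, split its mass into $(r_{i+1}-r)/\epsilon$ at $r_i$ and $(r-r_i)/\epsilon$ at $r_{i+1}$. Because this split reproduces $r$ as the mean of the two atoms, the constraint $\E_{\tilde P}[R]=0$ is preserved exactly. Hence $\tilde P$ is feasible for the program defining $g_t^m(x)$, and so $g_t(x) - g_t^m(x) \leq \E_{P^*}[h(R)] - \E_{\tilde P}[h(R)]$, where $h(R) := g_{t+1}(x(1+R))$.

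The main obstacle is that the naive Lipschitz estimate for the per-point rounding error gives only an $O(Lx\epsilon)$ bound, which is too coarse to produce the $\sqrt{\cdot}$ term in the statement. The key idea is to combine two bounds: Lemma~\ref{lem:lipschitz} gives that $h$ is $Lx$-Lipschitz in $R$, yielding a per-point error at most $2\alpha(1-\alpha)Lx\epsilon \leq \tfrac12 Lx\epsilon$ where $\alpha = (r_{i+1}-r)/\epsilon$; on the other hand, since $g_{t+1}\geq 0$ (which holds by induction from $g(0)=0$, monotonicity, and Lipschitz continuity), the per-point error is also trivially at most $h(r)$ itself. Therefore, for every $r$,
\[
h(r) - \alpha h(r_i) - (1-\alpha) h(r_{i+1}) \;\leq\; \min\!\bigl(h(r),\ C Lx\epsilon\bigr)
\]
for a small absolute constant $C$.

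Finally, I would pass from the $\min$ bound to the $\sqrt{\cdot}$ bound using $\min(a,b)\leq\sqrt{ab}$ and then Jensen's inequality on the concave function $\sqrt{\cdot}$:
\[
g_t(x) - g_t^m(x) \;\leq\; \E_{P^*}\!\bigl[\sqrt{C\,h(R)\,Lx\epsilon}\bigr] \;\leq\; \sqrt{C Lx\epsilon\cdot \E_{P^*}[h(R)]} \;=\; \sqrt{C L\, g_t(x)\, x\, \epsilon}.
\]
The additive $Lx\epsilon$ term in the stated bound absorbs constants and an extra first-order correction arising from the fact that the supremum defining $g_t(x)$ may only be attained in a limit, so one works with $\epsilon$-optimal $P^*$ and pays an $O(Lx\epsilon)$ overhead. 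The chief technical subtlety throughout is ensuring that the two-point rounding both preserves feasibility and permits the joint use of Lipschitzness and nonnegativity; once that is in place, the $\sqrt{\cdot}$ bound follows almost mechanically.
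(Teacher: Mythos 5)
Your proof is correct, but it takes a genuinely different route from the paper's. The paper first proves a structural lemma about the primal LP (there are at most two binding constraints, at some $\rone<0<\rtwo$, which express $g_t(x)$ as an explicit two-point risk-neutral combination), then rounds $\rone,\rtwo$ to nearby grid points, lower-bounds $g^m_t(x)$ by the two-constraint sub-LP on those grid points, and runs a case analysis on whether $\rtwo-\rone$ exceeds a threshold $\eta$ (large gap: the weights are perturbed by relative error $\epsilon/\eta$, costing $g_t(x)\epsilon/\eta$; small gap: both points are near $0$ and $g_t$ is close to $g_{t+1}$, costing $O(Lx\eta)$), finally optimizing over $\eta$ to produce the square root. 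You instead work entirely on the dual side: you take an optimal risk-neutral measure on $\mathcal U$ and push it onto the grid by a mean-preserving two-point split, so feasibility for the discretized dual is automatic, and you obtain the square root not by tuning a threshold but by the pointwise bound $\min\bigl(h(r),\tfrac12 Lx\epsilon\bigr)\leq\sqrt{\tfrac12 h(r)Lx\epsilon}$ followed by Jensen, using $\E_{P^*}[h(R)]=g_t(x)$. Your argument avoids both the binding-constraint lemma and the case analysis, works for any optimal measure (not just two-point-supported ones), and even yields a slightly better constant ($\sqrt{Lg_t(x)x\epsilon/2}$ versus $\sqrt{2Lg_t(x)x\epsilon}$); what the paper's route buys in exchange is the explicit two-point structure of the optimizer, which it reuses elsewhere (e.g., for the convex case and the explicit hedge). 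The only points worth making fully rigorous in your write-up are (i) the measure-theoretic definition of the splitting for a general (possibly non-atomic) $P^*$, which is routine, and (ii) the nonnegativity $g_{t+1}\geq0$ on the relevant domain, which follows by induction from $g(0)=0$ and monotonicity exactly as you indicate and is in any case implicitly required for the lemma's statement to make sense.
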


The main device we use in its proof is a convenient dual characterization of the optimal solution obtained through the binding constraints.
From this characterization, Lipschitz continuity is then used to bound the magnitude of the local errors. See Appendix~\ref{asec:lem:gtgm} for details.

%
%

We now move to the final stage of our analysis. The following is our main result.
\begin{theorem}\label{thm:main}The approximation $g_0(S_0) - \hat g_0(S_0)$ satisfies
\begin{equation}
0 \leq g_0(S_0) - \hat g_0(S_0) \leq  C_1 \sqrt{\epsilon}L\tau S_0.
\end{equation}
for some constant $C_1$.
\end{theorem}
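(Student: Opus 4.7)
The plan is to telescope the error $g_0(S_0) - \hat g_0(S_0)$ across the $\tau$ rounds and control the accumulated error via an artificial probability measure derived from the dual characterization in Proposition~\ref{prop:dual}. The central idea is to introduce the hybrid value function $g^m_t$ (defined in \eqref{eqn:gmt}) as a bridge: at each round, the quantity $g_t - g^m_t$ captures the \emph{local} discretization error (already controlled by Lemma~\ref{lem:gtgm}), while $g^m_t - \hat g_t$ captures error \emph{propagation} from the next round, since both use the same discretized adversary set $\hmu$ but $g^m_t$ sees the true $g_{t+1}$ whereas $\hat g_t$ sees $\hat g_{t+1}$.

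First I would verify that $g_t(x) \geq \hat g_t(x) \geq 0$ and $g_t(x) \leq L x$ for all $t$ and $x \geq 0$; the inequality $g_t \geq \hat g_t$ follows from backward induction since shrinking the adversary's feasible set from $\mathcal U$ to $\hmu$ only lowers the min-max, and the linear bound follows from Lemma~\ref{lem:lipschitz} together with $g_t(0)=0$ and monotonicity. Next, for each state $x$ reached at round $t$, let $P^m_t(x)$ denote a risk-neutral probability distribution on $\hmu$ that attains the maximum in Proposition~\ref{prop:dual} applied to $g^m_t(x)$, so that $g^m_t(x) = \E_{R\sim P^m_t(x)}[g_{t+1}(x(1+R))]$. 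Because $P^m_t(x)$ is also a feasible (though not necessarily optimal) measure for the dual representation of $\hat g_t(x)$, we obtain the key inequality
\begin{equation}
g^m_t(x) - \hat g_t(x) \;\leq\; \E_{R\sim P^m_t(x)}\bigl[g_{t+1}(x(1+R)) - \hat g_{t+1}(x(1+R))\bigr].
\end{equation}

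I would then define the artificial measure on the whole path $X_0=S_0, X_1, \ldots, X_\tau$ by drawing $R_{t+1} \sim P^m_t(X_t)$ at each node. Since $\E[R_{t+1}\mid X_t]=0$ by risk-neutrality, $\{X_t\}$ is a martingale under this measure and in particular $\E[X_t]=S_0$ for every $t$. Writing the full error as
\begin{equation}
g_0(S_0) - \hat g_0(S_0) \;=\; \bigl(g_0(S_0) - g^m_0(S_0)\bigr) + \bigl(g^m_0(S_0) - \hat g_0(S_0)\bigr)
\end{equation}
and unrolling the second term through the inequality above, a straightforward induction yields
\begin{equation}
g_0(S_0) - \hat g_0(S_0) \;\leq\; \E\!\left[\,\sum_{t=0}^{\tau-1}\bigl(g_t(X_t) - g^m_t(X_t)\bigr)\right].
\end{equation}
Applying Lemma~\ref{lem:gtgm} inside the expectation, together with $g_t(X_t)\leq L X_t$, bounds each summand by $L X_t(\sqrt{2\epsilon}+\epsilon)$. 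Using $\E[X_t]=S_0$ and summing over $t$ gives $g_0(S_0)-\hat g_0(S_0) \leq L\tau S_0(\sqrt{2\epsilon}+\epsilon) \leq C_1\sqrt{\epsilon}\, L\tau S_0$ for a suitable constant $C_1$, as claimed; the lower bound $g_0(S_0)-\hat g_0(S_0)\geq 0$ has already been observed.

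The main obstacle I anticipate is justifying the telescoping step rigorously: the local error bound from Lemma~\ref{lem:gtgm} involves the square root $\sqrt{2L g_t(X_t) X_t \epsilon}$, which if handled naively could lose an extra $\sqrt{\tau}$ factor via Cauchy--Schwarz. The trick that avoids this is to bound $g_t(X_t)\leq L X_t$ \emph{before} taking the square root, converting $\sqrt{2L g_t(X_t) X_t \epsilon}$ into something linear in $X_t$, so that linearity of expectation combined with the martingale property $\E[X_t]=S_0$ produces the clean $\sqrt{\epsilon}\, L\tau S_0$ bound rather than $\sqrt{\epsilon\tau}\,L S_0$.
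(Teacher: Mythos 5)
Your proposal is correct and follows essentially the same route as the paper: the identical decomposition $g_t - \hat g_t = (g_t - g^m_t) + (g^m_t - \hat g_t)$, the same use of the optimal risk-neutral measure for $g^m_t$ as a feasible measure for $\hat g_t$'s dual to control propagation, the same path-level martingale measure with $\E[X_t]=S_0$, and the same key step of bounding $g_t(X_t)\leq LX_t$ inside the square root before taking expectations. No substantive differences from the paper's argument.
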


The proof of this theorem relies heavily on the ``artificial" risk-neutral probability measures that define the optimal dual solutions of the several different value functions, $g_t(x)$, $g_t^m(x)$ and $\hat g_t(x)$, for each step of backward induction. Although these probability measures have no real-world correspondence, they confer ``artificial" martingale properties on the underlying asset's price movement. Moreover, we are in fact granted with some freedom in choosing the measure to work under, and we will see that the one associated with $g_t^m(x)$ is the most effective in truncating the propagation of error. We will now lay out the arguments precisely. 

\begin{proof}
We first apply a standard telescoping trick (see, \eg~\cite{broadie2004stochastic}) to ``decouple'' the local error from the global error, \ie let $d_t(x) = g_t(x) - \hat g_t(x)$ and we have
$$d_t(x) = (g_t(x) - g^m_t(x)) + (g^m_t(x) - \hat g_t(x)).$$
The first term represents the local error at each induction step, whereas the second term comes from error propagation from the future. The first term $g_t(x)-g^m_t(x)$ is handled by Lemma~\ref{lem:gtgm}. For the second term $g^m_t(x) - \hat g_t(x)$, we recall the dual characterization in Proposition \ref{prop:dual} to write
\begin{equation}
  g^m_t(x) - \hat g_t(x) =\max_{\substack{P \in \mathcal P(\hmu) \\ \E_P[R_{t+1}] = 0}}\E\left[g_{t+1}(x(1+R_{t+1}))\right] - \max_{\substack{P \in \mathcal P(\hmu) \\ \E_{P}[R_{t+1}] = 0}}\E\left[\hat g_{t+1}(x(1+R_{t+1}))\right]\notag
 \end{equation}
Notice that $R_{t+1}$ in both expectations above share the same support $\hmu$, the discretized uncertainty set.
Let us write $\E_{x,t}^m$ as the expectation under $P_{x,t}^m: = \arg \max\{\E\left[g_{t+1}(x(1+R_{t+1}))\right]\mid P \in \mathcal P(\hmu), \E_P[R_{t+1}] = 0 \}$.
The above equation becomes
 \begin{eqnarray}
 && \E_{x,t}^m\left[g_{t+1}(x(1+R_{t+1}))\right] - \max_{\substack{P \in \mathcal P(\hmu) \\ \E_{P}[R_{t+1}] = 0}}\left[\hat g_{t+1}(x(1+R_{t+1}))\right] \notag\\
 &\leq & \E_{x,t}^m\left[g_{t+1}(x(1+R_{t+1}))-\hat g_{t+1}(x(1+R_{t+1}))\right]{}\notag\\
 &&{}\text{\ \ (since $P_{x,t}^m$ is feasible for the maximization in the second term above)} \notag\\
 &= & \E_{x,t}^m[d_{t + 1}(x(1+R_{t+1}))] \notag
\end{eqnarray}
Hence, $d_t(x)$ can be bounded recursively under the probability measure $P_{x,t}^m$:
\begin{equation}\label{eqn:recurdt}
d_t(x) \leq \sqrt{2L g_t(x) x\epsilon} + Lx \epsilon + \E^m_{x, t + 1}[d_{t + 1}(x(1+R_{t+1}))].
\end{equation}

Now, let us define a probability measure $P_{S_0}^m$ on the process $\{S_t\}_{t\in[\tau]}$, where $S_t = S_0 \prod_{i = 1}^t(1+R_i)$ is the price of the underlying asset at the $t$-th round. This measure $P_{S_0}^m$ is defined by the stepwise transition probability $P_{x,t}^m$ on each $R_t$, $t\in[\tau]$. From \eqref{eqn:recurdt}, we can expand $d_0(S_0)$ recursively:
\begin{align}
d_0(S_0) & \leq \sqrt{2Lg_t(S_0) S_0\epsilon} +  L S_0\epsilon  + \E_{S_0, 1}^m[d_1(S_1)] \notag \\
 & \leq \sqrt{2Lg_t(S_0)  S_0 \epsilon } + LS_0\epsilon + \E_{S_0, 1}^m[\sqrt{2L g_1(S_1) S_1\epsilon} + L S_1\epsilon ] + \E_{S_0, 1}^m \E_{S_1, 2}^m[d_2(S_2)] \notag \\
 & \quad \quad \mbox{(by expanding $d_1(S_1)$ by (\ref{eqn:recurdt}) again)} \notag \\
&\vdots \notag\\
& \leq \E^m_{S_0}\left[\sum_{t = 1}^{\tau }\left\{\sqrt{2Lg_t(S_t) S_t \epsilon } + LS_t\epsilon \right\}\right] \label{eqn:lastline}.
\end{align}

Next, observe that $\{S_t\}_{t \leq \tau}$ is a martingale under $P_{S_0}^m$ and the filtration $\mathcal F_t=\sigma(R_1,\ldots,R_t)$, since $\E^m_{S_0}[R_{t+1}|\mathcal F_t]=\E_{t,S_t}^m[R_{t+1}]=0$. We leverage this fact to bound both terms in (\ref{eqn:lastline}). First, notice that
$$\E^m_{S_0}[\sum_{t = 1}^{\tau}L \epsilon S_t] = L \epsilon \sum_{t = 1}^{\tau}\E^m_{S_0}[S_t] = L\tau\epsilon S_0$$
For the other term in Eq. \eqref{eqn:lastline}, note that by Lemma~\ref{lem:lipschitz} we have $g_t(x)\leq Lx$ for any $t$ and $x$.
%
Hence we can wrap up the analysis for (\ref{eqn:lastline}):
\begin{eqnarray*}
 \E^m_{S_0}\left[\sum_{t = 1}^{\tau}\left\{\sqrt{2\epsilon L g_t(S_t)S_t } + L\epsilon S_t\right\}\right]  &\leq & \sum_{t = 1}^{\tau}\E^m_{S_0}[\sqrt{2\epsilon L^2 S^2_t}]  + L\tau\epsilon S_0\\
& = & \sqrt{2 \epsilon}L\sum_{t = 1}^{\tau 1}\E^m_{S_0}[S_t] +L\tau\epsilon S\\
& = & \sqrt{2\epsilon}L\tau S_0 + L\tau\epsilon S_0 \leq  C_1 \sqrt{\epsilon}L\tau S_0,
\end{eqnarray*}
for some constant $C_1$.
This completes the proof of Theorem~\ref{thm:main}.
\end{proof}
Thus we have the following corollary:
\begin{corollary} Let $\delta$ be an arbitrary constant. Consider using the multinomial tree approximation algorithm to find the price upper bound. When $\epsilon = c \delta^2/(L^2\tau^2)$ for some constant $c$, the algorithm gives a $\hat g_0(S_0)$ such that $g_0(S_0) - \delta S_0 \leq \hat g_0(S_0) \leq g_0(S_0)$.
\end{corollary}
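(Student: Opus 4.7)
This corollary is an immediate specialization of Theorem~\ref{thm:main}, so the plan is simply to substitute the prescribed choice of $\epsilon$ into the bound $0 \leq g_0(S_0) - \hat g_0(S_0) \leq C_1\sqrt{\epsilon}L\tau S_0$ and solve for the constant $c$ that makes the right-hand side at most $\delta S_0$.

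Concretely, I would first recall from Theorem~\ref{thm:main} that the multinomial tree approximation always underestimates the true upper bound, which handles the easy side $\hat g_0(S_0) \leq g_0(S_0)$ for free. Then I would set $\epsilon = c\delta^2/(L^2\tau^2)$ in the upper bound of Theorem~\ref{thm:main}, obtaining
\[
g_0(S_0) - \hat g_0(S_0) \;\leq\; C_1 \sqrt{\tfrac{c\delta^2}{L^2\tau^2}}\, L\tau S_0 \;=\; C_1 \sqrt{c}\,\delta S_0.
\]
Choosing $c = 1/C_1^2$ (or any smaller positive constant) then gives the desired inequality $g_0(S_0) - \delta S_0 \leq \hat g_0(S_0)$.

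There is essentially no obstacle here, since Theorem~\ref{thm:main} has done all the real work; the only remark worth making is a sanity check on the algorithmic running time. With $\epsilon = \Theta(\delta^2/(L^2\tau^2))$, the discretized uncertainty set $\hat{\mathcal U}$ has cardinality $b = (\uzeta+\ozeta)/\epsilon = O(L^2\tau^2/\delta^2)$ (using that $\uzeta,\ozeta$ are polynomial in $\tau$), and the dynamic program keeps at most $t\cdot b$ states at round $t$, so the total work is polynomial in $\tau$, $L$, and $1/\delta$, consistent with the paper's claim that the algorithm is efficient.
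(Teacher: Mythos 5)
Your proposal is correct and matches the paper's (implicit) argument exactly: the corollary is stated immediately after Theorem~\ref{thm:main} with no separate proof, precisely because it follows by substituting $\epsilon = c\delta^2/(L^2\tau^2)$ into the bound $C_1\sqrt{\epsilon}L\tau S_0$ and taking $c \leq 1/C_1^2$, as you do. The added running-time sanity check is a reasonable bonus but not part of the paper's reasoning.
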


\myparab{Tightness of the performance.} We shall show in Appendix~\ref{sec:adddep} that as long as $\epsilon$ is a polynomial in $\tau$, an additive error term that is linear in $L$ and $S_0$ will be inevitable under the oracle model. This means that the running time of our algorithm necessarily depends on $L$ and the additive dependency on $S$ is essentially tight.

\myparab{Non-uniform uncertainty set.} When the uncertainty sets are non-uniform, we can still use the multinomial tree algorithm to find the approximate solution so long as the largest uncertainty set is still polynomial in $\tau$. We remark, though, that the parameter $\epsilon$ has to remain unchanged even if the uncertainty sets change over the time, \ie our $\epsilon$ is still set to $\epsilon = c \delta^2/(L^2\tau^2)$.

\fi 

\section{American options}\label{sec:genshort}
We generalize our results to pricing American options, including dual characterization using risk-neutral measure and algorithmic results similar to European options.



As before, let us first consider the single-round game. The upper bound of an American option can be expressed as:\\
\begin{equation}
\min_{\Delta}\max_{\substack{\theta\in \{0, 1\}\\R \in \mathcal U}}\left((g(S_0(1+R))-R\Delta)(1-\theta) + g(S_0)\theta\right)\label{eqn:recuram01}
 \end{equation}
 where $\theta$ is the decision made by the adversary in exercising the option prematurely: $\theta=1$ if early exercise is prompted, otherwise $\theta=0$. We remark that it is the adversary, not the trader, to have the right to exercise early in upper bound evaluation. It is because the upper bound comes from a short-option argument (see Section~\ref{sec:model}) that endows nature as the holder of the option and hence the early exercise right. For lower bound evaluation, the bound is $
\max_{\Delta,\theta\in\{0,1\}}\min_{R \in \mathcal U}\left((g(S_0(1+R))-R\Delta)(1-\theta) + g(S_0)\theta\right)$.
Now let us focus on the upper bound \eqref{eqn:recuram01}, and our first result is that part of the $\max$ and the $\min$ there can be interchanged:
\begin{lemma}\label{lem:interchange}
The optimization \eqref{eqn:recuram01} can be written as
\begin{eqnarray}
& & \min_{\Delta}\max_{\theta\in \{0, 1\}}\max_{R \in \mathcal U}\left((g(S_0(1+R)) - \Delta R)(1-\theta) + g(S_0)\theta\right) \notag\\
& = & \max_{\theta \in \{0, 1\}}\min_{\Delta}\max_{R \in \mathcal U}\left((g(S_0(1+R)) - \Delta R)(1-\theta) + g(S_0)\theta\right)\notag\\
&=&\max\left\{\max_{\substack{P \in \mathcal P(\mathcal U)\\\E[R] = 0}}\E[g(S_0(1+R))],g(S_0)\right\}. \label{characterization American}
\end{eqnarray}
\end{lemma}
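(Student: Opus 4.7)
The plan is to exploit the binary nature of $\theta$ to explicitly enumerate the two cases and then reduce everything to the European single-round formula already given by Proposition~\ref{prop:dual}. Since when $\theta=1$ the integrand collapses to the constant $g(S_0)$, and when $\theta=0$ it reduces to $g(S_0(1+R)) - R\Delta$, the maximum over $\theta\in\{0,1\}$ of the integrand is simply $\max\{g(S_0),\, g(S_0(1+R))-R\Delta\}$. I will handle the two sides of the claimed equality separately and show both equal $\max\{\max_{P\in\mathcal{P}(\mathcal{U}),\,\E_P[R]=0}\E_P[g(S_0(1+R))],\, g(S_0)\}$.

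For the \emph{right-hand side} (where the max over $\theta$ is outermost), splitting the outer maximum over the two values of $\theta$ gives
\[
\max\Bigl\{\min_{\Delta}\max_{R\in\mathcal U}[g(S_0(1+R))-R\Delta],\ \min_{\Delta}\max_{R\in\mathcal U}g(S_0)\Bigr\}.
\]
The first term is exactly the European single-round upper bound, so Proposition~\ref{prop:dual} identifies it with $\max_{P:\E_P[R]=0}\E_P[g(S_0(1+R))]$; the second term is the constant $g(S_0)$. This immediately gives the closed-form expression in~\eqref{characterization American}.

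For the \emph{left-hand side} (where $\min_\Delta$ is outermost), I first push the $\max$ over $\theta$ past the $\max$ over $R$ to rewrite the inner quantity as $\max\bigl\{g(S_0),\,\max_{R\in\mathcal U}[g(S_0(1+R))-R\Delta]\bigr\}$; note $g(S_0)$ is independent of $R$, so this pull-out is valid. I then need the auxiliary fact that, for any function $f(\Delta)$ attaining its infimum and any constant $c$ independent of $\Delta$,
\[
\min_{\Delta}\max\{c, f(\Delta)\} \;=\; \max\Bigl\{c,\ \min_{\Delta}f(\Delta)\Bigr\}.
\]
This is elementary: if $c \geq \min_\Delta f(\Delta)$, a minimizer $\Delta^*$ of $f$ yields $\max\{c,f(\Delta^*)\}=c$, and the other direction is trivial; if $c<\min_\Delta f(\Delta)$, then $\max\{c,f(\Delta)\}=f(\Delta)$ for every $\Delta$. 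Applying this with $c=g(S_0)$ and $f(\Delta)=\max_{R\in\mathcal U}[g(S_0(1+R))-R\Delta]$ reduces the left-hand side to $\max\{g(S_0),\min_\Delta\max_R[g(S_0(1+R))-R\Delta]\}$, which again by Proposition~\ref{prop:dual} equals the same closed-form expression. Hence both sides agree.

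The only real subtlety is justifying the swap $\min_\Delta\max\{c,f(\Delta)\}=\max\{c,\min_\Delta f(\Delta)\}$, which in turn requires knowing that $f(\Delta):=\max_{R\in\mathcal U}[g(S_0(1+R))-R\Delta]$ actually attains its minimum over $\Delta$. This follows because $\mathcal U$ is bounded and contains both positive and negative values (otherwise the European upper bound is infinite and the lemma is vacuous), so $f$ is continuous, convex, and coercive in $\Delta$, hence its infimum is attained; this is exactly the existence already implicit in Proposition~\ref{prop:dual}. I expect this attainment/coercivity check to be the only nontrivial point of the argument; the rest is pure algebra of binary maxima and direct invocation of the European characterization.
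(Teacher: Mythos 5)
Your proposal is correct and follows essentially the same route as the paper: both arguments hinge on the observation that the $\theta=1$ branch is independent of $\Delta$, reduce the $\theta=0$ branch to Proposition~\ref{prop:dual}, and establish the nontrivial minimax inequality by plugging the European-optimal $\Delta^*$ back in (your identity $\min_{\Delta}\max\{c,f(\Delta)\}=\max\{c,\min_{\Delta}f(\Delta)\}$ is just a cleaner packaging of that substitution). Your extra care about attainment of the minimizer is a minor refinement the paper glosses over, but it does not change the argument.
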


\begin{proof}
To simplify notation, let us write $\Phi(\Delta,\theta)\triangleq \max_{R \in \mathcal U}((g(S_0(1+R))-\Delta R)(1-\theta) + g(S_0)\theta)$. Thus, we need to show $\min_{\Delta}\max_{\theta}\Phi(\Delta,\theta) =\max_{\theta}\min_{\Delta}\Phi(\Delta,\theta)$. The direction $\min_{\Delta}\max_{\theta}\Phi(\Delta,\theta) \geq \max_{\theta}\min_{\Delta}\Phi(\Delta,\theta)$ is straightforward. To show that $\min_{\Delta}\max_{\theta}\Phi(\Delta,\theta) \leq \max_{\theta}\min_{\Delta}\Phi(\Delta,\theta)$, the main observation is that $\Delta$ is only influential if $\theta=0$. More precisely, consider $\max_\theta\min_\Delta\Phi(\Delta,\theta)$. When $\theta = 1$, $\min_{\Delta}\Phi(\Delta,\theta) = g(S_0)$, independent of the choice of $\Delta$; when $\theta = 0$, we have $\min_{\Delta}\Phi(\Delta,\theta) = \max_{\substack{P \in \mathcal P(\mathcal U)\\\E[R] = 0}}\E[g(S_0(1+R))]$ by Proposition~\ref{prop:dual}, and there exists $\Delta^* = \arg \max_{\Delta}g(S_0(1+R)) - \Delta R$. Hence
\begin{equation}
\max_{\theta}\min_{\Delta}\Phi(\Delta,\theta) = \max\left\{\max_{\substack{P \in \mathcal P(\mathcal U)\\\E[R] = 0}}\E[g(S_0(1+R))],g(S_0)\right\}
\end{equation}
Consider putting $\Delta^*$ in $\max_{\theta}\Phi(\Delta,\theta)$. We then have
\begin{equation}
\min_{\Delta}\max_{\theta}\Phi(\Delta,\theta) \leq \max_{\theta}\Phi(\Delta^*,\theta) =  \max\left\{ \max_{\substack{P \in \mathcal P(\mathcal U)\\\E[R] = 0}}\E[g(S_0(1+R))],g(S_0)\right\}.
\end{equation}
Thus, we have $\min_{\Delta}\max_{\theta}\Phi(\Delta,\theta) = \max_{\theta}\min_{\Delta}\Phi(\Delta,\theta)$, which renders the single-round price upper bound \eqref{eqn:recuram01} as depicted in \eqref{characterization American}.
\end{proof}
Analogous results hold for lower bound. The significance of Lemma~\ref{lem:interchange} is that we can characterize the optimal solution of the minimax problem \eqref{eqn:recuram01} in terms of risk-neutral probability much like the case of European options, but with an additional outer maximization in the rightmost expression in \eqref{characterization American} to take into account the withdrawal feature. Consequently, we have the following characterization of multi-round game:
\begin{lemma}Consider a $\tau$-round American option hedging game with convex payoff function $g(\cdot)$ and uniform uncertainty set $\mathcal U = [-\uzeta, \ozeta]$. The option price's upper bound is the same as the upper bound for its European counterpart, i.e. it is optimal to exercise at the maturity time.
\label{lemma:american binomial}
\end{lemma}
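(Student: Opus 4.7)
The plan is to prove the lemma by backward induction on the round $t$, using the characterization of single-round equilibria established in Lemma~\ref{lem:interchange}. Let $g_t^E(x)$ and $g_t^A(x)$ denote the European and American upper-bound value functions at round $t$ when the current stock price is $x$; both agree at $t=\tau$ with $g(x)$. The inductive hypothesis is that $g_{t+1}^A = g_{t+1}^E$. Applying the right-hand side of~\eqref{characterization American} with $g_{t+1}^E$ in place of $g$ yields
\begin{equation}
g_t^A(x) \;=\; \max\!\left\{\max_{\substack{P\in\mathcal P(\mathcal U)\\ \E_P[R]=0}}\E_P\!\left[g_{t+1}^E(x(1+R))\right],\; g(x)\right\} \;=\; \max\{g_t^E(x),\,g(x)\}.
\end{equation}
So the whole lemma reduces to showing the continuation value dominates immediate exercise: $g_t^E(x)\ge g(x)$ for every $t$ and $x\ge 0$.

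To establish $g_t^E\ge g$, I would propagate two properties backward in $t$: (i) $g_t^E$ is convex in $x$, and (ii) $g_t^E(x)\ge g_{t+1}^E(x)$ pointwise. Property (i) is preserved because, as already noted in the paper's treatment of convex European options, the optimizing risk-neutral measure $P^*$ on $\mathcal U=[-\uzeta,\ozeta]$ concentrates on the two endpoints, so
\begin{equation}
g_t^E(x) \;=\; \frac{\ozeta}{\uzeta+\ozeta}\, g_{t+1}^E(x(1-\uzeta)) \;+\; \frac{\uzeta}{\uzeta+\ozeta}\, g_{t+1}^E(x(1+\ozeta)),
\end{equation}
and convexity is preserved by composition with the affine maps $x\mapsto x(1\pm\cdot)$ and by nonnegative linear combinations. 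Property (ii) follows from Jensen's inequality applied to the convex function $g_{t+1}^E$ under the risk-neutral measure $P^*$ (which satisfies $\E_{P^*}[R]=0$):
\begin{equation}
g_t^E(x) \;=\; \E_{P^*}\!\left[g_{t+1}^E(x(1+R))\right] \;\ge\; g_{t+1}^E\!\left(x\,\E_{P^*}[1+R]\right) \;=\; g_{t+1}^E(x).
\end{equation}
Iterating (ii) from $t$ down to $\tau$ gives $g_t^E(x)\ge g_\tau^E(x)=g(x)$, which is exactly what we need.

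Combining these pieces closes the induction: $g_t^A(x)=\max\{g_t^E(x),g(x)\}=g_t^E(x)$, so the American and European upper bounds coincide at every round, and in particular at $t=0$. The fact that the maximum in the Bellman recursion is always attained by the ``continue'' branch is the precise statement that it is never strictly beneficial for the adversary to withdraw early, so maturity exercise is optimal. The only real subtlety is the invariant that $g_t^E$ stays convex under the recursion; once that is in hand, Jensen does the rest. Extension to a time-varying (but symmetric-in-sign) uncertainty set $\mathcal U_t=[-\uzeta_t,\ozeta_t]$ would go through identically, since at each step the relevant risk-neutral measure is again supported on two endpoints and the affine-composition argument for convexity is unchanged.
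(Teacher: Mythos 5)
Your proof is correct and follows essentially the same route as the paper's: both arguments propagate convexity of the value function backward through the two-point risk-neutral recursion and then use Jensen's inequality with $\E_{P_f}[R]=0$ to show the continuation value dominates immediate exercise. The only difference is bookkeeping --- you maintain the invariant $g_t^A=g_t^E$ and telescope the monotonicity $g_t^E\ge g_{t+1}^E$, while the paper inducts directly on the claim $g_t\ge g$ for the American value function --- but the key Jensen step is identical.
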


The proof of this result, which is written in detail in Appendix~\ref{proof american binomial}, uses an inductive argument to show that the price upper bound for convex payoff is given by $g_0(S_0)$, where $g_t(x)$ follows the recursion
\begin{equation}
g_{t}(x) = \max\left\{\E_{P_f}[g_{t+1}(x(1+R))], g(x)\right\} \label{step recursion}
\end{equation}
with $P_f\in\mathcal{P}(\{-\underline{\zeta},\overline{\zeta}\})$ and $\E_{P_f}[R] = 0$. Moreover, one can show by another induction that $g_t(x)\geq g(x)$ (and strict inequality except $t=\tau$ when $g(x)$ is strictly convex), and hence the conclusion.

Note that the above characterization of American option price's upper bound turns out to recover the $\tau$-round binomial tree model, just like in the European option case. Using known results for the binomial tree model~\cite{amin1994convergence}, there is also a suitable continuous-time limit of the adversarial upper bound of American option price to the corresponding Black-Scholes price.

For the case of concave payoff and lower bound calculation, the characterization \eqref{step recursion} holds and it can be shown that it is always optimal to exercise immediately, i.e. the lower bound price is merely $g(S_0)$. For non-convex payoff in upper bound calculation, or non-concave payoff in lower bound calculation, the above reductions do not hold. One can resort to our multinomial tree algorithm presented in Section~\ref{sec:non-convex} to approximate, for instance the upper bound, using the recursive formula $
\hat g_{t}(x) = \max\left\{\max_{\substack{P \in \mathcal P(\hmu)\\ \E[R] = 0}}\E[\hat g_{t+1}(x(1+R))], \hat g(x)\right\}$
where $\hmu$ is the discretized uncertainty set with step length $\epsilon$.
The performance can be analyzed by using the same techniques presented in Section~\ref{sec:non-convex}, which gives the following corollary:
\begin{corollary} Let $\delta$ be an arbitrary constant. Consider using the multinomial tree approximation algorithm to find the American option's upper bound. When $\epsilon = c \delta^2/(L^2\tau^2)$ for some constant $c$, the algorithm gives an $\hat g_0(S_0)$ such that $g_0(S_0) - \delta S_0 \leq \hat g_0(S_0) \leq g_0(S_0)$.
\end{corollary}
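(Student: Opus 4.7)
The plan is to follow the template of the proof of Theorem~\ref{thm:main} for European options, but with careful attention to the extra outer maximization with $g(x)$ that appears in the American recursion. The crucial observation is the elementary inequality
\[
0 \;\leq\; \max(A_1, c) - \max(A_2, c) \;\leq\; A_1 - A_2 \quad \text{whenever } A_1 \geq A_2,
\]
which tells us that the outer $\max\{\cdot, g(x)\}$ is a contraction on the ``European piece.'' Thanks to this contraction, both the local discretization error and the propagated error inherit the same shape as in the European analysis, so the whole proof of Theorem~\ref{thm:main} can be ported over with only superficial modifications.

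First, I would re-establish the structural properties needed for the argument. By induction on $t$ (starting from $g_\tau = g$), both $g_t$ and $\hat g_t$ remain $L$-Lipschitz and monotone non-decreasing, because the $\max$ of two $L$-Lipschitz monotone functions retains both properties, and $\E_P[g_{t+1}(x(1+R))]$ is $L$-Lipschitz in $x$ whenever $g_{t+1}$ is. Combined with $g(0)=0$, this also gives the growth bound $g_t(x) \leq Lx$, since $\E_P[x(1+R)] = x$ under any risk-neutral $P$. Next, introduce the hybrid quantity $g^m_t(x) = \max\bigl\{\max_{P \in \mathcal P(\hmu),\,\E_P[R]=0} \E_P[g_{t+1}(x(1+R))],\,g(x)\bigr\}$ that uses exact access to $g_{t+1}$ but already restricts the adversary to the discretized set $\hmu$, and decompose
\[
d_t(x) := g_t(x) - \hat g_t(x) = \bigl(g_t(x) - g^m_t(x)\bigr) + \bigl(g^m_t(x) - \hat g_t(x)\bigr).
\]

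For the local term, apply the max-contraction inequality with $A_1$ (resp.\ $A_2$) being the inner continuous (resp.\ discretized) maximum; since $\hmu \subset \mathcal U$ forces $A_1 \geq A_2$, this reduces the local error to the European analogue already controlled by Lemma~\ref{lem:gtgm}, giving $0 \leq g_t(x) - g^m_t(x) \leq \sqrt{2L g_t(x)\,x\epsilon} + Lx\epsilon$. For the propagated term, let $P^m_{x,t}$ attain the inner maximum in $g^m_t(x)$; since $P^m_{x,t}$ is also feasible for the inner maximum defining $\hat g_t(x)$, the max-contraction inequality yields $g^m_t(x) - \hat g_t(x) \leq \E^m_{x,t}\bigl[d_{t+1}(x(1+R))\bigr]$.

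The final step is to iterate the resulting recursion
\[
d_t(x) \;\leq\; \sqrt{2L g_t(x)\,x\epsilon} + Lx\epsilon + \E^m_{x,t}\bigl[d_{t+1}(x(1+R))\bigr]
\]
under the artificial measure $P^m_{S_0}$ built from the stepwise kernels $P^m_{x,t}$. Because $\E^m_{x,t}[R] = 0$, the price process $\{S_t\}$ is a martingale under $P^m_{S_0}$, so $\E^m_{S_0}[S_t] = S_0$; combined with $g_t(S_t) \leq L S_t$, this telescopes to $d_0(S_0) \leq C_1 \sqrt{\epsilon}\,L\tau S_0$, verbatim as in the European proof, and setting $\epsilon = c\delta^2/(L^2\tau^2)$ delivers the claimed bound $\delta S_0$. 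The main subtle point I expect to verify carefully is the interaction between the outer $\max\{\cdot, g(x)\}$ and the martingale telescoping: the max-contraction inequality is exactly what prevents the early-exercise correction from inflating the error budget, and once this and the Lipschitz/growth preservation are checked, the argument proceeds essentially line-by-line with Section~\ref{sec:non-convex}.
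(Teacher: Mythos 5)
Your proposal is correct and follows exactly the route the paper intends: the paper gives no separate proof of this corollary, stating only that ``the performance can be analyzed by using the same techniques presented in Section~\ref{sec:non-convex},'' and your argument is precisely that port of Theorem~\ref{thm:main}. In fact you supply the one detail the paper leaves implicit --- the contraction inequality $0 \leq \max(A_1,c)-\max(A_2,c) \leq A_1-A_2$ for $A_1 \geq A_2$, which lets both the local error (Lemma~\ref{lem:gtgm}) and the propagated error pass through the outer early-exercise maximum unchanged --- and the rest (Lipschitz and growth preservation, the hybrid decomposition, the martingale telescoping under $P^m_{S_0}$) matches the paper's European analysis line by line.
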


We remark that one arguably unsatisfying feature in our model for the upper bound of American options is that the nature can only exercise the option at discrete rounds, \ie at the times when the trader can execute trade decisions. Appendix~\ref{sec:continuous exercise} discusses how this issue can be addressed.


\newpage
{\small
\bibliographystyle{abbrv}	
\bibliography{option}
}

\appendix
\section{Binomial tree model}\label{sec:warmup}
This section analyzes the standard binomial tree model and connects its analysis to our adversary model. 

\begin{figure}
\centering
\ifnum\spicture=1
\scalebox{0.8}{
\begin{tikzpicture}[sloped]
  \node (a) at ( 0,0) [bag] {$A$\\$\$10.0$};
  \node (b) at ( 4,-1.5) [bag] {$C$ \$9.0};
  \node (c) at ( 4,1.5) [bag] {$B$ \$11.0};\
  \draw [->] (a) to node [below] {$-10\%$} (b);
  \draw [->] (a) to node [above] {$+10\%$} (c);
\end{tikzpicture}
}
\else
A FIGURE OF A ONE-LEVEL BINOMIAL TREE.
\fi
\caption{A one step binomial tree model. The uncertainty set in this example is $\{-10\%, 10\%\}$. The price of the underlying asset is \$10. At the end of the game, the price can either move to \$11 or \$9. }\label{fig:binomial}
\end{figure}
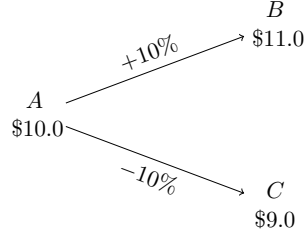

\myparab{Single-round case.} Suppose there is only one round of the game, \ie $\tau = 1$. Here the trader needs only to decide $\Delta$, the amount of the underlying asset $S$ to hold for hedging. In the standard single-round binomial model (see Chapter 12 in~\cite{hull2009options}), the stock price either goes up by a factor of $(1+u)$ or down by a factor of $(1-d)$ (see Figure~\ref{fig:binomial}). In the literature, it is typically assumed that the movement of $S$ is stochastic, i.e. with certain probability $S$ goes up and another probability it goes down. The idea of perfect hedging~\cite{CRR79} is to pick $\Delta$ such that the total payoff at time 1 is constant, regardless of the movement of $S$. In other words, set $\Delta$ that satisfies
$$g(S_0(1+u))-\Delta u=g(S_0(1-d))-\Delta(-d),$$
which gives $\Delta=(g(S_0(1+u))-g(S_0(1-d)))/(u+d)$. Under this hedging strategy $\Delta$, there is no risk for the trader to be compensated \begin{equation}
g(S_0(1+u))-\Delta u=g(S_0(1-d))-\Delta(-d)=\frac{d}{u+d}g(S_0(1+u))+\frac{u}{u+d}g(S_0(1-d)). \label{eqn:binomial price}
\end{equation}
Suppose the option price is different from \eqref{eqn:binomial price}, then an arbitrage opportunity must exist. If the price is higher, the trader shorts the option and longs $\Delta$ dollars' worth of the underlying asset, whereas if the price is lower, the trader longs the option and shorts the same amount of the underlying asset. Both cases lead to risk-free gain to the trader.

Let us now go back to our model described in Section~\ref{sec:model}, with an uncertainty set $\mathcal U=\{u,-d\}$. This is the same as the standard binomial model except that stochasticity of the underlying asset price is now replaced by adversarial movement. The upper bound \eqref{eqn:upper} becomes
\begin{equation}\label{eqn:1rdb}
\min_{\Delta}\max_{R \in \{u,-d\}}g(S_0(1+R)) - R \Delta.
\end{equation}
It is easy to observe that (\ref{eqn:1rdb}) reaches optimum when we set $\Delta$ such that
\begin{equation}\label{eqn:1stageresult}
g(S_0(1+u)) - \Delta u= g(S_0(1-d)) - \Delta (-d),
\end{equation}
or $\Delta=(g(S_0(1+u))-g(S_0(1-d)))/(u+d)$, leading to the same hedging strategy as the standard (stochastic) binomial model. The same argument works for the lower bound and gives rise to the same hedging strategy. We thus have our first basic conclusion: If the uncertainty set in our hedging game is binomial, the upper bound of the option price \emph{matches} the lower bound; moreover, this unique price is the same as the price concluded from the standard (stochastic) binomial model.
%

We also make another observation on the form of our optimal value \ie the equilibrium. Since the optimal hedging amount $\Delta$ for upper and lower bounds are both equal to the standard binomial model, their corresponding optimal values are both given by \eqref{eqn:binomial price}, which can be written as
\begin{equation}
\min_{\Delta}\max_{R \in \{u , -d\}}g(S_0(1+R)) - R \Delta = \max_{\Delta}\min_{R \in \{u , -d\}}g(S_0(1+R)) - R \Delta=\E_{\Sigma}[g(S_0(1+R))]
\end{equation}
where $\Sigma$ assigns probability $d/(u+d)$ to upward movement $u$ and probability $u/(u+d)$ to downward movement $d$. It is easily observed that $\Sigma$ is ``risk-neutral'', \ie $\E_{R \leftarrow \Sigma}[R] = 0$. Hence in this particular case our upper and lower bounds of the option price are both characterized by the same risk-neutral probability measure on the option payoff. We will see that in more general scenarios, the option price bounds can still be characterized by risk-neutral measures, but the measures can be different from each others, and also they can be both different from the measure used in standard binomial pricing. Note that these risk-neutral measures act as analytical artifacts and do not have a real-world correspondence; they will play a key role in our analysis in the rest of this paper.
\\

\myparab{Multi-round case.} Keeping in mind the result above for the single-round case, our price bounds for the multi-round setting can be obtained through straightforward backward induction (dynamic programming). Suppose the game has $\tau$ rounds and each round entails either an up or a down movement for the stock, i.e. $\mathcal U_t=\{u,-d\}$. The upper bound \eqref{eqn:upper} can be written as

\begin{eqnarray}
&&\min_{\Delta_t, t \in [\tau]}\max_{R_t \in \{u,-d\}, t \in [\tau]}g\left(S_0\prod_{t = 1}^{\tau}(1+R_t)\right)- \sum_{t = 1}^{\tau}(R_t\Delta_t)\notag\\
&=&\min_{\Delta_1}\max_{R_1\in\{u,-d\}}\{\cdots\min_{\Delta_{\tau-1}}\max_{R_{\tau-1}\in\{u,-d\}}\{\cdots\min_{\Delta_\tau}\max_{R_\tau\in\{u,-d\}}\{g\left(S_{\tau-1}(1+R_\tau)\right)-R_\tau\Delta_\tau\}\notag\\
& & \quad \quad \quad \quad \quad \quad \quad \quad -R_{\tau-1}\Delta_{\tau-1}\}\cdots-R_1\Delta_1\} \label{eqn:DP}
\end{eqnarray}
The quantity \eqref{eqn:DP} can be solved by iteratively computing $g_\tau(S)=g(S)$ and
$$g_{t-1}(S)=\min_{\Delta_t}\max_{R_t\in\{u,-d\}}g_t(S(1+R_t))-R_t\Delta_t.$$
for $t=\tau,\tau-1,\ldots,1$. By our result above, the solutions to each of these minimax problems are given by $\E_{\Sigma}[g_t(S(1+R))]$. Hence our upper bound again matches the lower bound, and they both match the price according to the standard binomial tree model.

Figure~\ref{fig:2stepbtree} illustrates an example with two rounds. $A$ denotes the state at time 0, $B$ and $C$ at time 1 and so on. We first compute the price of the option at $B$, by analyzing the one-stage game assuming $B$ is the initial time point. The same argument applies to state $C$. We then price the option with initial state $A$ by taking into account the maximal gain the trader can make in the future when the next states are in $B$ or $C$.
%
Observe that there are in total 3 states in this example, instead of $2^2 = 4$, at the end of the second round. In general, the number of states at the final level of a binomial tree grows linearly with the depth of the tree, which makes it a feasible device for option pricing.


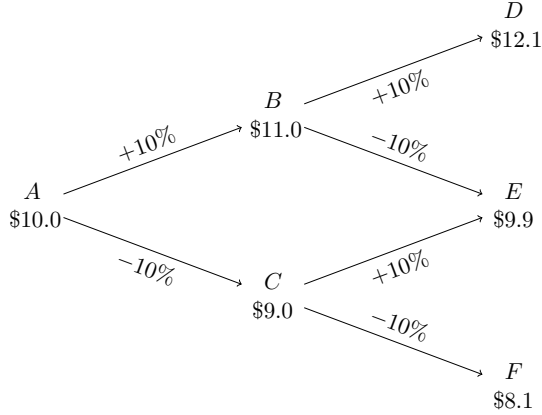
\begin{figure}
\ifnum\spicture=1
\centering
\scalebox{0.8}{
\begin{tikzpicture}[sloped]
  \node (a) at ( 0,0) [bag] {$A$\\$\$10.0$};
  \node (b) at ( 4,-1.5) [bag] {$C$ \$9.0};
  \node (c) at ( 4,1.5) [bag] {$B$ \$11.0};
  \node (d) at ( 8,-3) [bag] {$F$ \$8.1};
  \node (e) at ( 8,0) [bag] {$E$ \$9.9};
  \node (f) at ( 8,3) [bag] {$D$ \$12.1};
  \draw [->] (a) to node [below] {$-10\%$} (b);
  \draw [->] (a) to node [above] {$+10\%$} (c);
  \draw [->] (c) to node [below] {$+10\%$} (f);
  \draw [->] (c) to node [above] {$-10\%$} (e);
  \draw [->] (b) to node [below] {$+10\%$} (e);
  \draw [->] (b) to node [above] {$-10\%$} (d);
\end{tikzpicture}
}
\else
A FIGURE OF TWO-LEVEL BINOMIAL TREE.
\fi
\caption{A two-step binomial tree model.}\label{fig:2stepbtree}
\end{figure}

\myparab{Binomial tree at the limit.} Since our adversary binomial tree model is in effect the same as the standard model, the continuous-time limit converges to the Black-Scholes price under appropriate scaling. The following result is a rephrase of the well-known result in the literature~\cite{hull2009options}:

\begin{proposition}Consider the $\tau$-round European option game. Let the uncertainty set for each round
be $\mathcal U^{\tau} = \{u/\sqrt{\tau},-d/\sqrt{\tau}\}$. Let $g(\cdot)$ be an arbitrary Lipschitz continuous payoff function. The upper and lower bounds of the option with respect to $\mathcal U^{\tau}$ are the same for any $\tau$ and they both converge to the Black-Scholes price as $\tau\to\infty$.
\end{proposition} 
\section{Proof of Proposition~\ref{prop:dual}}\label{asec:dual}
To illustrate the key idea in our analysis, let us start with analyzing a ``discrete" version of the problem, \ie let $\mathcal U = \{r_1, r_2, ..., r_n\}$ be a discrete set on $\mathbb{R}$. We can write $\min_{\Delta}\max_{R \in \mathcal U}g(S_0(1+R)) - R \Delta$ as the following linear program (LP):
{{
\begin{equation}
\begin{array}{ll}
\mbox{min}&p\\
\text{s.t.}&g(S_0(1+r_i))-r_i\Delta\leq p \text{\ \ for $i\in[n]$}
\end{array}
\mbox{ $\equiv$ }
\begin{array}{ll}
\mbox{min}&p\\
\text{s.t.}&p+r_i\Delta\geq g(S_0(1+r_i))\text{\ \ for $i\in[n],$}
\end{array} \label{eqn:primal}
\end{equation}}}
where the decision variables are $p$ and $\Delta$. The formulations in \eqref{eqn:primal} follow
simply by the definition of minimax problem, with the optimal $p$ representing the option price's upper bound. Now we invoke the standard primal-dual theorem for LP on \eqref{eqn:primal} and obtain the following equivalent LP in dual form:

%
{\small
\begin{equation}
\begin{array}{ll}
\mbox{maximize}&\sum_{i\in[n]}w_ig(S_0(1+r_i))\\
\text{subject to}&\sum_{i\in[n]}w_i=1\\
&\sum_{i\in[n]}w_ir_i=0\\
&w_i\geq0\text{\ \ for $i\in[n]$}
\end{array} \label{dual}
\end{equation}
}
Here $w_i$'s are the decision variables. Observe that $\{w_i\}_{i \in [n]}$ can be interpreted as a probability distribution on the uncertainty set $\mathcal U$ since $\sum w_i=1$. Call this distribution $P_f$. This is a risk-neutral probability distribution since the expected return $\E_{P_f}[R]=\sum w_ir_i=0$. Moreover, note that the objective function under this probability interpretation can be rewritten as $\sum_{i\in[n]}w_ig(S(1+r_i)) = \E_{R\leftarrow P_f}[g(S(1+R))]$. We thus have proved (\ref{eqn:char}).

We now generalize the above arguments to general uncertainty set case, with slightly more function space technicalities. For general uncertainty set $\mathcal U$, we can generalize (\ref{eqn:primal}) as:
\begin{equation}
\begin{array}{ll}
\mbox{minimize}&p\\
\text{subject to}&p+r\Delta\geq g(S_0(1+r)) \text{\ \ for $r \in \mathcal U$}
\end{array} \label{eqn:cprimal}
\end{equation}
Next, note that the dual cone of $\mathcal C^+(\mathcal U)$, the set of non-negative continuous functions on $\mathcal U$, is $P^+(\mathcal U)$, the set of positive measures on $\mathcal U$. Since $g$ is assumed to be continuous, the Lagrangian of \eqref{eqn:cprimal} is
$$L(p,\Delta,w)=p+\int_{\mathcal U}(g(S_0(1+r))-r\Delta-p)dw(r)$$
where $w(\cdot)\in\mathcal P^+(\mathcal U)$ (see~\cite{luenberger1997optimization}). The dual function is defined as
\begin{align}
\ell(w)&=\inf_{p,\Delta}\left\{p+\int_{\mathcal U}(g(S_0(1+r))-r\Delta-p)dw(r)\right\}\notag\\
&=\inf_{p,\Delta}\left\{\int_{\mathcal U}g(S_0(1+r))dw(r)+(1-w(\mathcal U))p+\int_{\mathcal U}rdw(r)\Delta\right\} \label{eqn:dual function}
\end{align}
Suppose $w(\cdot)$ does not satisfy either $w(\mathcal U)=1$ or $\int_{\mathcal U}rdw(r)=0$, then one can always find $p$ or $\Delta$ that gives arbitrarily large objective value in \eqref{eqn:dual function}. Hence the dual problem $\max_{w\in\mathcal{P}^+(\mathcal U)}\ell(w)$ can be written as
%
\begin{equation}
\begin{array}{ll}
\mbox{maximize}&\int_{\mathcal U}g(S_0(1+r))dw(r)\\
\text{subject to}&\int_{\mathcal U}rdw(r)=0\\
&w(\mathcal U)=1\\
&w\in\mathcal{P}^+(\mathcal U)
\end{array} \label{eqn:cdual}
\end{equation}

Finally, it is easy to see that the constraint set in \eqref{eqn:cprimal} has non-empty interior (by picking large enough $p$ for example). Hence strong duality holds and the dual optimal value in (\ref{eqn:cdual}) equals the primal counterpart (see \eg Chapter 8 in~\cite{luenberger1997optimization}). By identifying $w$ as a probability measure on $\mathcal U$, we conclude that \eqref{eqn:cdual} is the same as \eqref{eqn:char}. This completes our proof.

\section{Hedging games with convex payoff functions}\label{asec:convex}
This section presents results for games with convext payoffs. We shall start with a corollary of
Proposition~\ref{prop:dual}, regarding one-round games.

Specifically, we will show that it is sufficient to consider risk-neutral probability distributions that have point masses concentrated only on the extremes, namely $-\uzeta$ and $\ozeta$, \ie
\begin{corollary}
When the payoff function $g(\cdot)$ is convex and $\mathcal U=[-\uzeta,\ozeta]$, we have
\begin{equation}
\min_{\Delta}\max_{R \in \mathcal U}g(S_0(1+R)) - R \Delta =  \E_{P_f}[g(S_0(1+R))] \label{eqn:extremes}
\end{equation}
where $P_f\in\mathcal{P}(\{-\underline{\zeta},\overline{\zeta}\})$ and $\E_{P_f}[R] = 0$.
\label{cor:reduction}
\end{corollary}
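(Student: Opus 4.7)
The plan is to invoke Proposition~\ref{prop:dual} and then reduce the maximization over all risk-neutral probability measures on $[-\uzeta,\ozeta]$ to the single two-point measure supported on the endpoints, using convexity of $g$.

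First I would rewrite the LHS of \eqref{eqn:extremes}, via Proposition~\ref{prop:dual}, as
\[
\max_{\substack{P\in\mathcal P([-\uzeta,\ozeta])\\ \E_P[R]=0}} \E_P[g(S_0(1+R))].
\]
Note that the two-point measure $P_f$ with $P_f(\{\ozeta\})=\uzeta/(\uzeta+\ozeta)$ and $P_f(\{-\uzeta\})=\ozeta/(\uzeta+\ozeta)$ is feasible (it is supported in $[-\uzeta,\ozeta]$ and has mean zero), so it is clear that the right-hand side of \eqref{eqn:extremes} is a lower bound for the maximum. The task is therefore to show that $P_f$ also upper-bounds every feasible measure.

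The key step is a pointwise convexity bound. For any $r\in[-\uzeta,\ozeta]$, write $r$ as a convex combination of the endpoints,
\[
r \;=\; \lambda(r)\,\ozeta + (1-\lambda(r))\,(-\uzeta), \qquad \lambda(r) \;\triangleq\; \frac{r+\uzeta}{\uzeta+\ozeta}\in[0,1].
\]
Since $g$ is convex and the map $r\mapsto S_0(1+r)$ is affine, $g(S_0(1+\cdot))$ is convex, giving
\[
g(S_0(1+r)) \;\leq\; \lambda(r)\,g(S_0(1+\ozeta)) + (1-\lambda(r))\,g(S_0(1-\uzeta)).
\]
Now take expectation under any feasible $P$ (so $\E_P[R]=0$). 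Linearity of $\lambda$ yields $\E_P[\lambda(R)] = (\E_P[R]+\uzeta)/(\uzeta+\ozeta) = \uzeta/(\uzeta+\ozeta)$, and hence
\[
\E_P\!\left[g(S_0(1+R))\right] \;\leq\; \tfrac{\uzeta}{\uzeta+\ozeta}\,g(S_0(1+\ozeta)) + \tfrac{\ozeta}{\uzeta+\ozeta}\,g(S_0(1-\uzeta)) \;=\; \E_{P_f}[g(S_0(1+R))].
\]
This is valid for every risk-neutral $P$ on $[-\uzeta,\ozeta]$, so the maximum equals $\E_{P_f}[g(S_0(1+R))]$, proving \eqref{eqn:extremes}.

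I do not expect any serious obstacle: the only subtlety is verifying that $P_f$ is itself admissible (trivial: two-point mean-zero measure with support in $\mathcal U$), and noting that the risk-neutral constraint $\E_P[R]=0$ is exactly what turns the expected value of the affine function $\lambda(R)$ into the correct endpoint weight $\uzeta/(\uzeta+\ozeta)$. If desired, one can also remark that when $g$ is strictly convex the inequality is strict unless $P$ is already supported on $\{-\uzeta,\ozeta\}$, so $P_f$ is the unique optimizer; but this stronger statement is not needed for the corollary.
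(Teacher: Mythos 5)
Your proof is correct. It takes the route that the paper explicitly mentions but declines to carry out: the paper notes that Corollary~\ref{cor:reduction} can be proved ``by analyzing either the primal program \eqref{eqn:cprimal} \dots or the characterization \eqref{eqn:char} directly,'' and then chooses the primal route, showing that in the LP every constraint $p+r\Delta\geq g(S_0(1+r))$ for interior $r$ is implied by the two endpoint constraints (via the same convex-combination decomposition $r=\underline{q}(-\uzeta)+\overline{q}\ozeta$ that you use), reducing to a two-constraint LP whose dual is the two-point risk-neutral measure. You instead work entirely on the dual side: having invoked Proposition~\ref{prop:dual} for the full interval, you bound $\E_P[g(S_0(1+R))]$ for \emph{every} feasible risk-neutral $P$ by the pointwise Jensen-type inequality plus the observation that the risk-neutral constraint fixes $\E_P[\lambda(R)]=\uzeta/(\uzeta+\ozeta)$. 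Both arguments hinge on the identical convexity inequality, so the difference is where it is applied. Your version is slightly more economical given that Proposition~\ref{prop:dual} is already established for general bounded uncertainty sets (no need to re-derive duality for the reduced program), and it yields as a free by-product the uniqueness of the optimizer under strict convexity; the paper's version is more elementary in that it only ever needs the finite-support LP duality. Either is a complete proof of the corollary.
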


\begin{proof}
We will show that
\begin{equation}
\min_{\Delta}\max_{R \in \mathcal U}g(S_0(1+R)) - R \Delta =  \max_{\substack{P_f \in \mathcal P(\{-\uzeta,\ozeta\}),\\ \E_{R\leftarrow P_f}[R] = 0}}\E_{P_f}[g(S_0(1+R))] \label{eqn:extremes1}
\end{equation}
where $\mathcal{P}(\{-\underline{\zeta},\overline{\zeta}\})$ is the set of probability distributions that have support only on $-\underline{\zeta}$ and $\overline{\zeta}$. Since $P_f\in\mathcal{P}(\{-\underline{\zeta},\overline{\zeta}\})$ and $\E_{P_f}[R]=0$ uniquely defines $P_f$, the $\max$ operator is redundant. From there we can conclude that (\ref{eqn:extremes}) can be rewritten as
$\E_{P_f}[g(S_0(1+R))]$, where $P_f\in\mathcal{P}(\{-\underline{\zeta},\overline{\zeta}\})$ and $\E_{P_f}[R] = 0$.

We can prove \eqref{eqn:extremes1} by analyzing either the primal program \eqref{eqn:cprimal} in the proof of Proposition \ref{prop:dual} or the characterization \eqref{eqn:char} directly. Let us consider the former as this is more elementary. We argue that, in the case of convex $g$ and $\mathcal U=[-\uzeta,\ozeta]$, the program \eqref{eqn:cprimal} is equivalent to
\begin{equation}
\begin{array}{ll}
\mbox{minimize}&p\\
\text{subject to}&p-\uzeta\Delta\geq g(S(1-\uzeta))\\
&p+\ozeta\Delta\geq g(S(1+\ozeta))
\end{array} \label{eqn:reduced}
\end{equation}
In other words, all other constraints $p+r\Delta\geq g(S(1+r))$ for $r\in(-\uzeta,\ozeta)$ are redundant. To prove this, consider any $-\uzeta<r<\ozeta$. One can write $r=\underline{q}(-\uzeta)+\overline{q}\ozeta$ where $\underline{q}+\overline{q}=1$, $\underline{q},\overline{q}>0$. Suppose the inequalities $p-\uzeta\Delta\geq g(S(1-\uzeta))$ and $p+\ozeta\Delta\geq g(S(1+\ozeta))$ hold. Then
\begin{align*}
p+r\Delta&=\underline{q}(p-\uzeta\Delta)+\overline{q}(p+\ozeta\Delta)\\
&\geq\underline{q}g(S(1-\uzeta))+\overline{q}g(S(1+\ozeta))\\
&\geq g(S(1+r))\text{\ \ (by the convexity of $g$)}
\end{align*}
Therefore, all other constraints are redundant. Now by the same argument as the proof of Proposition \ref{prop:dual} (for discrete uncertainty set), we immediately get \eqref{eqn:extremes1}. The other statement in the corollary follows trivially.
\end{proof}

\subsection{Analysis for the multi-round model}\label{sec:convexmultiround}
For our multi-round game, the trader has the discretion to choose $\tau$ rounds of hedging amount $\{\Delta_t\}_{t\in[\tau]}$ against the nature who controls the $\tau$ rounds of returns $\{R_t\}_{t\in[\tau]}$. We assume a uniform uncertainty set $\mathcal U=[-\uzeta,\ozeta]$ across time.
\ifnum \full=1
The upper bound of the option price is then
\begin{eqnarray}
&&\min_{\Delta_t, t \in [\tau]}\max_{R_t \in \mathcal U, t \in [\tau]}g\left(S_0\prod_{t = 1}^{\tau}(1+R_t)\right)- \sum_{t = 1}^{\tau}(R_t\Delta_t)\notag\\
&=&\min_{\Delta_1}\max_{R_1\in\mathcal U}\{\cdots\min_{\Delta_{\tau-1}}\max_{R_{\tau-1}\in\mathcal U}\{\cdots\min_{\Delta_\tau}\max_{R_\tau\in\mathcal U}\{g\left(S_{\tau-1}(1+R_\tau)\right)-R_\tau\Delta_\tau\}\notag \\
& & \qquad \qquad\qquad \qquad -R_{\tau-1}\Delta_{\tau-1}\}\cdots-R_1\Delta_1\} \label{eqn:recur}
\end{eqnarray}
\fi

\ifnum \full=0
The following lemma is a consequence of the result 
in Proposition~\ref{prop:dual} (see the full version for the analysis)
\footnote{}:
\fi

\ifnum \full=1
The following lemma is a consequence of the result 
in Proposition~\ref{prop:dual}.
\fi

\begin{lemma}\label{lem:discrete}Consider the $\tau$-round hedging game with the same uncertainty set $\mathcal U=[-\uzeta,\ozeta]$ across time and convex payoff function $g(\cdot)$. The upper bound of the option price is $\E_{P_f}[g(S_0\prod_{t=1}^\tau(1+R_t))]$, where $P_f$ is the unique risk-neutral probability distribution on $\{-\uzeta,\ozeta\}$ for all $\{R_t\}_{t\in[\tau]}$, \ie $P_f\in\mathcal(\{-\uzeta,\ozeta\})$ and $\E_{P_f}[R_t]=0$ for all $t\in[\tau]$. 
\end{lemma}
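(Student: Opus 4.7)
The plan is to proceed by backward induction on the nested minimax in the multi-round formulation \eqref{eqn:recur}. Define the value functions
\[
g_\tau(x)=g(x),\qquad g_{t-1}(x)=\min_{\Delta}\max_{R\in\mathcal U} g_t(x(1+R))-R\Delta
\]
for $t=\tau,\tau-1,\ldots,1$, so that peeling off the innermost min-max one layer at a time identifies the upper bound in \eqref{eqn:recur} with $g_0(S_0)$. The core claim to prove by downward induction in $t$ is twofold: (i) each $g_t$ is convex in $x$; (ii) $g_t(x)=\E_{P_f}[g_{t+1}(x(1+R))]$, where $P_f$ is the unique risk-neutral two-point measure on $\{-\underline{\zeta},\overline{\zeta}\}$ (explicitly, $P_f(-\underline{\zeta})=\overline{\zeta}/(\underline{\zeta}+\overline{\zeta})$ and $P_f(\overline{\zeta})=\underline{\zeta}/(\underline{\zeta}+\overline{\zeta})$).

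The base case $t=\tau$ is immediate since $g_\tau=g$ is convex by assumption. For the inductive step, assume $g_{t+1}$ is convex. Applied with $g_{t+1}$ in place of the payoff function, Corollary \ref{cor:reduction} yields (ii) at level $t$. To close the induction I still need (i): that $g_t$ remains convex. This follows because, by (ii), $g_t(x)$ is a fixed convex combination of the two functions $x\mapsto g_{t+1}(x(1-\underline{\zeta}))$ and $x\mapsto g_{t+1}(x(1+\overline{\zeta}))$; each is convex in $x$ as a composition of a convex function with an affine map, and convex combinations preserve convexity.

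Unrolling the recursion then gives
\[
g_0(S_0)=\E\!\left[g_1(S_0(1+R_1))\right]=\E\!\left[\E\!\left[g_2(S_0(1+R_1)(1+R_2))\mid R_1\right]\right]=\cdots=\E\!\left[g\!\left(S_0\prod_{t=1}^\tau(1+R_t)\right)\right],
\]
where each $R_t$ is drawn independently from $P_f$, and independence is a consequence of the fact that the inner measure produced by Corollary \ref{cor:reduction} at level $t$ depends only on $\mathcal U$ (not on the state $x$ or on the earlier $R_s$'s).

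I do not anticipate a serious obstacle: the only nontrivial point is the preservation of convexity in the inductive step, and that is handled by the observation that the two-point risk-neutral combination arising from Corollary \ref{cor:reduction} is affine in the payoff and uses only affine transformations of $x$. Everything else is bookkeeping to align the nested min-max in \eqref{eqn:recur} with the recursion for $g_t$ and to identify the resulting product measure as the iid $P_f$ measure on $(R_1,\ldots,R_\tau)$.
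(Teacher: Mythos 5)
Your proposal is correct and follows essentially the same route as the paper: both set up the backward-induction (dynamic programming) recursion for the value functions $g_t$, invoke Corollary~\ref{cor:reduction} at each step to obtain the two-point risk-neutral representation, and close the induction by observing that the resulting convex combination of affine precompositions of a convex function is again convex (the paper packages this last step as a separate Fact). The only difference is cosmetic: you make explicit the unrolling into a product (iid) measure, which the paper leaves implicit.
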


\begin{proof}
The proof is a direct application of dynamic programming, coupled with the preservation of convexity across iterations of the value functions. First, observe the following:

\begin{fact}\label{lem:convex2convex} Let $h(\cdot)$ be an arbitrary convex function. Then $\E_P[h(S(1+R))]$ is convex in $S$, where $P$ is an arbitrary distribution for $R$.
\end{fact}

The statement is immediate by using linearity of expectations and the assumption that $h(\cdot)$ is convex.

Next, we can write \eqref{eqn:recur} as a dynamic program, given by $g_\tau(x)=g(x)$ and $$g_{t-1}(x)=\min_{\Delta_t}\max_{R_t\in\mathcal U}g_t(x(1+R_t))-R_t\Delta_t$$
for $t=\tau,\tau-1,\ldots,1$. The price upper bound is then given by $g_0(S_0)$. We prove by induction that $g_t(\cdot)$ are all convex and $g_{t-1}(x)=\E_{P_f}[g_t(x(1+R_t))]$. The statement is obvious for $g_\tau(\cdot)$. Now, supposing $g_t(\cdot)$ is convex, we have from Corollary~\ref{cor:reduction} that $g_{t-1}(x)=\E_{P_f}[g_t(x(1+R_t))]$, and from Lemma \ref{lem:convex2convex} that $g_{t-1}(\cdot)$ is convex. Hence the induction holds.

\end{proof}

\myparab{Explicit hedging strategy.} When the uncertainty sets are uniform intervals and the payoff function is convex, the optimal hedging strategy is straightforward (and is identical to the binomial model): $\Delta_t = \frac{g(S_{t-1}(1+\ozeta)) - g(S_{t-1}(1-\uzeta))}{\uzeta + \ozeta}$ dollar on $S$ for each round $t$.

\myparab{Non-uniform uncertainty sets.} When the uncertainty sets are non-uniform, say $\mathcal U_t \triangleq [-\uzeta_t, \ozeta_t]$, Corollary~\ref{cor:reduction}, Lemma~\ref{lem:convex2convex} and the form of the hedging strategy all still hold with the natural modification. This means at each round we need only consider the two points $\{-\uzeta_t, \ozeta_t\}$.
However, from a computational point of view, the number of states we need to keep track of in the backward induction could grow exponentially in $\tau$. 
Thus, a naive application of dynamic programming algorithm will not be efficient. In Appendix~\ref{a:hard} we show that exact computation of the option's upper bound with non-uniform uncertainty sets is $\#P$-hard, and in Section~\ref{sec:non-convex} we shall design an approximation algorithm to solve the problem.

\myparab{Convergence.} An immediate implication of Corollary~\ref{cor:reduction} and Lemma~\ref{lem:discrete} is that the upper bound of the option price, when the uncertainty set is $\mathcal U=[-\uzeta,\ozeta]$ across time steps, collides with the binomial tree model that either goes up by $1+\ozeta$ or down by $1-\uzeta$ at each step.
%

It is known that the price from the binomial tree model converges to Black-Scholes~\cite{CRR79}, as the number of time steps increases and the interval length decreases at a rate equal to the square root of the number of time steps. This implies the convergence of our upper bound, with uncertainty sets $\mathcal U_t^\tau=[-\uzeta/\sqrt{\tau},\ozeta/\sqrt{\tau}]$, also to the Black-Scholes price. Here we state a convergence result that is more general: as long as the (possibly non-uniform) collection of uncertainty sets follow a ``bounded quadratic variation" condition, we obtain convergence to the Black-Scholes price for European-type options:

\begin{corollary}\label{cor:limit}Consider the $\tau$-round hedging game with Lipschitz continuous convex payoff function $g(\cdot)$. Let $\left\{\{\mathcal U^{\tau}_t\}_{t \leq \tau}\right\}_{\tau \geq 1}$ be the sequence of uncertainty sets and let $\mathcal U^{\tau}_t = [-\uzeta^{\tau}_t, \ozeta^{\tau}_t]$. If $\lim_{\tau\to\infty}\sum_{t=1}^\tau\uzeta^\tau_t\ozeta^\tau_t=\nu$ for a positive number $\nu$, and $\sup_{t\in[\tau]}\max\{\uzeta^\tau_t,\ozeta^\tau_t\}\to0$, the upper bound of the European option price converges to $\E[g(S_0\exp\{\sqrt{\nu}N(0,1)-\nu/2\})]$, where $N(0,1)$ is standard Gaussian variable, \ie it converges to the option price for a geometric Brownian motion with zero drift.
\end{corollary}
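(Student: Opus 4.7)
The plan is to reduce the statement to a classical central limit theorem for triangular arrays of bounded independent random variables, applied to the logarithmic returns under the risk-neutral measure identified in Lemma~\ref{lem:discrete}. First, I would invoke Lemma~\ref{lem:discrete} to write the upper bound as $\E_{P_f}[g(S_0\prod_{t=1}^\tau(1+R_t))]$, where under $P_f$ the $R_t$ are independent across $t$ and each $R_t$ takes the value $\ozeta^\tau_t$ with probability $p_t=\uzeta^\tau_t/(\uzeta^\tau_t+\ozeta^\tau_t)$ and $-\uzeta^\tau_t$ with probability $1-p_t$, so that $\E_{P_f}[R_t]=0$ and $\E_{P_f}[R_t^2]=\uzeta^\tau_t\ozeta^\tau_t$.

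Next, I would pass to the log-price $X_\tau \triangleq \sum_{t=1}^\tau \log(1+R_t)$ and show that $X_\tau \Rightarrow \sqrt{\nu}\,N(0,1)-\nu/2$ in distribution. Using the Taylor expansion $\log(1+x)=x-x^2/2+O(x^3)$ valid for $|x|$ small, and exploiting the hypothesis $\sup_{t\leq\tau}\max\{\uzeta^\tau_t,\ozeta^\tau_t\}\to 0$, the mean of $X_\tau$ equals
\[
\sum_{t=1}^\tau \E_{P_f}[\log(1+R_t)] = -\tfrac{1}{2}\sum_{t=1}^\tau \uzeta^\tau_t\ozeta^\tau_t + o(1) \longrightarrow -\nu/2,
\]
and its variance equals $\sum_{t=1}^\tau \uzeta^\tau_t\ozeta^\tau_t + o(1) \to \nu$. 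Since the summands $\log(1+R_t)$ are uniformly bounded by a sequence tending to $0$, Lindeberg's condition for triangular arrays is trivially satisfied, giving $X_\tau \Rightarrow N(-\nu/2,\nu)$ by the Lindeberg--Feller CLT.

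Finally, I would transfer the weak convergence of $X_\tau$ into convergence of $\E_{P_f}[g(S_0 e^{X_\tau})]$ toward $\E[g(S_0 e^{\sqrt{\nu}N(0,1)-\nu/2})]$. Here the Lipschitz hypothesis on $g$ is essential: the boundedness of $\max_t\{\uzeta^\tau_t,\ozeta^\tau_t\}$ together with $\sum \uzeta^\tau_t\ozeta^\tau_t \to \nu$ gives a uniform-in-$\tau$ bound on $\E_{P_f}[S_0 e^{X_\tau}]$ (indeed $S_0 e^{X_\tau} = S_0\prod(1+R_t)$ is a nonnegative $P_f$-martingale with constant mean $S_0$), so Lipschitz continuity implies uniform integrability of $g(S_0 e^{X_\tau})$, and the continuous mapping theorem combined with a standard uniform-integrability argument yields convergence of expectations.

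The main obstacle I anticipate is the bookkeeping in the Taylor remainder: one must control $\sum_t \E_{P_f}[R_t^3]$ and higher terms uniformly under only the two hypotheses $\sum \uzeta^\tau_t \ozeta^\tau_t \to \nu$ and $\sup_t\max\{\uzeta^\tau_t,\ozeta^\tau_t\}\to 0$. The bound $|R_t|\leq\sup_s\max\{\uzeta^\tau_s,\ozeta^\tau_s\}\triangleq \eta_\tau$ gives $\sum_t \E_{P_f}[|R_t|^3]\leq \eta_\tau \sum_t \uzeta^\tau_t\ozeta^\tau_t \to 0\cdot\nu=0$, which is what makes the Taylor truncation and Lindeberg's condition work without further hypotheses; verifying this cleanly is the only delicate piece. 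Everything else is routine once the CLT is in place.
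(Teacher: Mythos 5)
Your overall route is the same as the paper's: reduce to the risk-neutral binomial representation via Lemma~\ref{lem:discrete}, prove a central limit theorem for the log-price of the triangular array, and then upgrade weak convergence to convergence of expectations via uniform integrability. Your CLT step is correct and in fact slightly tidier than the paper's: you apply Lindeberg--Feller directly to the centered summands $\log(1+R_t^\tau)-\E\log(1+R_t^\tau)$, whereas the paper splits $\log(1+R_t^\tau)$ into $R_t^\tau$, $-(R_t^\tau)^2/2$ and a cubic remainder, treats the three sums by Lindeberg--Feller, the weak law for triangular arrays, and a crude bound respectively, and recombines with Slutsky. Your control of the higher-order terms via $\sum_t\E|R_t^\tau|^3\le\eta_\tau\sum_t\uzeta_t^\tau\ozeta_t^\tau\to0$ is exactly the right estimate.

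The gap is in the final step. You claim that the uniform bound $\E_{P_f}[S_0e^{X_\tau}]=S_0$ (the martingale property) together with Lipschitz continuity of $g$ implies uniform integrability of $g(S_0e^{X_\tau})$. This implication is false: a family of nonnegative random variables with constant mean need not be uniformly integrable (take $Y_\tau=\tau$ with probability $1/\tau$ and $0$ otherwise, and $g(x)=x$), and since $S_0e^{X_\tau}$ is a \emph{different} martingale for each $\tau$, no single uniformly integrable closing variable is available for the whole family. What is needed is a moment bound of order strictly higher than one, and the paper supplies exactly this: by Lipschitz continuity of $g$ and independence of the $R_t^\tau$ under $P_f$,
\begin{equation*}
\E_{P_f}\Bigl[g\Bigl(S_0\prod_{t=1}^\tau(1+R_t^\tau)\Bigr)^2\Bigr]\le C_1\prod_{t=1}^\tau\E_{P_f}(1+R_t^\tau)^2+C_2=C_1\prod_{t=1}^\tau(1+\uzeta_t^\tau\ozeta_t^\tau)+C_2\le C_1\exp\Bigl\{\sum_{t=1}^\tau\uzeta_t^\tau\ozeta_t^\tau\Bigr\}+C_2,
\end{equation*}
which is bounded uniformly in $\tau$ because $\sum_t\uzeta_t^\tau\ozeta_t^\tau\to\nu$; a uniformly bounded second moment does imply uniform integrability. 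The fix is a one-line computation, but as written your uniform-integrability step does not go through.
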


We remark that if $\nu=\sigma^2T$ for a positive constant $\sigma^2$, then the condition $\lim_{\tau\to\infty}\sum_{t=1}^\tau\uzeta^\tau_t\ozeta^\tau_t=\nu$ imitates the quadratic variation of a Brownian motion, and the result recovers the Black-Scholes price. The uniform convergence condition $\sup_{t\in[\tau]}\max\{\uzeta^\tau_t,\ozeta^\tau_t\}\to0$ is necessary; there is no guarantee of Gaussian convergence in the limit if one uncertainty set keeps constant size as $\tau\to\infty$.

\begin{proof}[Proof of Corollary~\ref{cor:limit}]
From Lemma \ref{lem:discrete}, the upper bound of the option price, for any $\tau$, is $\E_{P_f}[g(S_0\prod_{t=1}^\tau(1+R_t^\tau))]$ where $R_t^\tau$ is the $t$-th round return in a $\tau$-round game, and $P_f$ (which also depends on $\tau$) is the unique probability measure that satisfies $\E_{P_f}[R_t^\tau]=0$ and has support $\{-\uzeta^\tau_t, \ozeta^\tau_t\}$ on $R_t^\tau$ for any $t\in[\tau]$. Simple calculation reveals that $P_f$ puts weights $\ozeta^\tau_t/(\uzeta^\tau_t+\ozeta^\tau_t)$ on $-\uzeta^\tau_t$ and $\uzeta^\tau_t/(\uzeta^\tau_t+\ozeta^\tau_t)$ on $\ozeta^\tau_t$. This implies that $\E_{P_f}[(R_t^\tau)^2]=\uzeta\ozeta$.

We shall prove that
\begin{equation}
\log(S_0\prod_{t=1}^\tau(1+R_t^\tau))=\log S_0+\sum_{t=1}^\tau R_t^\tau-\sum_{t=1}^\tau\frac{(R_t^\tau)^2}{2}+\sum_{t=1}^\tau\frac{\xi(R_t^\tau)}{3} \label{eqn:log S}
\end{equation}
where $\xi(R_t)$ satisfies $|\xi(R_t^\tau)|\leq C|R_t^\tau|^3$ for a constant $C$, converges in distribution to $\log S_0+N(0,1)-\nu/2$.

Now consider each term in \eqref{eqn:log S}, and we start with $\sum_{t=1}^\tau R_t^\tau$. Since $\sum_{t=1}^\tau \E_{P_f}[(R_t^\tau)^2]=\sum_{t=1}^\tau\uzeta^\tau_t\ozeta^\tau_t\to\nu$, and $\sum_{t=1}^\tau \E_{P_f}[(R_t^\tau)^2;|R_t^\tau|>\epsilon]$ is eventually zero as $\tau\to\infty$, for any $\epsilon>0$, by Lindeberg-Feller Theorem (p. 114, (4.5) in~\cite{durrett2010probability}), we have $\sum_{t=1}^\tau R_t^\tau\to\sqrt{\nu}N(0,1)$ in distribution.

Next consider the term $\sum_{t=1}^\tau\frac{(R_t^\tau)^2}{2}$. By our condition $\sup_{t\in[\tau]}\max\{\uzeta^\tau_t,\ozeta^\tau_t\}\to0$, since $\sum_{t=1}^\tau \Pr(|R_t^\tau|>\epsilon)$ is eventually zero as $\tau\to\infty$, for any $\epsilon>0$, and also $\sum_{t=1}^\tau \E_{P_f}(R_t^\tau)^4\leq\sum_{t=1}^\tau \E_{P_f}(R_t^\tau)^2\cdot\sup_{t\in[\tau]}(R_t^\tau)^2\to0$, the Weak Law for Triangular Arrays hold (p. 40, (5.5) in~\cite{durrett2010probability}), and $\sum_{t=1}^\tau(R_t^\tau)^2\to\sum_{t=1}^\tau \E_{P_f}[(R_t^\tau)^2]=\nu$ in probability.

For the last term, we have $|\sum_{t=1}^\tau\xi(R_t^\tau)|\leq C\sum_{t=1}^\tau(R_t^\tau)^2\sup_{t\in[\tau]}|R_t^\tau|\to0$ in probability. Combining all these terms, by Slutsky's Theorem (see \eg p. 19 in~\cite{serfling}), we conclude that $\log(S_0\prod_{t=1}^\tau(1+R_t^\tau))\to\sqrt{\nu}N(0,1)-\nu/2$ in distribution.

Lastly, we will conclude our result by checking a uniform integrability condition (see \eg p. 14 in~\cite{serfling}). First, since $g$ is continuous, the Continuous Mapping Theorem~\cite{billingsley} stipulates that $g(S_0\prod_{t=1}^\tau(1+R_t^\tau))$ converges to $g(\exp\{\sqrt{\nu}N(0,1)-\nu/2\})$ in distribution. We shall show that $\sup_{\tau}E_{P_f}[g(S_0\prod_{t=1}^\tau(1+R_t^\tau))^2]<\infty$. This will imply that $g(S_0\prod_{t=1}^\tau(1+R_t^\tau))$ is uniformly integrable, which will then conclude the convergence in $L_1$ of $g(S_0\prod_{t=1}^\tau(1+R_t^\tau))$ into $g(S_0\exp\{\sqrt{\nu}N(0,1)-\nu/2\})$ and conclude our result. To this end, note that
{\small
\begin{eqnarray*}
& &\E_{P_f}\left[g(S_0\prod_{t=1}^\tau(1+R_t^\tau))^2\right]\\
&\leq&C_1\E_{P_f}\left|S_0\prod_{t=1}^\tau(1+R_t^\tau)\right|^2+C_2{}\\
&&{}\text{\ \ (for some constants $C_1,C_2>0$, since $g$ is assumed to be Lipschitz continuous)}\\
&=&C_1S_0\prod_{t=1}^\tau E_{P_f}(1+R_t^\tau)^2+C_2\text{\ \ (by independence of $R_t^\tau$)}\\
&=&C_1S_0\prod_{t=1}^\tau(\uzeta_t^\tau\ozeta_t^\tau+1)+C_2\\
&\leq&C_1S_0\exp\{\sum_{t=1}^\tau\uzeta_t^\tau\ozeta_t^\tau\}+C_2\\
&<&C_3
\end{eqnarray*}}
for some constant $C_3>0$, by our assumption that $\sum_{t=1}^\tau\uzeta_t^\tau\ozeta_t^\tau\to\nu$.
\end{proof}





\subsection{Concave payoffs and lower bounds}\label{asec:concavelower}

\myparab{Concave payoffs.} We have a simple characterization of the hedging game's equilibrium when the payoff function is concave, under general conditions on $\mathcal U_t$:
\begin{corollary}
Consider a $\tau$-round game. When the payoff function $g(\cdot)$ is concave, with uncertainty sets $\{\mathcal U_t\}_{t\in[\tau]}$ each of which contains the point 0, the option price's upper bound is
\begin{eqnarray}
&&\min_{\Delta_t, t \in [\tau]}\max_{R_t \in \mathcal U_t, t \in [\tau]}g\left(S_0\prod_{t = 1}^{\tau}(1+R_t)\right)- \sum_{t = 1}^{\tau}(R_t\Delta_t)=g(S_0)
\end{eqnarray}
\end{corollary}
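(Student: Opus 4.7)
My plan is to reduce the multi-round minimax to its one-round version by backward induction, and then close the one-round case by combining the risk-neutral dual characterization in Proposition~\ref{prop:dual} with Jensen's inequality. I will write the left-hand side as the value of the dynamic program $g_\tau(x)=g(x)$ and, for $t=\tau,\ldots,1$,
\[
g_{t-1}(x)=\min_{\Delta}\max_{R\in\mathcal U_t}\bigl(g_t(x(1+R))-R\Delta\bigr),
\]
so that the quantity to bound equals $g_0(S_0)$. The goal is then to show by downward induction on $t$ that $g_t(x)=g(x)$ for every $t$ and every admissible $x\geq 0$, which at $t=0$, $x=S_0$ is exactly the claim.

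The base case $t=\tau$ holds by definition. For the inductive step, assume $g_{t+1}\equiv g$, so in particular $g_{t+1}$ is concave. Then Proposition~\ref{prop:dual} rewrites
\[
g_t(x)=\max_{\substack{P_f\in\mathcal P(\mathcal U_t)\\ \E_{P_f}[R]=0}}\E_{P_f}[g(x(1+R))].
\]
I will sandwich this between $g(x)$ from both sides. For the $\geq g(x)$ direction, the hypothesis $0\in\mathcal U_t$ makes the Dirac measure $\delta_0$ a feasible risk-neutral probability on $\mathcal U_t$, and it achieves $g(x(1+0))=g(x)$. For the $\leq g(x)$ direction, the map $R\mapsto g(x(1+R))$ is the composition of the concave $g$ with an affine function of $R$ (with $x\geq 0$), hence is concave in $R$, so Jensen's inequality gives
\[
\E_{P_f}[g(x(1+R))]\leq g\bigl(x(1+\E_{P_f}[R])\bigr)=g(x)
\]
for every feasible $P_f$. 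Combining the two directions yields $g_t(x)=g(x)$ and closes the induction.

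I do not anticipate a serious technical obstacle. The proof is the concave counterpart of Lemma~\ref{lem:discrete}, and it is actually cleaner: one does not need any preservation-of-concavity argument analogous to Fact~\ref{lem:convex2convex}, because the inductive hypothesis $g_{t+1}\equiv g$ inherits concavity directly from the payoff. The only mildly delicate point is to confirm that Proposition~\ref{prop:dual}'s hypotheses (boundedness of $\mathcal U_t$ and continuity of $g$) are implicitly inherited by the corollary. As a pleasant byproduct, the argument exposes the optimal trader strategy $\Delta_t\equiv 0$ with worst-case adversary response $R_t\equiv 0$, giving an operational interpretation of why the upper bound collapses to $g(S_0)$: there is simply no gain for the adversary in moving the price when the payoff is concave.
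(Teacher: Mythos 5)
Your proposal is correct and follows essentially the same route as the paper: reduce to the one-round problem via the backward-induction recursion, apply Proposition~\ref{prop:dual}, and use Jensen's inequality for the upper direction together with feasibility of the point mass at $0$ for the lower direction. The paper phrases the inductive step as ``concavity is preserved in every step,'' which is just the observation you make explicitly, namely that $g_{t}\equiv g$ inherits concavity directly, so the two arguments coincide.
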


\begin{proof}
Consider a single-round game \ie $\tau=1$. Recall Proposition~\ref{prop:dual}, which states that the upper bound is $\max_{P_f\in\mathcal{P}(\mathcal U):\E_{P_f}[R]=0}E_{P_f}[g(S_0(1+R))]$. By Jensen's inequality, $\E_{P_f}[g(S_0(1+R))]\leq g(S_0(1+\E_{P_f}[R]))=g(S_0)$ for any $P_f\in\mathcal{P}(\mathcal U)$ such that $\E_{P_f}[R]=0$. The result is then immediate for $\tau=1$.

The conclusion from multi-round game follows exactly the same as the argument for Lemma \ref{lem:discrete} (now concavity is preserved in every step in the backward induction).
\end{proof}

\myparab{Lower bounds.} By replacing $g$ with $-g$ in all analysis above, we immediately get results for lower bounds. The following is analogous to Proposition \ref{prop:dual}:
\begin{proposition} Let $\tau=1$ and $S_0$ be the initial price. Consider a bounded uncertainty set $\mathcal U$ and a continuous payoff function $g$. The lower bound of the option price is
\begin{equation}\label{eqn:charl}
\max_{\Delta}\min_{R \in \mathcal U}g(S_0(1+R)) - R \Delta =  \min_{\substack{P_f \in \mathcal P(\mathcal U),\\ \E_{R\leftarrow P_f}[R] = 0}}\E_{P_f}[g(S_0(1+R))],
\end{equation}
where $\mathcal{P}(\mathcal{U})$ denotes the set of all probability measures $P_f$ that have support on $\mathcal{U}$. The maximization problem in the right hand side above is over all such probability measures that satisfy $\E_{P_f}[R]=0$. \label{prop:dual lower} \end{proposition}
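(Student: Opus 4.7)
The plan is to reduce this proposition directly to Proposition~\ref{prop:dual} by the substitution $g \mapsto -g$, as hinted in the paragraph preceding the statement. Specifically, Proposition~\ref{prop:dual} applied with the continuous payoff $\tilde g(x) \triangleq -g(x)$ (which is still continuous, so the hypothesis is met) gives
\begin{equation*}
\min_{\Delta}\max_{R\in\mathcal U}\bigl(-g(S_0(1+R))\bigr) - R\Delta \;=\; \max_{\substack{P_f\in\mathcal P(\mathcal U)\\\E_{P_f}[R]=0}}\E_{P_f}\bigl[-g(S_0(1+R))\bigr].
\end{equation*}

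Next I would negate both sides and push the minus sign inside the optimizations. On the left, $-\min_{\Delta}\max_{R}[\,\cdot\,] = \max_{\Delta}\min_{R}[-(\cdot)]$, so the left-hand side becomes $\max_{\Delta}\min_{R\in\mathcal U} g(S_0(1+R)) + R\Delta$. To convert the sign on the hedging term, I would then substitute $\Delta \mapsto -\Delta$; since the maximization is over all of $\mathbb R$, the change of variables is a bijection and the optimal value is unchanged, yielding $\max_{\Delta}\min_{R\in\mathcal U} g(S_0(1+R)) - R\Delta$, which is the left side of \eqref{eqn:charl}. On the right, $-\max_{P_f}\E_{P_f}[-g(S_0(1+R))] = \min_{P_f}\E_{P_f}[g(S_0(1+R))]$, where the constraint set on $P_f$ (probability measure on $\mathcal U$ with $\E_{P_f}[R]=0$) is invariant under this sign flip. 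This exactly matches the right side of \eqref{eqn:charl}.

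There is essentially no obstacle here beyond bookkeeping: the only slightly subtle point is checking that the change of variable $\Delta\mapsto-\Delta$ is legitimate (it is, since $\Delta$ ranges over all of $\mathbb R$) and that the constraint $\E_{P_f}[R]=0$ is symmetric under $g\mapsto-g$ (it is, because it does not involve $g$ at all). If one preferred a self-contained derivation rather than an appeal to Proposition~\ref{prop:dual}, the same LP / function-space duality argument from Appendix~\ref{asec:dual} would go through verbatim: write the primal
\begin{equation*}
\begin{array}{ll}
\text{maximize} & p\\
\text{subject to} & p + r\Delta \le g(S_0(1+r)) \text{ for every } r\in\mathcal U,
\end{array}
\end{equation*}
form the Lagrangian against a positive measure $w$ on $\mathcal U$, and observe that the dual function is finite only when $w(\mathcal U)=1$ and $\int r\,dw(r)=0$, giving the infimum over risk-neutral $P_f$ of $\E_{P_f}[g(S_0(1+R))]$; strong duality follows from Slater's condition (pick $p$ sufficiently negative). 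Either route completes the proof with essentially no new work.
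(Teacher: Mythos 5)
Your proposal is correct and is essentially the paper's own argument: the paper proves this lower-bound proposition by the one-line observation that replacing $g$ with $-g$ in Proposition~\ref{prop:dual} (and its LP-duality proof in Appendix~\ref{asec:dual}) immediately yields the result, which is exactly the reduction you carry out. Your extra bookkeeping (negating both sides, the change of variable $\Delta\mapsto-\Delta$, and noting the risk-neutral constraint does not involve $g$) is a correct and slightly more explicit version of the same step.
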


The following summarizes the characterizations for convex and concave payoffs:
\begin{corollary}
Consider the $\tau$-round European option hedging game, with uncertainty sets $\{\mathcal U_t\}_{t\in[\tau]}$. The following results hold:
\begin{enumerate}
\item Suppose the payoff function $g$ is concave. If the uncertainty sets $\mathcal U_t=[-\uzeta_t,\ozeta_t]$ for all $t\in[\tau]$, then the lower bound is $\E_{P_f}[g(S_0\prod_{t=1}^\tau(1+R_t))]$, where $P_f$ is the unique risk-neutral measure supported on $\{-\uzeta_t,\ozeta_t\}$ for each $R_t$, i.e. $E_{P_f}[R_t]=0$.
\item Suppose the payoff function $g$ is convex, and the uncertainty sets $\mathcal U_t$ all contain the point 0. Then the lower bound is $g(S_0)$.
\end{enumerate}\label{prop:lowerdual}
\end{corollary}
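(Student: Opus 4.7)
The plan is to recognize that this corollary is the mirror image of the upper-bound characterizations already proved for convex payoffs (Lemma~\ref{lem:discrete}) and for concave payoffs. By Proposition~\ref{prop:dual lower}, the single-round lower bound is $\min_{P_f}\E_{P_f}[g(S_0(1+R))]$ over risk-neutral $P_f\in\mathcal P(\mathcal U)$, so replacing $g$ by $-g$ converts a lower-bound problem into an upper-bound problem with the convexity/concavity of the payoff swapped. Both parts then follow by recycling the two key building blocks from Appendix~\ref{asec:convex}: the one-round reduction of Corollary~\ref{cor:reduction} (applied to $-g$ where appropriate), and the shape-preservation Fact~\ref{lem:convex2convex}, whose analog states that $\E_P[h(x(1+R))]$ inherits concavity from $h$ as well.

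For part~1, I would first argue the single-round case with $\mathcal U=[-\uzeta,\ozeta]$: since $-g$ is convex, Corollary~\ref{cor:reduction} (applied to $-g$) tells us that $\max_{P_f}\E_{P_f}[-g(S_0(1+R))]$ is attained at the unique two-point risk-neutral measure on $\{-\uzeta,\ozeta\}$, which is exactly the minimizer of $\E_{P_f}[g(S_0(1+R))]$. Then I would lift the result to $\tau$ rounds by backward induction, mirroring the dynamic-programming argument in the proof of Lemma~\ref{lem:discrete}. Setting $g_\tau(x)=g(x)$ and $g_{t-1}(x)=\max_{\Delta_t}\min_{R_t\in\mathcal U_t}g_t(x(1+R_t))-R_t\Delta_t$, the concavity-preservation analog of Fact~\ref{lem:convex2convex} guarantees that each $g_t$ remains concave, so the one-round identity applies at every stage and the lower bound telescopes to $\E_{P_f}[g(S_0\prod_{t=1}^\tau(1+R_t))]$ under the product of the unique two-point risk-neutral measures.

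For part~2, I would start again from Proposition~\ref{prop:dual lower}. Jensen's inequality applied to the convex $g$ gives, for every risk-neutral $P_f$,
\[
\E_{P_f}[g(S_0(1+R))]\;\geq\; g(S_0(1+\E_{P_f}[R]))\;=\;g(S_0),
\]
so the single-round lower bound is at least $g(S_0)$. Because $0\in\mathcal U_t$, the Dirac mass at $R=0$ is an admissible risk-neutral measure and attains $g(S_0)$ exactly, establishing equality. To propagate this through $\tau$ rounds, I would again set up the backward-induction recursion and observe that convexity of $g_t$ is preserved by the analog of Fact~\ref{lem:convex2convex}; hence at every step the one-round argument gives $g_{t-1}(x)=g_t(x)$, and inductively $g_0(S_0)=g(S_0)$.

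The main obstacle — really more of a bookkeeping point than a genuine difficulty — is to verify that shape preservation (concavity in part~1, convexity in part~2) is indeed closed under the lower-bound recursion, so that the single-round argument is legitimately reusable at every stage. Once this is in place, both parts follow by a direct transcription of the upper-bound arguments, and no new machinery is required beyond the duality in Proposition~\ref{prop:dual lower}.
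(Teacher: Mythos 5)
Your proposal is correct and follows essentially the same route as the paper, which dispatches this corollary in one line by noting that replacing $g$ with $-g$ turns the lower-bound problem into the already-solved upper-bound problems (Lemma~\ref{lem:discrete} for the concave case and the Jensen-plus-Dirac-at-zero argument for the convex case). Your write-up simply makes explicit the backward-induction and shape-preservation bookkeeping that the paper leaves implicit.
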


\section{Missing analysis for non-convex payoffs}
\subsection{Proof of Lemma~\ref{lem:lipschitz}}\label{asec:lipschitz}
We prove by (backward) induction. Obviously $g_{\tau}(x)=g(x)$ satisfies the Lipschitz condition. Suppose $g_{t+1}(x)$ satisfies $g_{t+1}(x)-g_{t+1}(y)\leq L(x-y)$ for $x\geq y$. We prove $g_{t}(x)-g_{t}(y)\leq L(x-y)$ for $x\geq y$ by contradiction.

Assuming this is not true, then
$$\max_{P\in\mathcal{P}(\mathcal U):\E[R]=0}\E[g_{t+1}((x+\delta)(1+R))]>\max_{P\in\mathcal{P}(\mathcal U):\E [R]=0}\E[g_{t+1}(x(1+R))]+L\delta$$
for some $x\in \mathcal U$ and $\delta>0$. Now let $\tilde{P}_f$ be an optimal solution for $\E[g_{t+1}((x+\delta)(1+R))]$ (the existence of an optimal solution will be seen immediately in the next lemma). Then
\begin{eqnarray*}
\E_{\tilde P_f}\left[g_{t+1}((x+\delta)(1+R))\right]&=&\max_{P_f\in\mathcal{P}(\mathcal U):\E[R]=0}\E[g_{t+1}((x+\delta)(1+R))]\\
& > & \max_{P_f\in\mathcal{P}(\mathcal U):\E[R]=0}\E[g_{t+1}(x(1+R))]+L\delta\\
&\geq&\E_{\tilde P_f}[g_{t+1}(x(1+R))] + L\delta.
\end{eqnarray*}
But since $g_{t+1}(x)$ is assumed to be Lipschitz continuous, we have
$$\E_{\tilde P_f}[g_{t+1}((x+\delta)(1+R))]-\E_{\tilde P_f}[g_{t+1}(x(1+R))]\leq \E_{\tilde P_f}[L\delta(1+R)]=L\delta$$
by the risk-neutral property of $\tilde{P}_f$. This leads to a contradiction.

\subsection{Proof of Lemma~\ref{lem:gtgm}}\label{asec:lem:gtgm}
We shall first show that the primal formulation
\begin{equation}\label{eqn:localprimal}
\begin{array}{ll}
\mbox{minimize} &p\\
\text{subject to}&p+r\Delta\geq g_{t+1}(x(1+r))\text{\ for all $r\in \mathcal U$}
\end{array}
\end{equation}
has the following property in terms of \emph{binding constraints}:
%
%
%
%
\begin{lemma}\label{cor:twoworld}
For any $t$ and $x$, there exists an optimal solution for \eqref{eqn:localprimal}, say $(p^*,\Delta^*)$, such that either:
\begin{enumerate}
\item $p^* =g_{t+1}(x)$ or
\item There are exactly two binding constraints, corresponding to $r^{(1)}$ and $r^{(2)}$, such that
$r^{(1)} > 0$ and $r^{(2)} < 0$. These two constraints uniquely define $(p^*,\Delta^*)$.
\end{enumerate}
\end{lemma}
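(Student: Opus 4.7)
The plan is to reduce the two-variable LP \eqref{eqn:localprimal} to a one-dimensional convex minimization over $\Delta$ and then read off the binding structure from a subdifferential optimality condition. Since $p$ appears linearly and is to be minimized, at any optimum we must have $p^* = \phi(\Delta^*)$, where
\[
\phi(\Delta) \;:=\; \max_{r \in \mathcal U}\bigl\{g_{t+1}(x(1+r)) - r\Delta\bigr\},
\]
so the problem becomes $\min_{\Delta \in \mathbb R}\phi(\Delta)$.

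First I would collect the basic properties of $\phi$. Lemma~\ref{lem:lipschitz} gives that $g_{t+1}$ is Lipschitz, hence continuous; together with compactness of $\mathcal U = [-\uzeta,\ozeta]$ this ensures the inner max is attained, so $\phi$ is well-defined. As a supremum of affine functions of $\Delta$, $\phi$ is convex, and since $\uzeta, \ozeta > 0$ the constraints at $r = \ozeta$ and $r = -\uzeta$ force $\phi$ to grow linearly as $\Delta \to -\infty$ and $\Delta \to +\infty$ respectively; hence $\phi$ attains its minimum at some finite $\Delta^*$.

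Next I would invoke the subdifferential rule for supremum functions: letting $\mathcal R^* := \arg\max_{r \in \mathcal U}\{g_{t+1}(x(1+r)) - r\Delta^*\}$, a standard result yields $\partial\phi(\Delta^*) = \mathrm{conv}\{-r : r \in \mathcal R^*\}$, and optimality demands $0 \in \partial\phi(\Delta^*)$. This leads naturally to a case split. If $0 \in \mathcal R^*$, then $p^* = g_{t+1}(x(1+0)) - 0\cdot\Delta^* = g_{t+1}(x)$, which is case 1. Otherwise $0 \notin \mathcal R^*$, so for $0$ to lie in $\mathrm{conv}\{-r : r \in \mathcal R^*\}$ the set $\mathcal R^*$ must contain some $r^{(1)} > 0$ and some $r^{(2)} < 0$. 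Any such pair yields two binding constraints $p^* + r^{(i)}\Delta^* = g_{t+1}(x(1+r^{(i)}))$, $i=1,2$, and because $r^{(1)} \neq r^{(2)}$ this $2\times 2$ system has a unique solution.

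The main obstacle I expect is the word ``exactly'' in the statement: in degenerate cases $\mathcal R^*$ may contain more than two returns, so I would read the conclusion as asserting the \emph{existence} of an optimal pair $(p^*,\Delta^*)$ that is already uniquely pinned down by two binding constraints of opposite sign, with any further tight constraints being redundant for the characterization. The subsequent applications (in particular in the proof of Lemma~\ref{lem:gtgm}) only require existence of such a pair, so this reading is consistent with the role the lemma plays downstream.
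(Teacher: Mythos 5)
Your proof is correct, but it takes a genuinely different route from the paper's. The paper argues directly on the LP: it first claims there is at least one and at most two binding constraints, and then rules out the bad sign patterns (two binding constraints of the same sign, or a single binding constraint with $r\neq 0$) by observing that in those cases the dual LP is infeasible, hence the primal would be unbounded below --- which is excluded by an explicit computation showing that any $\tilde r_1>0$ and $-\tilde r_2<0$ in $\mathcal U$ together bound $p$ from below. You instead minimize out $p$ to get the one-dimensional convex, coercive function $\phi(\Delta)=\max_{r\in\mathcal U}\{g_{t+1}(x(1+r))-r\Delta\}$ and apply the max-function subdifferential rule, so that $0\in\partial\phi(\Delta^*)=\mathrm{conv}\{-r:r\in\mathcal R^*\}$ forces either $0\in\mathcal R^*$ (case 1) or maximizers of both signs (case 2). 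The trade-off: the paper's argument is more elementary and stays inside the LP-duality framework used throughout, while yours requires quoting the Danskin/Valadier-type formula (whose hypotheses --- compact $\mathcal U$, continuity of $g_{t+1}$ from Lemma~\ref{lem:lipschitz} --- you correctly verify). In return, your argument is cleaner on exactly the point where the paper is weakest: the paper's claim that there are \emph{at most} two binding constraints (``three linearly independent equations\dots which is impossible'') fails in degenerate situations, e.g.\ when $g_{t+1}$ is affine on a subinterval, in which case a continuum of constraints can be tight. Your reading --- that the lemma asserts the existence of an opposite-sign pair of binding constraints that uniquely pins down $(p^*,\Delta^*)$, with further tight constraints being redundant --- is the correct interpretation, and as you note it is all that the proof of Lemma~\ref{lem:gtgm} actually uses.
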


\begin{proof}[Proof of Lemma~\ref{cor:twoworld}]
First, there must be at least one binding constraint for \eqref{eqn:localprimal}, because if not, one can always decrease $p$ to achieve lower objective value while preserving all constraints. On the other hand, there must be at most two binding constraints for \eqref{eqn:localprimal}; otherwise, there will be three linearly independent equations $p^*+r_i \Delta^*  =  g_{t + 1}(x(1+r_i))$ for some $r_i,i=1,2,3$ that $(p^*,\Delta^*)$ satisfies, which is impossible.

Next, let us move to the case where there are two binding constraints. We need to show that $\rone < 0 < \rtwo$. Suppose $\rone, \rtwo > 0$. We can see that the dual program of (\ref{eqn:localprimal}) is infeasible, which means the primal LP is either infeasible or unbounded. Infeasibility is easily ruled out since one can put $\Delta=0$ and a large enough $p$ to construct a feasible solution. To show that unboundedness is also impossible, we will show that all $p$ that are smaller than a negative threshold are infeasible, implying the minimization problem \eqref{eqn:localprimal} is finite. To this end, consider any $\tilde{r}_1>0$ and $-\tilde{r}_2<0$ that lie in $\mathcal U$. Suppose $p+\tilde{r}_1\Delta\geq g_{t+1}(x(1+\tilde{r}_1))$. This implies $\Delta\geq(g_{t+1}(x(1+\tilde{r}_1))-p)/\tilde{r}_1$. Now, if we choose $p$ to be very negative, then
$$p-\tilde{r}_2\Delta\leq p-\tilde{r}_2\frac{g_{t+1}(x(1+\tilde{r}_1))-p}{\tilde{r}_1}=\left(1+\frac{\tilde{r}_1}{\tilde{r}_2}\right)p-\frac{\tilde{r}_2}{\tilde{r}_1}g_{t+1}(x(1+\tilde{r}_1))<g_{t+1}(x(1-\tilde{r}_2))$$
and hence $(p,\Delta)$ does not satisfy $p-\tilde{r}_2\Delta\geq g_{t+1}(x(1-\tilde{r}_2))$. Hence the set of feasible $p$ must be bounded from below. Similarly, we can show that it is impossible to have $\rone, \rtwo < 0$. It is trivial to see that the two binding constraints uniquely define $(p^*,\Delta^*)$.

Finally, suppose there is exactly one binding constraint. When $r$ in the corresponding binding constraint is non-zero, the dual LP is again infeasible, which will result in a contradiction again as above.
\end{proof}

We may now proceed to prove Lemma~\ref{lem:gtgm}. Recall that we use the following LP to compute $g_t(x)$:

\begin{equation}\label{eqn:exact}
\begin{array}{ll}
\mbox{minimize}&p\\
\text{subject to}&g_{t + 1}(x(1+r))-r\Delta\leq p \text{\ \ for $r \in \mathcal U$}
\end{array}
\end{equation}

Let $(p^*, \Delta^*)$ be the optimal solution.
By Lemma~\ref{cor:twoworld}, we have either 1) $p^* = g_{t+1}(x)$ or 2) there exists a pair $\rone$ and $\rtwo$ such that $\rone < 0 < \rtwo$ and
$$p^* + r^{(i)}\Delta^* = g_{t+1}(x(1+r^{(i)}) \mbox{ for } i = 1, 2.$$
Let us focus on the second case, as the first case follows similarly. In this case, we can express $p^*=g_t(x)$ in terms of $\rone$ and $\rtwo$ as follows:
\begin{equation}\label{eqn:g_t simplified}
g_t(x) = \frac{\rtwo}{\rtwo - \rone}g_{t+1}(x(1+\rone)) - \frac{\rone}{\rtwo - \rone}g_{t+1}(x(1+\rtwo))
\end{equation}

Since the discretization length is $\epsilon$, there exist $\hrone < 0 < \hrtwo \in \hmu$ such that $0\leq\rone - \hrone < \epsilon$ and $0\leq\hrtwo - \rtwo < \epsilon$, \ie we can define $\hrone$ as the closest $r_i$ that is at least as large as $\rone$ and $\hrtwo$ as the closest $r_i$ that is at least as small as $\rtwo$. Note that a lower bound of $g^m_t(x)$ is given by the optimal solution of the following linear program:

\begin{equation}\label{eqn:exact}
\begin{array}{ll}
\mbox{minimize}&p\\
\text{subject to}&g_{t + 1}(x(1+\hrone))-x\hrone\Delta\leq p\\
& g_{t + 1}(x(1+\hrtwo))-x\hrtwo\Delta\leq p
\end{array}
\end{equation}
This LP can be solved analytically, which gives us
\begin{equation}\label{eqn:gmupper}
g^m_t(x) \geq \frac{\hrtwo}{\hrtwo - \hrone}g_{t+1}(x(1+\hrone)) - \frac{\hrone}{\hrtwo - \hrone}g_{t+1}(x(1+\hrtwo)).
\end{equation}
Now let $\eta>\epsilon$ be a parameter to be decided later. We consider two cases.

\noindent{\em Case 1.} $\rtwo - \rone \geq \eta$. First we can see that 
\begin{equation}\label{eqn:prob1}
\frac{\hrtwo}{\hrtwo-\hrone}\geq\frac{\rtwo}{\rtwo-\rone}\left(1-\frac{\epsilon}{\eta}\right)
\end{equation}
since
\begin{align}
\frac{\hrtwo}{\hrtwo-\hrone}&=\frac{\rtwo}{\rtwo-\rone}+\frac{\hrone\rtwo-\hrtwo\rone}{(\hrtwo-\hrone)(\rtwo-\rone)} \notag\\
&=\frac{\rtwo}{\rtwo-\rone}+\frac{\rtwo(\hrone-\rone)-\rone(\hrtwo-\rtwo)}{(\hrtwo-\hrone)(\rtwo-\rone)} \notag\\
&\geq\frac{\rtwo}{\rtwo-\rone}-\frac{\rtwo\epsilon}{(\hrtwo-\hrone)(\rtwo-\rone)}\text{\ \ (since $-\hrone(\hrtwo-\rtwo)\geq0$ and $|\hrone-\rone|<\epsilon$)} \notag\\
&\geq\frac{\rtwo}{\rtwo-\rone}-\frac{\rtwo\epsilon}{(\rtwo-\rone)^2}\text{\ \ (by construction we have $\hrtwo-\hrone\geq\rtwo-\rone$)} \notag\\
&\geq\frac{\rtwo}{\rtwo-\rone}\left(1-\frac{\epsilon}{\eta}\right). \notag
\end{align}
Similarly, we have
\begin{equation}
-\frac{\hrone}{\hrtwo-\hrone}\geq-\frac{\rone}{\rtwo-\rone}\left(1-\frac{\epsilon}{\eta}\right).
\label{eqn:prob2}
\end{equation}
Hence we may continue (\ref{eqn:gmupper}) and get:
{\small
\begin{eqnarray*}
g_t^m(x)&\geq& \frac{\hrtwo}{\hrtwo - \hrone}g_{t + 1}(x(1+\hrone)) - \frac{\hrone}{\hrtwo - \hrone}g_{t+1}(x(1+\hrtwo)) \\
& \geq & \frac{\rtwo}{\rtwo-\rone}\left(1-\frac{\epsilon}{\eta}\right)\left(g_{t+1}(x(1+\rone))-xL(\rone-\hrone)\right) \\
& & \quad \quad - \frac{\rone}{\rtwo - \rone}\left(1-\frac{\epsilon}{\eta}\right)g_{t+1}(x(1+\rtwo))  \\
& & \quad \quad \quad \mbox{(by Lipschitz continuity of $g_{t+1}$, and Eq. \eqref{eqn:prob1} and \eqref{eqn:prob2})} \\
& = & \left(1-\frac{\epsilon}{\eta}\right)\left\{\frac{\rtwo}{\rtwo - \rone}g_{t+1}(x(1+\rone)) - \frac{\rone}{\rtwo - \rone}g_{t+1}(x(1+\rtwo))\right\} - xL\epsilon\left(1-\frac{\epsilon}{\eta}\right) \\
& = & g_t(x)(1-\frac{\epsilon}{\eta}) - xL \epsilon\text{\ \ (by Eq. \eqref{eqn:g_t simplified}).}
\end{eqnarray*}}

\noindent{\em Case 2.} When $\rtwo - \rone < \eta$. This implies $\rtwo < \eta$ and $\rone > -\eta$. We shall first show that $|g_{t + 1}(x) - g_t(x)|$ is small. Specifically,
{\small
\begin{eqnarray*}
& & |g_{t + 1}(x) - g_t(x)| \\
& = & \left|g_{t+1}(x) - \left(\frac{\rtwo}{\rtwo - \rone}g_{t+1}(x(1+\rone)) - \frac{\rone}{\rtwo -\rone}g_{t+1}(x(1+\rtwo))\right)\right| \\
& \leq & \frac{\rtwo}{\rtwo - \rone}\left|g_{t+1}(x) - g_{t+1}(x(1+\rone))\right| - \frac{\rone}{\rtwo - \rone}\left|g_{t+1}(x) - g_{t+1}(x(1+\rtwo))\right| \\
& \leq & 2Lx\frac{|\rone\cdot\rtwo|}{\rtwo - \rone} \quad {\mbox{(by Lipschitz continuity of $g_{t+1}(\cdot)$)}} \\
& \leq & 2Lx |\rone| \quad {\mbox{(since $\frac{\rtwo}{\rtwo-\rone}\leq1$)}}\\
&\leq&2Lx\eta
\end{eqnarray*}}
We next use the above inequality to compute a lower bound for $g^m_t(x)$:
{\small
\begin{eqnarray*}
& & g^m_t(x) \\
& \geq & \frac{\hrtwo}{\hrtwo - \hrone}g_{t+1}(x(1+\hrone)) - \frac{\hrone}{\hrtwo - \hrone}g_{t+1}(x(1+\hrtwo)) \quad \mbox{(by Eq.(\ref{eqn:gmupper}))} \\
& \geq & \frac{\hrtwo - \hrone}{\hrtwo - \hrone}g_{t+1}(x) - \frac{2Lx|\hrone \hrtwo|}{\hrtwo - \hrone} \quad \mbox{(Lipschitz condition)}\\
& \geq & g_{t+1}(x) - 2Lx|\hrone|  \quad \mbox{(again use the fact that $\frac{ \hrtwo}{\hrtwo - \hrone} \leq 1$)} \\
& \geq & g_t(x) - (2\eta + 2\epsilon)Lx.\quad \mbox{(since $|\hrone|\leq|\hrone-\rone|+|\rone|\leq\epsilon+\eta$)}
\end{eqnarray*}
}
Summarizing both cases, we have
\begin{eqnarray*}
g_t(x) - g^m_t(x) & \leq & \min_{\eta>\epsilon}\max\left\{g_t(x)\frac{\epsilon}{\eta} + xL \epsilon , (2\eta + 2\epsilon)Lx\right\}  \\
& \leq & \min_{\eta>\epsilon}\max\left\{g_t(x)\frac{\epsilon}{\eta}  , 2Lx\eta \right\}  + 2\epsilon Lx \\
& \leq & \sqrt{g_t(x)2\epsilon Lx} + 2L \epsilon x
\end{eqnarray*}
by setting $\eta = \sqrt{g_t(x)\epsilon/(2Lx)}$ when $\epsilon\leq\sqrt{g_t(x)/(2Lx)}$, and $\eta=\epsilon$ when $\epsilon>\sqrt{g_t(x)/(2Lx)}$.

\section{Missing proofs for American options}\label{sec:american}
\subsection{Proof of Lemma~\ref{lemma:american binomial}}\label{proof american binomial}
The proof follows similarly as that for European-type options. We will start by arguing that $g_t(x)$ is convex for any $t$ and
\begin{equation}
g_{t}(x) = \max\left\{\max_{\substack{P \in \mathcal P(\{-\uzeta, \ozeta\})\\ \E[R] = 0}}\E[g_{t+1}(x(1+R_{t+1}))], g(x)\right\}.\label{step1}
\end{equation}
by induction. First, as in Corollary~\ref{cor:reduction}, when $g_{t+1}(\cdot)$ is convex, the optimal risk-neutral measure for the dual problem at the $t$-th round, assuming the option is not exercised, has probability masses only at $\{-\uzeta, \ozeta\}$. Second, when $g_{t+1}(\cdot)$ is convex, $\max_{\substack{P \in \mathcal P(\{-\uzeta, \ozeta\})\\ \E[R] = 0}}\E[g_{t+1}(S(1+R))]$ is also convex by Lemma~\ref{lem:convex2convex}. Since the maximum of convex functions is still convex, we conclude the induction.

Note that \eqref{step1} can be written as
$$g_{t}(x) = \max\left\{\E_{P_f}[g_{t+1}(x(1+R_{t+1}))], g(x)\right\}.$$
where $P_f\in \mathcal P(\{-\uzeta, \ozeta\})$ is a risk-neutral measure satisfying $\E[R] = 0$. We argue that $g_t(x)\geq g(x)$ and $\E_{P_f}[g_{t}(x(1+R_{t}))]\geq g(x)$ for any $t$ and $x$, by induction again. Indeed, if $g_{t+1}(x)\geq g(x)$ for any $x$, then $\E_{P_f}[g_{t+1}(x(1+R_{t+1}))]\geq g_{t+1}(E_{P_f}[x(1+R_{t+1})])$ by Jensen inequality, which is equal to $g_{t+1}(x)$ since $E_{P_f}[R_{t+1}]=0$, and hence dominates $g(x)$ by the induction hypothesis. We have therefore shown the claim, which implies that it is optimal to exercise at the maturity.
%
%

\subsection{Continuous-time exercise right}\label{sec:continuous exercise}
One arguably unsatisfying feature in our model for the upper bound of American options is that the nature can only exercise the option at discrete rounds, \ie at the times when the trader can execute trade decisions. There is a natural formulation to relax this constraint, and it turns out that this relaxation does not change our existing model.

\begin{definition}[American option hedging game with continuous-time exercise right.]\label{def:strongadv} We model the dynamics for an American option, with payoff $g(\cdot)$ and expiration $T$, by considering a $\tau$-round game between the trader and the nature. The time length for each round is $\gamma \triangleq T/\tau$. Let $\{-\uzeta_i, \ozeta_i\}$ be the uncertainty parameters for the $i$-th round. In this game,
\begin{itemize}
\item the investor is only allowed to trade at the beginning of each round, \ie at time $0, \gamma, 2\gamma, ..., T$.
\item the adversary is allowed to exercise the option at any time. The adversary also decides the (continuous) trajectory of the price movement subject to the following constraints specified by the uncertainty parameters: let $t = i \gamma + \delta$, where $\delta < \gamma$; we require $-\uzeta_{i + 1}\cdot \delta/\gamma \leq S_{t}/S_{i\gamma} - 1 \leq \ozeta_{i + 1} \cdot \delta/\gamma$ for any $\delta<\gamma$.
\end{itemize}
\end{definition}
Intuitively, the price movement lies in the uncertainty set in the form of a ``cone'' that extends from time $i\gamma$ to $(i+1)\gamma$.
By using a simple change of variable trick, we have the following observation:

\begin{corollary} In the model in Definition \ref{def:strongadv}, the upper bound of an American option's price is the same as the upper bound from the ordinary hedging game introduced in Section~\ref{sec:genshort}. 
\end{corollary}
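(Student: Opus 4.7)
The plan is to establish both inequalities $U_{\text{cont}} \ge U_{\text{disc}}$ and $U_{\text{cont}} \le U_{\text{disc}}$, where $U_{\text{cont}}$ and $U_{\text{disc}}$ denote the upper bounds from the continuous-exercise and the original discrete models. Since the trader's action space---choosing $\Delta_1,\ldots,\Delta_\tau$ at the grid times $0,\gamma,\ldots,(\tau-1)\gamma$---is identical in both games, I would reduce the claim to showing that, for every fixed trader strategy, the continuous adversary's worst-case payoff coincides with the discrete adversary's. The direction $U_{\text{cont}}\ge U_{\text{disc}}$ is immediate: any discrete adversary is a legal continuous one (piecewise-linear interpolation between grid prices respects the cone constraint, and exercise at grid times is allowed).

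The nontrivial direction is by a change-of-variable reduction: given any continuous adversary, I would construct a discrete adversary producing the same trader loss. The key observation is that the trader does not rebalance inside a round, so his P\&L in round $i+1$ depends only on the terminal price of that round (the exercise price if early exercise occurs, or $S_{(i+1)\gamma}$ otherwise); intermediate fluctuations of the trajectory are irrelevant. Suppose the continuous adversary exercises at time $t^* = i^*\gamma + \delta^*$ with $\delta^*\in[0,\gamma)$, at price $\tilde S = S_{i^*}(1+\alpha^*)$. The cone constraint forces
\[
\alpha^* \in \bigl[-\uzeta_{i^*+1}\tfrac{\delta^*}{\gamma},\ \ozeta_{i^*+1}\tfrac{\delta^*}{\gamma}\bigr] \subseteq [-\uzeta_{i^*+1},\ \ozeta_{i^*+1}],
\]
which is precisely the discrete uncertainty set for round $i^*+1$. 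I would define the matching discrete adversary by copying the continuous returns $R_t = S_{t\gamma}/S_{(t-1)\gamma}-1$ for $t \le i^*$, setting $R_{i^*+1}=\alpha^*$ when $\delta^*>0$ and exercising at the start of round $i^*+2$, or exercising at the start of round $i^*+1$ when $\delta^*=0$. A direct accounting then shows the trader's round-$(i^*+1)$ hedging P\&L equals $\alpha^*\Delta_{i^*+1}$ in both models, and the payoff at exercise equals $g(\tilde S)=g(S_{i^*}(1+\alpha^*))$ in both, so the total trader loss agrees.

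The step I expect to require the most care is adaptivity: both adversaries are adaptive, so I need the continuous adversary's choice of $(\alpha^*,\delta^*)$---which may react to the already-committed $\Delta_{i^*+1}$---to be matched by a discrete adversary who also selects $R_{i^*+1}$ after seeing $\Delta_{i^*+1}$. This is consistent with the standard move order within a round (trader first, adversary second), so no real obstacle arises; it is essentially bookkeeping to verify that the constructed discrete strategy is progressively measurable with respect to the trader's information. Taking the supremum over adversaries and the infimum over trader strategies then yields $U_{\text{cont}}=U_{\text{disc}}$, completing the proof.
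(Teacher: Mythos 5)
Your proof is correct and is exactly the ``simple change of variable trick'' the paper invokes: the paper itself gives no written proof of this corollary, and your argument --- observing that the cone constraint confines any mid-round exercise price to $S_{i^*\gamma}(1+\alpha^*)$ with $\alpha^*\in[-\uzeta_{i^*+1},\ozeta_{i^*+1}]$, and that the intra-round trajectory is irrelevant since the trader cannot rebalance, so the move can be replicated by a discrete adversary who plays $R_{i^*+1}=\alpha^*$ and exercises at the next grid time --- supplies precisely the missing details. The easy containment $U_{\text{cont}}\ge U_{\text{disc}}$ and the adaptivity bookkeeping are handled correctly as well.
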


\section{Convergence to continuous-time control problems for non-convex options}\label{sec:control}
In this section, we show that our minimax upper bound of the option price, with possibly non-convex payoff, converges to the price based on a controlled diffusion process when the uncertainty sets are appropriately scaled.
To facilitate discussion, for this section we let the time steps be $0,\delta,2\delta,\ldots,T$ (for simplicity let $T$ be a multiple of $\delta$), and thus $\tau = \frac{T}{\delta}$. Let $\mathcal{U}^\delta=[-\uzeta^\delta,\ozeta^\delta]$, where $-\uzeta^\delta\triangleq-\uzeta\sqrt{\delta}$ and $\ozeta^{\delta} \triangleq \ozeta\sqrt{\delta}$. Let us recall that we write $S(t)$ as the continuous-time price of the underlying asset at time $t$, with $S(0) = S_0$ (this process is decided by nature).
We have the following theorem:

\begin{theorem}\label{thm:continuous}
Let $g$ be a Lipschitz continuous payoff function. As $\delta\to0$, the upper bound of the option price defined by
\begin{equation}\label{V continuous}
\min_{\Delta_1,\Delta_2,\ldots,\Delta_{\tau}}\max_{R_1^\delta,\ldots,R_{\tau}^\delta\in\mathcal{U}^\delta}g\left(S_0\prod_{i=1}^{\tau}(1+R_i^\delta)\right)-\sum_{i=1}^{\tau}\Delta_iR_i^\delta
\end{equation}
converges to 
\begin{equation}
G(0,S_0):=\max_\xi \E[g(S(T))]. \label{value continuous}
\end{equation}
Here $S(t)$ follows the dynamic
\begin{equation}
dS(t)=\sigma(u)S(t)dw(t) \label{eqn:scontinuous}
\end{equation}
where $w(t)$ is a standard Brownian motion, $\sigma(u)=u$ is the controlled volatility, and $\xi=\{u(t),0\leq t\leq T:u(t)\in U\}$ is an adapted control sequence. The domain of control $U=[0,\underline{\zeta}\overline{\zeta}]$.
\end{theorem}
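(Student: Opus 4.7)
My plan is to recognise the discrete minimax in \eqref{V continuous} as a monotone, stable, consistent scheme for the Black--Scholes--Barenblatt HJB equation associated with an uncertain-volatility control problem, and then invoke the Barles--Souganidis viscosity-convergence framework to pass to the limit. I would not try to manipulate \eqref{V continuous} directly, because for merely Lipschitz $g$ there is no reason to expect a smooth limiting value function.

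First, I would apply Proposition~\ref{prop:dual} round by round to rewrite the minimax as $V^{\delta}(0,S_0)$, where $V^{\delta}$ lives on the time grid $\{k\delta\}_{k=0}^{\tau}$ and satisfies $V^{\delta}(T,x)=g(x)$ together with
\begin{equation*}
V^{\delta}(t,x)=\max_{\substack{P\in\mathcal{P}(\mathcal{U}^{\delta})\\ \E_{P}[R]=0}}\E_{P}\!\left[V^{\delta}(t+\delta,\,x(1+R))\right].
\end{equation*}
An immediate extension of Lemma~\ref{lem:lipschitz} gives the uniform bound $|V^{\delta}(t,x)-V^{\delta}(t,y)|\le L|x-y|$ for every grid $t$, supplying stability and a uniform modulus for the scheme.

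To identify the limiting PDE I Taylor-expand against a $C^{1,2}$ test function $\varphi$. Using $\E_{P}[R]=0$ and $|R|\le(\uzeta\vee\ozeta)\sqrt{\delta}$,
\begin{equation*}
\E_{P}[\varphi(t+\delta,x(1+R))]=\varphi(t+\delta,x)+\tfrac{1}{2}x^{2}\varphi_{xx}(t+\delta,x)\,\E_{P}[R^{2}]+O(\delta^{3/2}\|\varphi_{xxx}\|_{\infty}).
\end{equation*}
A direct moment calculation shows that the set of admissible second moments is exactly the interval $[0,\uzeta\ozeta\,\delta]$: the upper endpoint is attained by the risk-neutral two-point measure on $\{-\uzeta\sqrt{\delta},\ozeta\sqrt{\delta}\}$, the lower endpoint by the point mass at $0$, and intermediate values by convex combination. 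Dividing by $\delta$ and maximising gives
\begin{equation*}
\frac{V^{\delta}(t,x)-V^{\delta}(t+\delta,x)}{\delta}\longrightarrow\sup_{u\in[0,\uzeta\ozeta]}\tfrac{1}{2}u\,x^{2}V_{xx}(t,x),
\end{equation*}
so the formal limit $V$ solves the Black--Scholes--Barenblatt HJB $-V_{t}=\sup_{u\in[0,\uzeta\ozeta]}\tfrac{1}{2}u x^{2}V_{xx}$ with terminal data $g$.

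To make the passage rigorous I would interpret the HJB in the viscosity sense and verify the three hypotheses of Barles--Souganidis: monotonicity of the DP update (immediate because $\max_{P}\E_{P}[\cdot]$ is monotone in its integrand), stability (from the uniform Lipschitz and linear-growth control on $V^{\delta}$), and consistency (the Taylor expansion above, run at a smooth $\varphi$ touching $V^{\delta}-\varphi$ at a local extremum). Locally uniform convergence $V^{\delta}\to V$ follows, and standard controlled-diffusion arguments (see, e.g., Fleming--Soner) identify $V(0,S_0)$ with $G(0,S_0)$ once one reads the control $u$ in \eqref{eqn:scontinuous} as the instantaneous variance, so that $\sigma^{2}\in[0,\uzeta\ozeta]$ matches the generator derived above. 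The principal obstacle is the consistency step: because $g$ is not smooth I cannot differentiate $V^{\delta}$ directly, and the $O(\delta^{3/2})$ remainder must be controlled \emph{uniformly} in $P\in\mathcal{P}(\mathcal{U}^{\delta})$ through the uniform support bound $|R|\le(\uzeta\vee\ozeta)\sqrt{\delta}$ together with the closed-interval characterisation of admissible $\E_{P}[R^{2}]$, which is precisely what legitimises interchanging the supremum over $P$ with the $\delta\to0$ limit.
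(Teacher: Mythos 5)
Your proposal is correct in outline but follows a genuinely different route from the paper. The paper treats the discrete game as an \emph{approximating Markov chain} in the sense of Kushner--Dupuis: it uses Proposition~\ref{prop:dual} to encode each round's optimal risk-neutral measure as a two-point distribution, reads off the control as the per-step conditional variance $\alpha^2\delta=p_1r_1^2+p_2r_2^2\in[0,\uzeta\ozeta\delta]$, verifies local consistency and uniform integrability, and then invokes weak convergence of relaxed controls (martingale measures) to pass to the limit of the control problem directly -- no PDE ever enters the rigorous argument (the HJB appears only heuristically in Appendix~\ref{sec:bsconverge}). You instead go through the Black--Scholes--Barenblatt equation and Barles--Souganidis. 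Your key computations match the paper's: the identification of the admissible second moments with the interval $[0,\uzeta\ozeta\delta]$ is exactly the paper's local-consistency step, and your reading of the control as the instantaneous \emph{variance} ($\sigma^2=u\in[0,\uzeta\ozeta]$) is in fact the correct resolution of a notational wrinkle in the theorem statement, since the paper's own verification and its PDE \eqref{HJB2} both use $\uzeta\ozeta$ as the maximal variance. What each approach buys: the probabilistic route avoids any uniqueness theory for the limiting equation and lands directly on the control value $G(0,S_0)$, at the cost of the relaxed-control/weak-convergence machinery; your analytic route gives a cleaner and more self-contained derivation of the limiting dynamics and makes the degenerate-elliptic structure $-V_t=\tfrac12\uzeta\ozeta x^2(V_{xx})^+$ explicit, but it outsources two nontrivial standard facts that you should at least cite: a comparison principle for the Black--Scholes--Barenblatt equation on $(0,\infty)$ in the class of linear-growth functions (required by Barles--Souganidis and not automatic for degenerate equations on unbounded domains; it holds here, e.g.\ after the logarithmic change of variable), and the identification of the unique viscosity solution with the value function $\max_\xi\E[g(S(T))]$. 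Neither is a gap so much as a dependency your write-up should make explicit.
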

Intuitively, this theorem asserts that the continuous-time limit of the option price's upper bound for non-convex payoff is Gaussian in nature, similar to the case of convex payoff and Black-Scholes model. The crucial difference with those cases, however, is that nature now has the additional power to choose (to reduce) the volatility at any point of time, and this is only helpful to nature in the case of non-convex payoff. Since we know from Section~\ref{sec:non-convex} that in the discrete-time setting with non-convex payoff, the nature does not necessarily choose the extremal point in the uncertainty set, it is not surprising that there is reduction in volatility at certain points of time in the continuous-time counterpart.
In Appendix~\ref{sec:bsconverge} we will also demonstrate how Theorem~\ref{thm:continuous} can be easily reduced to Black-Scholes model in the case of convex payoff.

The way to show Theorem~\ref{thm:continuous} is to view the diffusion process \eqref{eqn:scontinuous} as the starting object, and argue that the discrete-time game is an \emph{approximating Markov chain} of \eqref{eqn:scontinuous}~\cite{kushner2001numerical}. The following is the key to prove Theorem~\ref{thm:continuous}:

\begin{theorem}
Consider $x(t)=x_0+\int_0^t\sigma(w(s),u(s))dw(s)$ where $w(t)$ is a standard Brownian motion, and $u(t)$ is an adapted control sequence on a compact set $U$. Let $V(t,x)=\max \E_{(t,x)}[h(x(T))]$, where $h$ is continuous and $\E_{(t,x)}$ denotes the expectation conditional on $x(t)=x$.

Consider a Markov chain approximation as follows. Divide the time into steps of size $\delta$. Define a Markov chain $\{X^\delta(t)\}_{t=0,\delta,2\delta,\ldots,T}$ with transition $p^\delta(x,y|\alpha)$. Let $V^\delta(T,x)=h(x)$. For each step backward, solve
$$V^\delta(t,x)=\max_{\alpha\in U}\E_{(t,X^\delta(t)=x)}^{\delta,\alpha}[V^\delta(t+\delta,X^\delta(t+\delta))]$$
where $\E_{(t,X^\delta(t)=x)}^{\delta,\alpha}$ is the expectation taken using the transition $p^\delta(x,y|\alpha)$ for $X^\delta(t+\delta)$. Then we interpolate $X^\delta(\cdot)$ and $V^\delta(\cdot,x)$ such that they are piecewise constant on $t\in[0,T]$, \ie $X^\delta(s)=X^\delta(i\delta)$ and $V^\delta(s,x)=V^\delta(i\delta,x)$ for $i\delta\leq s<(i+1)\delta$.

Suppose that
\begin{enumerate}
\item $\sigma(x,\alpha)$ is Lipschitz continuous in $x$, uniformly in $U$.
\item The transition is locally consistent, \ie $\E_{(t,X^\delta(t)=x)}^{\delta,\alpha}[X^\delta(t+\delta)-X^\delta(t)]=o(\delta)$, $\E_{(t,X^\delta(t)=x)}^{\delta,\alpha}(X^\delta(t+\delta)-X^\delta(t))^2=\sigma(x,\alpha)^2\delta+o(\delta)$ for any $\alpha\in U$, and $\sup_{0\leq t\leq T}|X^\delta(t+\delta)-X^\delta(t)|\stackrel{a.s.}{\to}0$ as $\delta\to0$.
\item $h$ is uniformly integrable, \ie $\lim_{\eta\to\infty}\sup_{\delta}\E_{(t,x)}^{\delta,\alpha}[|h(X^\delta(T))|;|h(X^\delta(T))|>\eta]<\infty$ for any $t,x$.
\end{enumerate}
Then
$V^\delta(t,x)\to V(t,x)$ pointwise on $t\in[0,T],x\in\mathbb{R}^+$. \label{control theorem}
\end{theorem}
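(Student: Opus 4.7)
The plan is to apply the standard approximating Markov chain framework of Kushner--Dupuis, which reduces convergence of value functions to (i) weak convergence of the interpolated processes, (ii) identification of the limit via a martingale problem, and (iii) uniform integrability to pass expectations through the limit. The theorem's three hypotheses match exactly what one needs for each piece.

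First I would establish tightness and extract a limit. Fix any sequence of admissible discrete control policies $\{\alpha^\delta(\cdot)\}$ (viewed as piecewise constant processes on $[0,T]$) and form the piecewise-constant interpolations $X^\delta(\cdot)$. Lipschitz continuity of $\sigma(\cdot,\alpha)$ uniformly in $\alpha\in U$ together with compactness of $U$ gives a uniform bound $\sup_{\delta,t}\E[X^\delta(t)^2]<\infty$ via a standard Gronwall argument applied to the second-moment recursion implied by local consistency. The vanishing-jump condition $\sup_t|X^\delta(t+\delta)-X^\delta(t)|\to 0$ a.s. together with this moment bound yields tightness in $D([0,T];\mathbb{R})$ by Aldous' criterion. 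The relaxed-control representation of $\alpha^\delta(\cdot)$ lives in a compact space by compactness of $U$, so it is automatically tight. Pass to a subsequence and invoke Skorokhod's representation so that $(X^\delta,\alpha^\delta)\to(x^*,u^*)$ almost surely.

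Second I would identify $x^*$ as a controlled diffusion with control $u^*$ via a martingale problem. For any $f\in C^2_b(\mathbb{R})$, a Taylor expansion combined with the local consistency assumption yields
\begin{equation*}
M^\delta_f(t):=f(X^\delta(t))-\sum_{s<t,\,s\in\delta\mathbb{Z}}\tfrac{1}{2}f''(X^\delta(s))\sigma(X^\delta(s),\alpha^\delta(s))^2\,\delta
\end{equation*}
is a martingale up to an error that is $o(1)$ uniformly on $[0,T]$. Using a.s. convergence, continuity of $\sigma$ and $f''$, and uniform integrability (following from the moment bound), I would send $\delta\to0$ to conclude that $f(x^*(t))-\tfrac{1}{2}\int_0^t f''(x^*(s))\sigma(x^*(s),u^*(s))^2\,ds$ is a martingale. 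By Stroock--Varadhan, this characterizes $x^*$ as a solution of $dx^*=\sigma(x^*,u^*)\,dw$, with $u^*$ an adapted (in the completed filtration) relaxed control on $U$.

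Third I would prove the two matching inequalities. For $\limsup V^\delta(t,x)\leq V(t,x)$: choose $\varepsilon$-optimal discrete policies $\alpha^{\delta,*}$, apply the two previous steps to get a weak subsequential limit $(x^*,u^*)$, and use hypothesis (3) to exchange limit and expectation, giving $V^\delta(t,x)\leq\E[h(X^\delta(T))]+\varepsilon\to\E[h(x^*(T))]+\varepsilon\leq V(t,x)+\varepsilon$. For $\liminf V^\delta(t,x)\geq V(t,x)$: take any $\varepsilon$-optimal adapted control $\xi$ for the continuous problem, approximate it by a piecewise-constant (on the $\delta$-grid) control that is admissible in the discrete problem---standard chattering lemmas (see Kushner--Dupuis Chap.~10) show that $V(t,x)$ can be approximated by such controls---and then weak convergence plus uniform integrability yield $V^\delta(t,x)\geq\E[h(X^\delta(T))]\to\E[h(x^*(T))]\geq V(t,x)-2\varepsilon$. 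Since $\varepsilon$ is arbitrary, this closes the sandwich.

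The main obstacle I foresee is the lower bound: one must approximate an arbitrary adapted continuous-time control by grid-adapted discrete controls without losing optimality, which is why relaxed controls and the chattering lemma enter. The identification of $u^*$ as an admissible control in the correct filtration also requires care, since $\alpha^\delta$ is only adapted to the Markov chain's filtration, so one must verify that the filtration of $x^*$ itself carries $u^*$ (this is again standard via the Skorokhod representation but worth checking). Apart from this, the remaining steps are routine applications of weak-convergence machinery once hypotheses (1)--(3) are in hand.
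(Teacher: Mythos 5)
Your proposal is correct and follows essentially the same route as the paper: both invoke the Kushner--Dupuis approximating-Markov-chain machinery, passing to relaxed controls, extracting a weakly convergent subsequence via local consistency and the Lipschitz/compactness hypotheses, using uniform integrability of $h$ to pass to the limit in the expectations, and closing the gap with an $\epsilon$-optimal piecewise-constant (chattering) control for the continuous problem. The only difference is one of exposition: you spell out the tightness, Skorokhod representation, and Stroock--Varadhan identification steps that the paper delegates to the cited theorems in Kushner--Dupuis, and you state the two one-sided inequalities more carefully than the paper's outline does.
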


\begin{proof}[Outline of proof]
The proof is adapted from that of~\cite{kushner2001numerical}, p. 356, Theorem 1.4. As such we only provide an outline here. The proof calls on machinery in weak convergence analysis. Since there is no regularity condition on the control $u(t)$ in \eqref{eqn:scontinuous}, the first step is to enlarge the space of the process to measure-valued space. Namely, for any given control sequence $\{u(t)\}_{t\in[0,T]}$, we can write
\begin{equation}
x(t)=S_0+\int_0^t\int_U\sigma(\alpha)x(v)M(d\alpha dv) \label{SDE}
\end{equation}
where $M(d\alpha dv)$ is a real measure-valued continuous random process defined with, roughly speaking, the following properties: For any fixed measurable subset $A$ in $U$, $M(A,t)$ is a martingale that has quadratic variation $m(A,t)$ (so-called martingale measure; see \cite{kushner2001numerical}, p. 352, (1.9)). The quantity $m(\cdot,\cdot)$ is the so-called relaxed control, and is a measure that puts a delta mass if the control $u(t)\in A$ at time $t$ (\cite{kushner2001numerical}, p. 263, (5.1)). Under Lipschiz continuity of $\sigma(\cdot,\cdot)$ and compact control set $U$, there is a unique weak sense solution to \eqref{SDE} (\cite{kushner2001numerical}, p. 353, discussion under A1.1).

Now, one can similarly define $x^\delta(t)=S_0+\int_0^t\int_U\sigma(\alpha)x^\delta(v)M^\delta(d\alpha dv)$ under the relaxed control $m^\delta$ that is discretized and interpolated through the time steps $0,\delta,2\delta,\ldots,T$, and $M^\delta(A,t)$ has quadratic variation $m^\delta(A,t)$. By p. 356, Theorem 1.3 in~\cite{kushner2001numerical}, under the conditions of Lipshitz continuity of $\sigma(\cdot,\cdot)$ and local consistency (\ie conditions 1) and 2) in our theorem), there exists a subsequence $(M^\delta,m^\delta)$ that converges weakly to $(M,m)$. By uniform integrability (\ie condition 3)), we then have $\liminf_\delta V^\delta(t,x)\geq V(t,x)$ for any $t,x$ \cite{amin1994convergence}. The final step is to argue that the inequality is in fact matched by some control. This consists of choosing a so-called $\epsilon$-optimal solution of the relaxed control by using a finite-dimensional Wiener process and a finite-valued and piecewise constant control, and arguing that this solution provides an $\epsilon$-approximation to the optimal value (\cite{kushner2001numerical}, p. 355, Theorem 13.1.2).

\end{proof}

\begin{proof}[Proof of Theorem \ref{thm:continuous}]
We will show that our hedging game is an approximating Markov chain to the control problem given in \eqref{value continuous}, and has the three listed properties in Theorem \ref{control theorem}. First, the $\sigma(x,\alpha)$ in Theorem \ref{control theorem} equals $x\alpha$ in \eqref{eqn:scontinuous}, where $\alpha$ is the control lying in the compact set $U=[0,\underline{\zeta}\overline{\zeta}]$.

Second, we check the local consistency property. Define $\{G^\delta(t,x)\}_{t=0,\delta,2\delta,\ldots,T}$ as the value function in each step of the backward induction in the discrete-time hedging game defined in \eqref{V continuous}. Note that by Proposition \ref{prop:dual}, $G^\delta(i\delta,S^\delta(i\delta))=\max_{P\in\mathcal{P}([-\underline{\zeta}^\delta,\overline{\zeta}^\delta])}\E[G^\delta((i+1)\delta,S^\delta(i\delta)(1+R_i^\delta))]$, where we define $S^\delta(i\delta)=S_0\prod_{j=1}^{i}(1+R_j^\delta)$ for $i\in[\tau]$, and piecewise constant interpolation of $S^\delta(t)$ for other $t$. In fact, we can rewrite this as
\begin{equation}
\begin{array}{ll}
\max_{p_1,p_2,r_1,r_2}&p_1G(t+\delta,S^\delta(t)(1+r_1))+p_2G(t+\delta,S^\delta(t)(1+r_2))\\
\text{subject to}&p_1r_1+p_2r_2=0\\
&p_1+p_2=0\\
&p_1,p_2\geq0,-\underline{\zeta}^\delta\leq r_1\leq r_2\leq\overline{\zeta}^\delta
\end{array} \label{discrete}
\end{equation}
where we encode the maximal probability distribution in \eqref{discrete} by $\{p_1,p_2,r_1,r_2\}$. Let $\alpha^2\delta=p_1r_1^2+p_2r_2^2$. Then the range of $\alpha$ is $[0,\underline{\zeta}^\delta\overline{\zeta}^\delta/\delta]=[0,\underline{\zeta}\overline{\zeta}]$. We then have $\E_{(t,S^\delta(t)=x)}^{\delta,\alpha}[S^\delta(t+\delta)-S^\delta(t)]=0$ and $\E_{(t,S^\delta(t)=x)}^{\delta,\alpha}(S^\delta(t+\delta)-S^\delta(t))^2=\E_{(t,S^\delta(t)=x)}^{\delta,\alpha}[(S^\delta(t))^2(R^\delta)^2]=(S^\delta(t))^2\alpha^2\delta$ (here $R^\delta$ denotes i.i.d copy of $R^\delta_i$ in \eqref{V continuous}). Moreover, $|S^\delta(t+\delta)-S^\delta(t)|\leq\max\{\underline{\zeta}^\delta,\overline{\zeta}^\delta\}=\max\{\underline{\zeta},\overline{\zeta}\}\cdot\delta$ and so $\sup_{0\leq t\leq T}|S^\delta(t+\delta)-S^\delta(t)|\stackrel{a.s.}{\to}0$. (One can see the policy defined in terms of $\alpha$ above is not unique; nevertheless it does not affect local consistency.)

Finally, we claim that $g$ is uniformly integrable. We use an argument similar to the proof of Corollary~\ref{cor:limit}, and conclude that $\sup_{\delta}\E_{(t,x)}^{\delta,\alpha}[g(S^\delta(T))^2]\leq C_1S_0\exp\{T\uzeta\ozeta\}+C_2<\infty$ for some constants $C_1,C_2>0$, by Lipschitz continuity of $g$ and that $\alpha\in[0,\uzeta\ozeta]$. This will imply that $g$ is uniformly integrable.
\end{proof}

\subsection{Reduction to Black-Scholes model from controlled diffusion process}\label{sec:bsconverge}
This section elaborates on the above analysis 
and provides an alternative proof for the continuous-time convergence of our price upper bound to the Black-Scholes model when the payoff function is convex. We shall derive a heuristic partial differential equation (PDE) that characterizes the solution for the controlled diffusion in Theorem~\ref{thm:continuous}. Then, under additional convexity assumption, we will demonstrate that our PDE is rigorously defined and coincides with the PDE for the Black-Scholes model.

To begin, let us write down a Hamilton-Jacobi-Bellman (HJB) equation informally using \eqref{value continuous} and \eqref{eqn:scontinuous}. Define $G(t, x)=\max_{\xi}\E_{(t,x)}[g(S(T))]$, where $\E_{(t,x)}$ denotes the expectation conditional on $S(t)=x$ and using the optimal control (and hence $G(0,S(0))$ is as defined in \eqref{value continuous}). Assuming for the moment that $G\in\mathcal{C}^2$, we can heuristically write
\begin{align*}
G(t,x)&=\max_{\xi}\E_{(t,x)}[G(t+\delta,S(t+\delta))]\\
&=\max_{\xi}\E_{(t,x)}[G(t,x)+G_t(t,x)\delta+G_x(t,x)xR^\delta+\frac{1}{2}G_{xx}x^2(R^\delta)^2+\cdots]\text{\ \ (by Taylor's series)}\\
&=G(t,x)+G_t(t,x)\delta+\frac{1}{2}G_{xx}x^2\max_{\xi}\E_{(t,x)}[(R^\delta)^2]+\cdots\text{\ \ (since $\E_{(t,x)}[R^\delta]=0)$}\\
\end{align*}
where $R^\delta=S(t+\delta)/S(t)-1$, $G_x(t, x) = \frac{\partial }{\partial x}G(t,x)$,  $G_{xx}(t, x) = \frac{\partial^2}{\partial x^2}G(t,x)$ \etc

Since $U = [0, \uzeta\ozeta]$ and $\E_{(t,x)}[(R^\delta)^2]=\sigma(u)^2\delta$, one can attempt to establish that $G(t, x)$ is the solution of the following PDE:
\begin{equation}
G_t(t,x)+\frac{1}{2}\max_{u\in U}\sigma(u)^2x^2G_{xx}(t,x)=0 \label{HJB}
\end{equation}
with the boundary condition $G(T,x)=g(x)$. In general, the solution of this PDE may not exist in the classical sense, and there is no guarantee to coincide with the optimal solution to the control problem in (\ref{value continuous})~\cite{flemingsoner}. The following theorem, nevertheless, presents a verification of the PDE's solution as the control problem's optimum, under a priori smoothness condition on $G$:
\begin{theorem}
Suppose $G^*\in\mathcal{C}^2 $ is a solution for (\ref{HJB}) (hence implying that the payoff function $g$ must be in $\mathcal C^2$), and that there exists an optimal $u^*(t,x)$ as the optimal solution to the $\max$ in \eqref{HJB}. Then $G^*(0, x)$ is the optimal solution to the control problem in (\ref{value continuous}), and $u^*(t,x)$ is the optimal control in (\ref{eqn:scontinuous}).
 \label{verification}
\end{theorem}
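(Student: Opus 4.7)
The plan is to apply a standard verification argument based on It\^o's formula. Let $\xi=\{u(t)\}$ be an arbitrary admissible control, and let $S(t)$ satisfy $dS(t)=\sigma(u(t))S(t)\,dw(t)$ with $S(0)=x$. Since $G^*\in\mathcal{C}^2$, I apply It\^o's formula to $G^*(t,S(t))$ to obtain
\begin{equation*}
dG^*(t,S(t)) = \left(G^*_t(t,S(t))+\tfrac12\sigma(u(t))^2 S(t)^2 G^*_{xx}(t,S(t))\right)dt + G^*_x(t,S(t))\sigma(u(t))S(t)\,dw(t).
\end{equation*}
Integrating from $0$ to $T$ and taking expectation (under suitable integrability, so the stochastic integral is a martingale with zero expectation), I get
\begin{equation*}
\E_{(0,x)}[G^*(T,S(T))] - G^*(0,x) = \E_{(0,x)}\!\left[\int_0^T\!\left(G^*_t(t,S(t))+\tfrac12\sigma(u(t))^2 S(t)^2 G^*_{xx}(t,S(t))\right)dt\right].
\end{equation*}

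The next step exploits the HJB equation. Since $G^*$ solves (\ref{HJB}), we have $G^*_t(t,x)+\tfrac12\sigma(u)^2 x^2 G^*_{xx}(t,x)\leq 0$ for every $u\in U$ and every $(t,x)$, with equality when $u=u^*(t,x)$. Using the boundary condition $G^*(T,x)=g(x)$, the displayed identity gives
\begin{equation*}
\E_{(0,x)}[g(S(T))] \leq G^*(0,x),
\end{equation*}
and taking the supremum over $\xi$ yields $\max_{\xi}\E_{(0,x)}[g(S(T))]\leq G^*(0,x)$. To see the reverse inequality, plug in the feedback control $u(t)=u^*(t,S(t))$ (assuming the resulting SDE has a well-defined solution). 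Under this control the integrand vanishes identically, so
\begin{equation*}
\E_{(0,x)}[g(S^{u^*}(T))] = G^*(0,x),
\end{equation*}
which shows that $G^*(0,x)$ is attained by $u^*$ and hence equals the value $G(0,x)$ of the control problem (\ref{value continuous}).

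The main technical obstacle is the martingale step: one needs the stochastic integral $\int_0^T G^*_x(t,S(t))\sigma(u(t))S(t)\,dw(t)$ to be a true martingale (not merely a local martingale), which requires a uniform integrability or $L^2$ bound on $G^*_x(t,S(t))\sigma(u(t))S(t)$. In the theorem's statement we are only given that $G^*\in\mathcal{C}^2$, so I would invoke the standing assumption that $U$ is compact (hence $\sigma(u)$ is bounded) and impose the implicit growth condition on $G^*_x$ (or, equivalently, handle it by a standard stopping-time localization argument: replace $T$ by $T\wedge \tau_n$ where $\tau_n=\inf\{t:|S(t)|\geq n\}$, pass expectation through the localized SDE, and then send $n\to\infty$ using dominated convergence, justified by the Lipschitz continuity of $g$ and the moment bounds on $S(t)$ derived in the proof of Corollary~\ref{cor:limit}). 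Everything else is a routine verification argument.
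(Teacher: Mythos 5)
Your proposal is correct and follows essentially the same route as the paper's proof: apply It\^o's formula to $G^*(t,S(t))$, kill the stochastic integral term by the martingale property, and use the HJB equation to show the drift vanishes under $u^*$ and is nonpositive under any other admissible control. The only difference is that you flag the local-martingale/integrability issue and propose a localization argument, a point the paper's proof passes over silently; this is a reasonable extra precaution but not a change of method.
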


\begin{proof}
The proof follows by a standard application of Ito's lemma. For any $t<t'$ and $x$, we have
{\small
\begin{align*}
&\E_{(t,x)}^*[G^*(t',S(t'))]-G^*(t,x)\\
&=\E_{(t,x)}^*\left[\int_t^{t'}G^*_t(t,S(t))dt+\int_t^{t'}G^*_x(t,S(t))dS_t+\int_t^{t'}\frac{1}{2}G^*_{xx}(t,S(t))\sigma^2(u^*(t,S(t)))dt\right]\\
&=\E_{(t,x)}^*\left[\int_t^{t'}\left(G^*_t(t,S(t))+\frac{1}{2}G^*_{xx}(t,S(t))\sigma^2(u^*(t,S(t)))\right)dt\right]\text{\ \ (since $S(t)$ is a martingale)}\\
&=0
\end{align*}}
by \eqref{HJB}, where $\E_{(t,x)}^*$ denotes the expectation taken when the control is $u^*$. On the other hand, any control $u(t,x)$ must satisfy
\begin{equation}
G_t(t,x)+\frac{1}{2}\sigma(u(t,x))^2x^2G_{xx}(t,x)\leq0 \label{HJB1}
\end{equation}
by the definition of \eqref{HJB}. Hence the same argument leads to
\begin{align*}
& \E_{(t,x)}^u[G(t',S(t'))]-G(t,x)\\
&=\E_{(t,x)}^u\left[\int_t^{t'}G_t(t,S(t))dt+\int_t^{t'}G_x(t,S(t))dS(t)+\int_t^{t'}\frac{1}{2}G_{xx}(t,S(t))\sigma^2(u(t,S(t)))dt\right]\\
&\leq0
\end{align*}
where $\E_{(t,x)}^*$ denotes the expectation taken when the control is $u$. Hence $\E_{(t,x)}^*[G^*_t(t',S(t'))]=G^*_t(t,x)\geq \E_{(t,x)}^u[G(t',S(t'))]$ for any $u$. Take $t=0$ and $x=S_0$, we obtain the result.
\end{proof}

From Theorem \ref{verification}, we can obtain the Black-Scholes price when the payoff is convex. Observe that \eqref{HJB} can in fact be written as
\begin{equation}
G_t(t,x)+\frac{1}{2}\underline{\zeta}\overline{\zeta}x^2G_{xx}(t,x)I(G_{xx}(t,x)\geq0)=0 \label{HJB2}
\end{equation}
with boundary condition $G(T,x)=g(x)$. 
Suppose $G\in\mathcal{C}^2$ is convex in $x$ and so $G_{xx}(t,\cdot)\geq0$, then the equation \eqref{HJB2} reduces to
$$G_t(t,x)+\frac{1}{2}\underline{\zeta}\overline{\zeta}x^2G_{xx}(t,x)=0$$
which is the ordinary Black-Scholes PDE with zero risk-free rate, whose solution is known to be convex and lies in $\mathcal{C}^2$ when the payoff $g$ is convex and in $\mathcal C^2$~\cite{steele}. Hence by Theorem~\ref{verification} it is the solution to the control problem in \eqref{value continuous}, which is the limit of our hedging game model.

Lastly, suppose that $G_{xx}(t,\cdot)$ is concave, and so $G_{xx}(t,x)<0$. The equation \eqref{HJB2} then reduces to $G_t(t,x)=0$, which implies that the solution is $g(x)$, constant over $t\in[0,T]$. If the payoff $g$ is concave, we know again by Theorem~\ref{verification} that $g(x)$ is the optimal value of \eqref{value continuous}. 

\section{Integration of non-stochastic framework with price jumps}\label{sec:jump}
A natural extension to the Black-Scholes framework is to allow ``shocks'' in the movement of the stock price, typically modeled by the stochastic community as Poisson arrivals of jumps on top of a continuous geometric Brownian motion on the asset price. In this section we will adopt our adversary framework to incorporate price jumps.

There can be various ways to model when and how much the adversary can control the price to jump. Below we analyze two natural examples that can be extended from our framework in the previous sections. In the first example, we assume the adversary has no control over the occurrence and magnitude of jumps. We will show that the price upper bound is exactly the same as the price in the standard jump diffusion model, when the payoff function is convex. In the second example, the adversary can control when the jump happens, subject to a constraint on the total number of jumps.
The magnitude of the jumps can also be assumed to be controllable by the adversary, and we will see that only small modifications to the algorithm presented in Section~\ref{sec:non-convex} are needed.


\myparab{Example 1.  The random jump model.}
This model assumes a convex payoff function and that the nature performs a jump with a prefixed small probability $q$ at each step. If a jump occurs, the nature moves according to a return $Y$ that is random (whose distribution is given). Otherwise, with $1-q$ probability, the nature will have freedom to choose its path inside the uncertainty set $\mathcal{U}$. We also assume $q$ is sufficiently small (this will be specified precisely in the sequel). 

Consider first a single-round game, with initial price of the underlying asset $S_0$. The upper bound formulation is
\begin{equation}
\min_\Delta\max_R (1-q)[(g(S_0(1+R)))-\Delta R]+q\E_Y[g(S_0(1+Y))-\Delta Y], \label{jump1}
\end{equation}
where $R$ is the return that nature can choose, if a jump does not occur. The expectation $\E_Y$ is with respect to the jump magnitude variable $Y$.

The formulation \eqref{jump1} can be written as
\begin{equation}
\begin{array}{ll}
\text{minimize}&p\\
\text{subject to}&(1-q)[(g(x(1+r)))-\Delta r]+q\E_Y[g(x(1+Y))-\Delta Y]\leq p\text{\ \ for all $r\in\mathcal{U}$}
\end{array}
\end{equation}
or
\begin{equation}
\begin{array}{ll}
\text{minimize}&p\\
\text{subject to}&p+\Delta((1-q)r+q\E_Y[Y])\geq(1-q)g(x(1+r))+q\E_Y[g(x(1+Y))]\text{\ \ for all $r\in\mathcal{U}$}
\end{array} \label{primal jump1}
\end{equation}
The dual of \eqref{primal jump1} is
\begin{equation}
\begin{array}{ll}
\max&(1-q)\E[g(x(1+r))]+q\E_Y[g(x(1+Y))]\\
\text{subject to}&(1-q)\E[r]+q\E_Y[Y]=0\\
&P\in\mathcal{P}(\mathcal{U})
\end{array} \label{dual jump1}
\end{equation}
where $\mathcal{P}(\mathcal{U})$ denotes the set of all probability measures supported on $\mathcal{U}$.

We can use the same convexity argument as in Section~\ref{asec:convex} to argue that the optimal measure must be concentrated at the two extreme points of $\mathcal{U}$, namely $-\underline{\zeta}$ and $\overline{\zeta}$, when $q$ is small enough \ie $(1-q)(-\overline{\zeta})+q\E_Y[Y]>0$ and $(1-q)\underline{\zeta}+q\E_Y[Y]<0$. Let the weights of these two extreme points be $w_1$ and $w_2$. The constraints $(1-q)(w_1(-\underline{\zeta})+w_2\overline{\zeta})+q\E_Y[Y]=0$ and $w_1+w_2=1$ completely determine
\begin{equation}
w_1=\frac{\overline{\zeta}+\frac{q\E_Y[Y]}{1-q}}{\overline{\zeta}+\underline{\zeta}},\ w_2=\frac{\underline{\zeta}+\frac{-q\E_Y[Y]}{1-q}}{\overline{\zeta}+\underline{\zeta}} \label{optimal weights jump}
\end{equation}
Therefore, the upper bound is
\begin{equation}
(1-q)\left[\frac{\overline{\zeta}+\frac{q\E_Y[Y]}{1-q}}{\overline{\zeta}+\underline{\zeta}}g(x(1+\underline{\zeta}))+\frac{\frac{\underline{\zeta}-q\E_Y[Y]}{1-q}}{\overline{\zeta}-\underline{\zeta}}g(x(1+\overline{\zeta}))\right]+q\E_Y[g(x(1+Y))]. \label{jump optimal value}
\end{equation}
For a $\tau$-round game, the formulation follows analogously as in Section~\ref{sec:convexmultiround}, with the optimal value being the value function of a dynamic program, with $g_\tau(x)=g(x)$ and $g_t(x)$ equal to \eqref{jump optimal value} but with $g$ replaced by $g_{t+1}$. This characterization is exactly the same as a discrete jump diffusion process, studied in~\cite{amin1993jump}, which also demonstrated the continuous-time limit under appropriate scaling of the jump probability and the binomial return rates. This leads immediately to the following:
\begin{proposition}
Consider the random jump model with jump probability $q^\tau=q/\tau$ for a constant $q$, jump magnitude random variable $Y$ (not scaled with $\tau$), and uncertainty set $U^\tau=[-\uzeta/\sqrt{\tau},\ozeta/\sqrt{\tau}]$ at each step $t\in[\tau]$. Assume a convex payoff $g(\cdot)$. The upper bound of the option price converges to $\E[g(S(T))]$, where $S(t)$ follows a jump diffusion process given by
$$S(t)=S_0\exp\{\nu w(t)-\nu/2+\sum_{j=1}^{J(t)}\log Y(j)\}.$$
Here $\nu=\uzeta\ozeta$, $J(t)$ is a Poisson process with rate $p$, and $Y(j)$ are i.i.d. copies of $Y$.
\end{proposition}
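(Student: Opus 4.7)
The plan is to exploit the dual characterization established just above the proposition: at every step of the backward recursion, the upper bound can be written as an expectation under the risk-neutral mixture given by \eqref{optimal weights jump} (plus the jump component). Concatenating these one-step measures across the $\tau$ rounds defines a single probability measure $P_f^\tau$ on $\{S^\tau_t\}_{t\in[\tau]}$ under which the price upper bound equals $\E_{P_f^\tau}[g(S^\tau_\tau)]$. Under $P_f^\tau$ the increments $R^\tau_t = S^\tau_t/S^\tau_{t-1} - 1$ are independent across $t$, and each $R^\tau_t$ is a mixture: with probability $1-q/\tau$ it is binomial on $\{-\uzeta/\sqrt{\tau},\ozeta/\sqrt{\tau}\}$ with weights \eqref{optimal weights jump} (evaluated at $q=q/\tau$ and $\uzeta,\ozeta$ replaced by their $1/\sqrt{\tau}$-scaled versions), and with probability $q/\tau$ it equals an independent copy of $Y$. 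Zero mean is maintained by construction.

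First I would separate the diffusive and jump contributions by writing
\[
\log\left(S_0\prod_{t=1}^{\tau}(1+R^\tau_t)\right) = \log S_0 + \sum_{t=1}^{\tau}(1-B^\tau_t)\log(1+R^\tau_t) + \sum_{t=1}^{\tau}B^\tau_t\log(1+Y_t),
\]
where $B^\tau_t\in\{0,1\}$ is the jump indicator at step $t$ (Bernoulli with parameter $q/\tau$, independent of everything), and $Y_t$ are i.i.d. copies of $Y$ independent of the binomial increments. For the diffusive sum, the weights \eqref{optimal weights jump} differ from the pure-binomial weights only by an $O(1/\tau)$ perturbation, so the argument in Corollary~\ref{cor:limit} (Taylor expansion of $\log(1+R^\tau_t)$ plus Lindeberg--Feller CLT with quadratic-variation limit $\sum_t\uzeta\ozeta/\tau\to\uzeta\ozeta=:\nu$) applies almost verbatim and yields convergence in distribution to $\sqrt{\nu}\,N(0,1)-\nu/2$. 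For the jump sum, the number of jumps $\sum_t B^\tau_t$ is Binomial$(\tau,q/\tau)$ and hence converges to a Poisson$(q)$ variable; combined with the i.i.d.\ $Y_t$ this gives $\sum_{j=1}^{J(T)}\log Y(j)$ in distribution, where $J(\cdot)$ is a rate-$q$ Poisson process independent of the Gaussian limit (independence follows because the $B^\tau_t$ are independent of the binomial returns).

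Combining both pieces via Slutsky's theorem, $\log(S_0\prod_{t=1}^\tau(1+R^\tau_t))$ converges in distribution to $\log S_0+\sqrt{\nu}N(0,1)-\nu/2+\sum_{j=1}^{J(T)}\log Y(j)$, which is exactly $\log S(T)$ for the claimed jump-diffusion. The continuous mapping theorem then transfers the convergence through $g$. Finally, I would establish uniform integrability of $\{g(S^\tau_\tau)\}_\tau$ under $P_f^\tau$ by bounding $\E_{P_f^\tau}[g(S^\tau_\tau)^2]$: Lipschitz continuity of $g$ reduces this to controlling $\E_{P_f^\tau}(S^\tau_\tau)^2 = S_0^2\prod_t\E_{P_f^\tau}(1+R^\tau_t)^2$, which equals $S_0^2\prod_t(1+(1-q/\tau)\uzeta\ozeta/\tau+(q/\tau)\E Y^2)\leq S_0^2\exp\{\uzeta\ozeta+q\E Y^2\}$ (finite provided $\E Y^2<\infty$, which we assume implicitly). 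This closes the gap from distributional to $L_1$ convergence and concludes $\E_{P_f^\tau}[g(S^\tau_\tau)]\to\E[g(S(T))]$.

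The main obstacle I foresee is the joint handling of the diffusive and jump limits under the one-step mixture measure \eqref{optimal weights jump}, specifically verifying that the $O(q\E[Y]/\tau)$ shift in the binomial weights does not disturb the CLT and that the induced dependence between $B^\tau_t$ and the binomial return within a single step washes out in the limit. The rest of the argument is bookkeeping based on tools already used in Corollary~\ref{cor:limit} and standard Binomial-to-Poisson convergence.
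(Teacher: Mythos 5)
Your route is genuinely different from the paper's. The paper does not carry out a limit argument at all: having derived the one-step weights \eqref{optimal weights jump}, it observes that the resulting backward recursion is \emph{identical} to the discrete jump-diffusion lattice of Amin~\cite{amin1993jump} and simply invokes the convergence theorem proved there. You instead build the product measure $P_f^\tau$ explicitly and prove the limit from scratch by splitting the log-price into a diffusive block (handled as in Corollary~\ref{cor:limit}) and a jump block (Binomial-to-Poisson), which is more self-contained and makes visible exactly which moment conditions on $Y$ are needed ($\E Y^2<\infty$, and implicitly $Y>-1$ so that $\log(1+Y)$ is defined).

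However, the step you yourself flag as the main obstacle is where the argument breaks, and it cannot be waved away. The tilt in the binomial weights is of order $\frac{q\E[Y]/\tau}{(\uzeta+\ozeta)/\sqrt{\tau}}=O(1/\sqrt{\tau})$, not $O(1/\tau)$, and more to the point the risk-neutral constraint $(1-q^\tau)\E[R^{\mathrm{bin}}]+q^\tau\E[Y]=0$ forces the conditional mean of the no-jump return to be $\E[R^{\mathrm{bin}}]=-\tfrac{q\E[Y]}{\tau}(1+o(1))$. Summed over $\tau$ rounds this accumulates to the \emph{non-vanishing} deterministic drift $-q\E[Y]$, so your diffusive block converges to $\sqrt{\nu}N(0,1)-\nu/2-q\E[Y]$ rather than $\sqrt{\nu}N(0,1)-\nu/2$ (the second-moment computation is unaffected, so the variance $\nu$ and the Lindeberg condition are fine). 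This extra term is exactly the jump compensator: each discrete game value is a martingale with mean $S_0$, and your own uniform-integrability bound shows the limit must also satisfy $\E[S(T)]=S_0$, which fails for $\exp\{\sqrt{\nu}w(T)-\nu/2+\sum_j\log(1+Y(j))\}$ unless $\E[Y]=0$. So you should carry the $-q\E[Y]$ term through Slutsky rather than discard it; doing so recovers the standard risk-neutral jump-diffusion dynamics of Amin and Merton (and incidentally shows that the proposition's displayed $S(t)$ has to be read with this compensator, alongside its other notational slips such as $\sqrt{\nu}$ versus $\nu$ and rate $q$ versus $p$). With that correction, and the explicit hypotheses $\E Y^2<\infty$ and $Y>-1$, your decomposition goes through.
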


\myparab{Example 2. The adversarial jump model.} In this model, we allow the adversary to make no more than $\ell$ jumps throughout the $\tau$ rounds of game. When the adversary decides to make a jump, it can choose a return from the uncertainty set $\mathcal W$; Otherwise, it can only choose from the ordinary uncertainty set $\mathcal U$. We assume the sizes of $\mathcal U$ and $\mathcal W$ are both polynomial in $\tau$. Typically $\mathcal W\supset\mathcal U$, but it is not required in the analysis.\footnote{For simplicity, we assume the uncertainty sets when jumps are present and absent are both uniform. This condition can easily be relaxed.}


We now explain how our analysis in Section~\ref{sec:btree} and our algorithm in Section~\ref{sec:non-convex} can be extended to this scenario. Define $g_t(x, \ell)$ as the price upper bound at the $t$-th round when the underlying asset's price is $x$ and the nature still has $\ell$ number of jump quota. At the $t$-th round, if $\ell\geq1$, the adversary may choose to use a jump, in which case the relevant price upper bound at the next round will be $g_{t+1}(x(1+R),\ell-1)$, where $R\in\mathcal W$; suppose the adversary chooses not to jump, then the relevant price upper bound becomes $g_{t+1}(x(1+R),\ell)$, where $R\in\mathcal U$. The LP formulation for each round is thus
\begin{equation}\label{eqn:jumpexact}
\begin{array}{ll}
\mbox{minimize}&p\\
\text{subject to}&g_{t + 1}(x(1+r), \ell )-r\Delta\leq p  \text{\ \ for $r\in\mathcal U$}\\
&g_{t + 1}(x(1+r), \ell - 1)-r\Delta\leq p  \text{\ \ for $r\in\mathcal W$, if $\ell \geq 1$.}
\end{array}
\end{equation}

%
%
By an argument similar to Proposition \ref{prop:dual}, we can introduce a risk-neutral measure that characterizes the optimal dual solution for \eqref{eqn:jumpexact}. The additional feature here is that the risk-neutral measure comprises of a mixture between the ordinary uncertainty set and the enlarged uncertainty set, depending on whether a jump is initiated. The dual formulation can be written as
$$\max_{\substack{q\in[0,1],P_f\in\mathcal P(\mathcal U),P_J\in\mathcal P(\mathcal W):\\q\E_f[R_{t+1}]+(1-q)\E_J[R_{t+1}]=0}}q\E_{P_f}[g_{t + 1}(x(1+R_{t+1}), \ell )]+(1-q)\E_{P_J}[g_{t + 1}(x(1+R_{t+1}), \ell - 1)]$$
Here $q$ is the mixture probability of the occurrence of jump, and $P_f$ and $P_J$ are the conditional distributions supported on $\mathcal U$ and $\mathcal W$ respectively, depending on whether a jump occurs. When $\ell=0$, then the dual formulation reduces to the case in Section~\ref{sec:non-convex}, and we merely have
$$\max_{P_f\in\mathcal P(\mathcal U)\E_f[R_{t+1}]}\E_{P_f}[g_{t + 1}(x(1+R_{t+1}), 0 )]$$
The dynamic program has the terminal value $g_\tau(x,\ell)=g(x)$ for all $\ell$ and $g_t(x, 0) = g_t(x)$ for all $t$, where $g_t(x)$ is defined in Section~\ref{sec:non-convex}.

%

From this characterization, we can use a multinomial approximation scheme similar to that in Section~\ref{sec:non-convex} to compute the price upper bound for general payoff functions. In this algorithm we discretize the uncertainty sets $\mathcal U$ and $\mathcal W$ into $\hmu$ and $\hmw$ so that the step length in each discrete set is $\epsilon$. The new feature is that in each backward induction step, we compute $\hat g_t(x,m)$ for $m\in\{0,1,\ldots,\ell\}$, where $x$ can take on polynomial number of distinct values (we can discretize $\mathcal U$ and $\mathcal W$ in a coherent way to achieve this).
Here, we pay a factor of $\ell$ in the running time and the space comlexity compared to the algorithm in Section~\ref{sec:non-convex} because of the exhaustive enumeration regarding $\ell$. We have the following analogous performance bound:
\begin{corollary}Let $\delta$ be an arbitrarily small constant. By setting $\epsilon=cL^2\tau^2/\delta^2$ for some constant $c$, our algorithm gives $\hat g_0(S_0)$ such that $g_0(S_0)-\delta S_0\leq\hat g_0(S_0)\leq g_0(S_0)$.
\end{corollary}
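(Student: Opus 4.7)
The plan is to mirror the analysis of Theorem~\ref{thm:main} in Section~\ref{sec:non-convex}, carrying through the extra state variable $m\in\{0,1,\ldots,\ell\}$ that tracks the remaining jump quota and replacing the single risk-neutral measure on $\mathcal{U}$ by the mixture measure on $\mathcal{U}\cup\mathcal{W}$ that appears in the dual formulation of \eqref{eqn:jumpexact}. Write $d_t(x,m)=g_t(x,m)-\hat g_t(x,m)$, and introduce the hybrid value $g_t^m(x,m)$ as the optimal value of the LP \eqref{eqn:jumpexact} with $g_{t+1}(\cdot,\cdot)$ replaced by the exact (non-discretized) value but the feasible returns $r$ restricted to the discretized sets $\hmu,\hmw$. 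Then telescope
\begin{equation*}
d_t(x,m)=\bigl(g_t(x,m)-g_t^m(x,m)\bigr)+\bigl(g_t^m(x,m)-\hat g_t(x,m)\bigr)
\end{equation*}
as in the European non-convex case.

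\textbf{Step 1 (Lipschitz preservation).} By backward induction on $t$, show that for each fixed $m$ the map $x\mapsto g_t(x,m)$ is $L$-Lipschitz. The inductive step uses the mixture dual characterization, with the constraint $q\,\E_{P_f}[R]+(1-q)\,\E_{P_J}[R]=0$ playing exactly the role of the single risk-neutral constraint in the proof of Lemma~\ref{lem:lipschitz}: perturbing $x$ to $x+\delta$ shifts both conditional expectations by at most $L\delta\,\E[1+R]=L\delta$, regardless of the mixture weight $q$ and the choice of $m'\in\{m,m-1\}$ governing which future value function is being integrated.

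\textbf{Step 2 (Local error).} Adapt Lemma~\ref{lem:gtgm} to bound $g_t(x,m)-g_t^m(x,m)$. The key structural fact, analogous to Lemma~\ref{cor:twoworld}, is that the LP \eqref{eqn:jumpexact} has an optimal solution with at most two binding constraints of opposite sign (one non-jump return in $\mathcal{U}$ and one in $\mathcal{W}$, possibly with a trivial case), because the dual LP must admit a feasible risk-neutral mixture. Discretizing each support to spacing $\epsilon$ then incurs the same error as in the no-jump case, yielding
\begin{equation*}
0\leq g_t(x,m)-g_t^m(x,m)\leq \sqrt{2L\,g_t(x,m)\,x\,\epsilon}+Lx\epsilon,
\end{equation*}
uniformly in $m$, via the same $\eta$-splitting argument on the width of the two binding returns.

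\textbf{Step 3 (Error propagation via artificial martingale).} Let $P^m_{x,t}(\cdot\,;\,m)$ be the optimal mixture dual measure for $g_t^m(x,m)$, inducing a random return $R_{t+1}$ and a random jump indicator $J_{t+1}\in\{0,1\}$ such that $\E^m[R_{t+1}\mid\mathcal F_t]=0$. Define a probability measure on $\{(S_t,M_t)\}_{t\in[\tau]}$ by $S_{t+1}=S_t(1+R_{t+1})$ and $M_{t+1}=M_t-J_{t+1}$, with $M_0=\ell$. Since the dual optimizer for $g_t^m(x,m)$ is feasible for $\hat g_t(x,m)$, the same inequality as in the proof of Theorem~\ref{thm:main} gives
\begin{equation*}
g_t^m(x,m)-\hat g_t(x,m)\leq \E^m_{x,t,m}\bigl[d_{t+1}(x(1+R_{t+1}),M_{t+1})\bigr],
\end{equation*}
so that combining with Step 2 and unrolling recursively yields
\begin{equation*}
d_0(S_0,\ell)\leq \E^m_{S_0,\ell}\!\left[\sum_{t=1}^{\tau}\Bigl\{\sqrt{2L\,g_t(S_t,M_t)\,S_t\,\epsilon}+LS_t\epsilon\Bigr\}\right].
\end{equation*}

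\textbf{Step 4 (Conclude).} Under $P^m$ the process $\{S_t\}$ is a martingale by construction, so $\E^m[S_t]=S_0$ for every $t$. Combining with $g_t(S_t,M_t)\leq LS_t$ from Step 1 and the Cauchy--Schwarz / Jensen step used in Theorem~\ref{thm:main} gives $d_0(S_0,\ell)\leq C_1\sqrt{\epsilon}\,L\tau S_0$. Choosing $\epsilon=c\delta^2/(L^2\tau^2)$ makes the right-hand side at most $\delta S_0$, and $\hat g_0(S_0)\leq g_0(S_0)$ holds trivially since the discretized feasible region for the adversary is a subset of the original one at every node.

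\textbf{Main obstacle.} The delicate point is Step 2: in the jump setting the dual LP optimizes over a mixture $(q,P_f,P_J)$ rather than a single distribution, and one must verify the two-point extremal structure that underlies Lemma~\ref{cor:twoworld} to bound the discretization error by a square-root in $\epsilon$ rather than something weaker. A secondary subtlety is that the quota variable $m$ is updated stochastically under $P^m$, so the inductive bounds must be shown to hold uniformly in $m$; Step 1's uniform Lipschitz constant is what makes this uniformity go through cleanly.
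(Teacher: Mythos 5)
Your proposal follows essentially the same route as the paper's own (very terse) proof: extend the Lipschitz lemma to $g_t(x,\ell)$, reuse the binding-constraint argument of Lemma~\ref{lem:gtgm} for the local error, and then run the telescoping/artificial-martingale machinery of Theorem~\ref{thm:main} with the quota as an extra state variable. The only cosmetic difference is in Step 2, where the paper's local bound carries $\max\{g_t(x,\ell),g_t(x,\ell-1)\}$ inside the square root (since the two binding constraints may reference different quota levels), but this is immaterial because both quantities are dominated by $Lx$ in the final step, exactly as you use; note also that the corollary's displayed choice $\epsilon=cL^2\tau^2/\delta^2$ is a typo for $\epsilon=c\delta^2/(L^2\tau^2)$, which you correctly use.
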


\begin{proof}
The proof can be adapted easily from that of Theorem~\ref{thm:main}, hence we shall highlight the main steps here. First, we extend, in a straightforward manner, the argument in Lemma~\ref{lem:lipschitz} to prove that $g_t(x,\ell)$ is Lipschitz continuous for any $t\in[\tau]$ and $\ell$. Next, the same binding constraint argument as in Lemma~\ref{lem:gtgm} will reveal that $0 \leq g_t(x,\ell) - g^m_t(x,\ell) \leq \sqrt{2L\max\{g_t(x,\ell),g_t(x,\ell-1)\} S\epsilon } + L x\epsilon$ for any $t$ and $\ell$, where for convenience we use the convention that $g_t(x,-1)=g_t(S,0)$. Then, since $g_t(x,\ell)\leq Lx$ for any $\ell$ by an argument similar to Lemma~\ref{lem:lipschitz}, we can use the ``artificial" probabilistic machinery to arrive at the conclusion.
\end{proof}


\section{Hardness results}\label{a:hard}
\subsection{Computational lower bound: $\#P$-hard result for non-uniform uncertainty sets}
This section proves that the problem of exactly computing the price upper bound is $\#P$-hard for non-uniform uncertainty sets, even when the payoff function is convex.

We reduce our problem from the counting subset sum problem. Let $A = \{a_1, ..., a_{\tau}\}$ be a set of positive integers. The counting problem here is to count the number of subsets $T \subseteq A$ such that the sum of integers in $T$ equals to $b$. This problem is known to be $\#P$-hard~\cite{CHW11}.

We use a Cook-Reduction (see \eg Chapter 17 in~\cite{arora2009computational}) and our reduction proceeds as follows. Given a counting subset sum instance $\{A, b\}$, we construct the price upper bound for a $\tau$-round European option hedging game such that the uncertainty set $\mathcal U_i$ at the $i$-th round is $[-\uzeta_i, \ozeta_i]$, where $\uzeta_i = \ozeta_i= \frac{e^{a_i} - 1}{e^{a_i}+1}$. The payoff function is the ordinary call option with strike $K$, \ie $g(x) = (x - K)^+$.
Since the payoff is convex, the optimal risk-neutral probability measure only has uniform probability mass on $\{-\uzeta_i, \ozeta_i\}$ at each round, \ie the price moves up by a factor $(1+\ozeta_i)$ with probability $\frac 1 2$ and moves down by a factor $(1-\uzeta_i)$ with probability $\frac 1 2$ for each round.

Let $P^*$ be this optimal probability measure. We next build a natural coupling between sampling a subset $T \subseteq A$ and moving a price trajectory in the binomial tree: an element $a_i \in T$ if and only if the price trajectory moves up at the $i$-th round. Under this probability measure, the total number of subsets $T$ in which the elements sum up to $b$ is
\begin{equation*}
2^{\tau}\Pr_{T \gets P^*}[\sum_{a_i \in T}a_i = b].
\end{equation*}
We next express the probability $\Pr_{T \gets P^*}[\sum_{a_i \in T}a_i = b]$ in terms of option prices. When $T$ is sampled from $P^*$, the coupled trajectory's final price is
\begin{equation*}
 S_0\prod_{a_i \in T}(1+\ozeta_i) \prod_{a_i \notin T}(1-\uzeta_i)
 =  S_0 \prod_{a_i \in A}\frac{2}{e^{a_i} + 1}\prod_{a_i \in T}e^{a_i}
 = S_0 \prod_{a_i \in A}\frac{2}{e^{a_i} + 1}\exp\left(\sum_{a_i \in T}a_i\right).
\end{equation*}

Let $\mathcal C = \prod_{a_i \in A}\frac{2}{e^{a_i} + 1}$. We can see from the above equation that
\begin{equation*}
\Pr_{T\gets P^*}[\sum_{a_i \in T}a_i = b] = \Pr_{P^*}[S_{\tau} = S_0\mathcal C\exp(b)].
\end{equation*}

Let us consider options with three different strike prices $K_1 = S_0\mathcal C \exp(b - 1)$, $K_2 = S_0\mathcal C\exp(b)$, and $K_3 = S_0\mathcal C\exp(b+1)$. Let their corresponding prices be $V_1$, $V_2$, and $V_3$. We can compute $V_1$ as follows:
\begin{equation*}
V_1 = \int (S_{\tau} - K)^+ dP^* = \sum_{j \geq b}\Pr[\sum_{a_i \in T}a_i = j](S_0\mathcal C \exp(j) - K_1).
\end{equation*}
Similarly, we have
\begin{equation*}
V_2 =  \sum_{j \geq b + 1}\Pr[\sum_{a_i \in T}a_i = j](S_0\mathcal C \exp(j) - K_2),
\end{equation*}
and
\begin{equation*}
V_3 =  \sum_{j \geq b + 2}\Pr[\sum_{a_i \in T}a_i = j](S_0\mathcal C \exp(j) - K_3).
\end{equation*}
From the above equalities, we have
\begin{equation*}
\begin{array}{ll}
V_1 - V_2 & = \Pr[\sum_{a_i \in T}a_i \geq b]S_0\mathcal C(\exp(b) - \exp(b-1)) \\
V_2 - V_3 & = \Pr[\sum_{a_i \in T}a_i \geq b + 1]S_0\mathcal C(\exp(b+1) - \exp(b)). \\
\end{array}
\end{equation*}
Therefore, we have
$$\Pr[\sum_{a_i \in T}a_i = b] = \frac{V_1 - V_2}{S_0 \mathcal C(\exp(b) - \exp(b-1))} -  \frac{V_2 - V_3}{S_0 \mathcal C(\exp(b+1) - \exp(b))},$$
which completes our reduction. 
\subsection{Information theoretic lower bound: additive dependencies on $S_0$}\label{sec:adddep}
We now present a lower bound under the oracle model to justify the necessity of our algorithm's additive dependency
on the stock's initial price $S_0$ and the Lipschitz parameter $L$, even for a single-round model. Specifically, we shall show that in a single-round hedging game, if the number of queries to the oracle is $b$, then there exists two payoff functions $g(x)$ and $h(x)$ such that:
\begin{enumerate}
\item Both functions are monotonic, $L$-Lipschitz, and have the same values at the queried points.
\item There exists an $S_0$ such that the difference between the price upper bounds at $S_0$ for the functions is $\Theta(LS_0/b)$.
\end{enumerate}
In other words, so long as the number of queries is only polynomial in $\tau$, there will be an additive error that is linear in $S_0$ and $L$.

We now explain our construction. Let $u$ be the size of the uncertainty set. Let the query points be $q_1 < q_2 < ... < q_i < 0 < ... < q_b$, where $q_b - q_1 \leq u$. For expositional purpose, let us assume there are at least two queried points on the negative axis. Also, let us assume $\mathcal U$ is symmetric, \ie $\mathcal U = [-\frac u 2 ,\frac u 2]$. Both assumptions can easily be relaxed.

By using an averaging argument, we see that there exists $i$ and $j$ such that $(q_i < 0 \mbox{ and } q_i - q_{i - 1} \geq \frac u {2b})$ and $(q_j > 0 \mbox{ and } q_{j + 1} - q_j \geq \frac u {2b})$. Now we define $g(x)$ and $h(x)$ as follows:

\mypara{Definition of $g(x)$:} Let $C$ be a sufficiently large number, \eg $C = (1+100 u) S_0$.
\begin{equation}
g(x) = \left\{\begin{array}{ll}
\frac{Lx}2 & x \leq C \\
Lx & x > C.
\end{array}
\right.
\end{equation}

\mypara{Definition of $h(x)$:} Let $I_1 = [(1+q_{i - 1})S_0, (1+q_i)S_0]$ and $I_2 = [(1+q_j)S_0, (1+q_{j + 1})S_0]$.
\begin{equation}
h(x) =
\left\{
\begin{array}{ll}
g(x) & x \notin I_1, I_2.\\
Lx & x \in [(1+q_{i - 1})S_0,(1+\frac{q_{i - 1}+q_i}2)S_0] \\
L(1+q_i)S_0/2& x \in [(1+\frac{q_{i - 1}+q_i}2)S_0, (1+q_i)S_0]\\
Lx  & x \in [(1+q_{j})S_0,(1+\frac{q_{j}+q_{j+1}}2)S_0]\\
L(1+q_{j+1})S_0/2 & x \in [(1+\frac{q_{j}+q_{j+1}}2)S_0, (1+q_{j + 1})S_0].
\end{array}
\right.
\end{equation}
We can see that the price upper bound for $g(S_0)$ in this single-round model is $LS_0/2$. For $h(x)$, we can see that in the dual characterization of its optimal solution, the corresponding risk-neutral measure has probability masses only at $r_1 \triangleq \frac{q_{i} + q_{i-1}}{2}$ and $r_2 \triangleq \frac{q_{j+1} + q_{j}}{2}$. Thus, the price upper bound for $h(S_0)$ is at least $S_0(\frac L 2 + \frac{uL}{2b})$, which completes our argument.

\end{document}